\newtheorem{theorem}{Theorem}
\newtheorem{lemma}[theorem]{Lemma}
\newtheorem{proposition}[theorem]{Proposition}
\newtheorem{corollary}[theorem]{Corollary}
\newtheorem{remark}[theorem]{Remark}
\newcommand{\pref}[1]{Proposition~\ref{#1}}
\newcommand{\dd}{\ensuremath{\mathrm{d}}}
\newcommand{\1}{{\rm 1\hspace{-0.9mm}l}}
\newcommand{\matrizesp}{\ensuremath{U_c}}
\newcommand{\sub}{\substack}
\DeclareMathOperator{\adiag}{adiag}
\DeclareMathOperator{\diag}{diag}
\newtheorem{conj}{Conjecture}[section]
\newcommand{%
    
    \import{./fig/}{.pdf_tex}
}[2]{%
    
    \import{./fig/}{#1.pdf_tex}
}
\newcommand{\unicha}{\ensuremath{\mathcal{U}^{\mathrm{Q}}_3}{}}
\newcommand{\lindblad}{\ensuremath{\mathcal{A}_3}{}}
\newcommand{\ket}[1]{|#1\rangle}
\newcommand{\bra}[1]{\langle #1|}
\newcommand{\Ket}[1]{|#1\rangle\rangle}
\newcommand{\Bra}[1]{\langle\langle #1|}
\newcommand{\project}[1]{\ket{#1}\bra{#1}}
\newcommand{\outprod}[2]{\ket{#1}\bra{#2}}
\newcommand{\Id}{{\mathbb I}}
\newcommand{\e}{{\mathrm e}}
\renewcommand{\c}[1]{\mathcal{#1}}
\newcommand{\tr}[1]{\mbox{Tr} #1}
\newcommand{\pr}{\Pi}
\renewcommand{\overrightarrow}{\vec}
\begin{document}
\title{Log-Convex set of Lindblad semigroups acting on $N$-level system}
\author{Fereshte Shahbeigi}
\affiliation{Department of Physics, Ferdowsi University of
Mashhad, Mashhad, Iran}
\affiliation{Department of Physics, Sharif University of Technology, Tehran, Iran}
\author{David Amaro-Alcal\'a}
\affiliation{Instituto de F\'isica, Universidad Nacional
  Aut\'onoma de M\'exico, M\'exico D.F. 01000, M\'exico}
\author{Zbigniew Pucha{\l}a}
\affiliation{Institute of Theoretical and Applied Informatics, Polish Academy
of Sciences, ulica Ba{\l}tycka 5, 44-100 Gliwice, Poland}
\affiliation{Faculty of Physics, Astronomy and Applied Computer Science,
Jagiellonian University, ul. {\L}ojasiewicza 11,  30-348 Krak{\'o}w, Poland}
\author{Karol {\.Z}yczkowski}
\affiliation{Faculty of Physics, Astronomy and Applied Computer Science,
Jagiellonian University, ul. {\L}ojasiewicza 11,  30-348 Krak{\'o}w, Poland}
   \affiliation{Center for Theoretical Physics, Polish Academy of Sciences, Al. Lotnik{\'o}w
32/46, 02-668 Warszawa, Poland}
 \affiliation{National Quantum Information Centre (KCIK), University of Gda{\'n}sk,
  81-824 Sopot, Poland}

\date{March 26, 2020}

\begin{abstract}
We analyze the set ${\cal A}_N^Q$ of mixed unitary channels represented in the
Weyl basis and accessible by a Lindblad semigroup acting on an $N$-level quantum
system. General necessary and sufficient conditions for a mixed Weyl quantum
channel of an arbitrary dimension  to be accessible by a semigroup are
established. The set ${\cal A}_N^Q$  is shown to be log--convex and star-shaped
with respect to the completely depolarizing  channel.
A decoherence supermap acting in the space of Lindblad operators
transforms them into the space of Kolmogorov generators of
classical semigroups. We show that for mixed Weyl channels
the hyper-decoherence commutes with the dynamics,
so that decohering a quantum accessible channel
we obtain a bistochastic matrix form the set ${\cal A}_N^C$
of classical maps accessible by a semigroup.
Focusing on $3$-level systems we investigate the geometry
of the sets of quantum accessible maps, its classical counterpart
and the support of their spectra.
We  demonstrate that the set  ${\cal A}_3^Q$
is not included in the set ${\cal U}^Q_3$
of quantum unistochastic channels,
although an analogous relation holds for $N=2$.
The set of transition matrices obtained
by hyper-decoherence of unistochastic channels of order $N\ge 3$
is shown to be larger than the set of unistochastic matrices of this order,
and yields a motivation to introduce the larger sets of $k$-unistochastic
matrices.
\end{abstract}
\keywords{
  Quantum operations,
  Lindblad dynamics,
  Quantum semigroups,
  Pauli channels,
  Unistochastic matrices
}

\maketitle

\section{Introduction}
Open quantum systems have attracted a lot of interest from
different perspectives \cite{BP02,RH11}. Two major
methods to describe the dynamics of a quantum system
interacting with an environment are frequently used.
The first one takes place in continuous time,
as one deals with differential equations of motion
governing the time evolution of the system.
The most general form of such a non-unitary quantum dynamics
was introduced by Gorini, Kossakowski, Sudarshan, and independently by
Lindblad \cite{GKS76,Li76}. For a historical account on these
fundamental results, see the recent review \cite{CP17}.

In the second, stroboscopic approach, time changes in a discrete way
and the evolution of a quantum system is described by
quantum operations: linear  completely positive maps
which preserve the trace of the density matrix.
Let
$\Phi:\mathrm{M}_N\rightarrow\mathrm{M}_N$ be a
quantum operation acting on an $N$-dimensional system
described in the complex Hilbert space $\mathcal{H}_N$.
The action of this map on an
$N$-level density matrix $\rho$ in terms of Kraus operators, $K_j$,
reads,
 $\Phi(\rho)=\sum K_j\rho K_j^\dagger$.  In an
alternative approach, one can represent  a channel by the corresponding
superoperator $\Phi$ acting on an extended Hilbert
space, $\mathcal{H}_N\otimes\mathcal{H}_N$.
The superoperator, denoted by the same symbol $\Phi$ as the map it represents,
 can be determined  by its entries in a product basis,
  $\Phi_{\stackrel{\scriptstyle m \mu}{n\nu}} = \bra{m \mu}
\Phi \ket{n  \nu}=\sum_j
(K_j)_{mn}(\overline{K}_j)_{\mu\nu}$.
It is known that a given quantum channel may have many corresponding
sets of Kraus operators, while the superoperator
associated with a channel is unique.

  Although the superoperator  $\Phi$ is not necessarily Hermitian,
  \emph{reshuffling} its entries results in a positive
  semidefinite dynamical matrix \cite{SMR61}, also known as the Choi matrix  \cite{Cho75a}
  of the channel,
   $D_{\Phi} = \Phi^{R}$, where
$X_{\stackrel{\scriptstyle m \mu}{n\nu}}^R
=X_{\stackrel{\scriptstyle m n}{\mu \nu}}$ -- see  \cite{unistochasticos3}.
It is customary to refer to the $N^2$ eigenvalues of $\Phi$ as
the channel eigenvalues.
Due to Hermiticity and trace preserving conditions,
the eigenvalues of a channel
are either real or appear in complex conjugate pairs.
There exists at least one leading eigenvalue equal to unity,
which  corresponds to the invariant state \cite{BZ17}.
 It is interesting to study relations between both approaches \cite{CW13}
 and investigate which discrete channel $\Phi$ can be accessed by a
 continuous dynamics in the Lindblad form.
 This general question, sometimes called \emph{embedding} problem
 or \emph{Markovianity} problem, is known to be hard for a large system size  \cite{CEW12}.
 A necessary condition is the \emph{divisibility} property \cite{WC08,WECC08,FPMZ17},
 so that the map $\Phi$ can be represented as a composition of two
 other quantum maps, $\Phi=\Psi_2\Psi_1$,
 which are completely positive and trace preserving
 such that one of them is not unitary.
 There exist non-divisible quantum maps,
 which correspond to non-Markovian dynamics \cite{RHP14,CW15}.

 In the simplest case of  $N=2$
 and one-qubit unital maps,
 which are unitary equivalent to Pauli channels,
  $\Phi_p=\sum p_i \sigma_i \otimes \overline{\sigma}_i$,
 necessary and  sufficient conditions for
 accessibility by a dynamical semigroup were
 established \cite{WECC08,DZP18,PRZ19}.
 If all eigenvalues $\lambda_i$ of $\Phi_p$ are real and positive,
 the map $\Phi$
 is shown to be accessible by a semigroup if and only if
 the relation for subleading eigenvalues,  $\lambda_i\geq\lambda_j\lambda_k$,
 holds for
 any choice of three different indices $i$, $j$ and $k$ from the set
  $\{1,2,3\}$, as the leading eigenvalue reads
 $\lambda_{0}=1$.
 These relations are equivalent to the following inequalities
 for the weights in the convex combination of Pauli matrices,
   $p_0p_i \geq p_jp_k$,
 where the weight $p_0$ standing by $\sigma_0=\Id_2$
 is assumed to be the largest.
 Furthermore, it was shown that the set ${\mathcal A}^Q_2$
 of quantum channels accessible by a semigroup,
 forms a fourth part of the set ${\cal U}_2^Q$
 of quantum unistochastic channels \cite{PRZ19},
 determined by unitary matrices of size  $N^2=4$
 which couple the principal system with the environment
 initially in the maximally mixed state \cite{unistochasticos3,unistochastic}.
Further geometric properties of the set of one qubit Pauli maps
were recently investigated in \cite{Si19}.

The aim of the present work is to generalize earlier results
on the structure of the set  ${\cal A}_2^Q$
of one-qubit accessible channels \cite{PRZ19,DZP18}
for maps acting on  $N$-level systems.
In this work we concentrate on a class of
mixed unitary channels (also called random external fields \cite{AL07}),
defined as convex combinations of rotations
by Weyl matrices, which can be considered as
unitary generalizations of the Pauli matrices.
The set ${\cal W}_N$ of Weyl mixed unitary channels of size $N$
was recently investigated  to study the
degree  of non-Markovianity  \cite{CW15}.
Here we find necessary and sufficient conditions for
a quantum Weyl channel acting on an $N$ dimensional system
to be  accessible by a Lindblad  semigroup.
Furthermore, we  investigate  the set
 ${\cal A}_N^Q$ of Weyl channels accessible by a semigroup
 and demonstrate that it is log-convex and star-shaped.
This property is also inherited by the
corresponding set  ${\cal A}_N^C$
of circulant bistochastic matrices accessible by classical
semigroups.
As the relative volumes of different classes of quantum maps
were investigated \cite{SWZ08,JSP19,Si19},
we analyze the ratio of the volume of the set ${\cal A}_N^Q$
of accessible maps to the total volume of the set
 ${\cal W}_N$ of Weyl channels.

 This paper is organized as follows.
 In Section~\ref{sec:one} we fix the notation,
  introduce the basis of unitary Weyl  matrices
 and define channels accessible by a  Lindblad semigroup.
 Furthermore, we recall the definitions
 of stochastic, bistochastic and unistochastic quantum channels,
 and discuss the corresponding classical transition
 matrices.
 In Section~\ref{sec:two}, we establish a condition for
the Weyl  channels  to be
 accessible by a semigroup.
  Geometric properties of the set
 ${\cal A}_N^Q$ of accessible quantum channels are
  analyzed in Section~\ref{sec:three}.

 In  Section~\ref{sec:six} we investigate properties
of classical transition matrices generated
from the Weyl channels by  hyper-decoherence,
and demonstrate that these bistochastic matrices are circulant.
As decoherence acts in the space of quantum states \cite{Sch19},
an analogous process acting in the larger  space of quantum maps and
described by a supermap is sometimes called hyper-decoherence.
We show that  the semigroup dynamics
commutes with hyper-decoherence for mixed Weyl channels,
so that any accessible quantum map subjected to decoherence leads to
a bistochastic matrix accessible by a classical semigroup.
In mathematical literature such Markov stochastic matrices are also
called \emph{embeddable} \cite{Ru62,Da10}
and this property,  related to  classical analogue of divisibility of a channel  \cite{BC16},
was also studied in quantum context \cite{LKM17}.

Explicit conditions for a circulant bistochastic matrix,
obtained  by hyper-decoherence of mixed Weyl channels,
to be accessible by classical semigroup are derived
and geometry of the set  $\c A_N^C$ of these matrices is analyzed.
 In Section~\ref{sec:four},  we focus on $3$-level systems,
  called \emph{qutrits}, to emphasize
 prominent  difference with respect to the simpler
case of maps  acting on $2$-level systems.
  Possible relations between the sets of
 accessible channels,  unistochastic channels,
 and quantum channels,
 such that the corresponding classical transition matrix
 is unistochastic  are analyzed
 in Section~\ref{sec:five}.
  In particular, we demonstrate that for $N=3$
 there exists a quantum map accessible by a semigroup,
  which is not unistochastic. The work is concluded in Section   \ref{Sec7},
 in which we summarize results achieved and present a list of open questions.
 Some properties of unistochastic matrices of order $N=3$ and unistochastic channels
 acting in this dimension are discussed in Appendices.

 \section{Setting the scene}
 \label{sec:one}
   \subsection{Unitary Weyl matrices}
 \label{sec:Weyl}

In this Section we review necessary results on discrete
quantum maps and continuous dynamics  of the Lindblad form.
We will be mainly concerned  with  mixed unitary channels,
defined by a convex combination of unitary transformations,
\begin{equation}
\label{mixeduni}
\Phi_p \left(\rho\right)=\sum_{n=1}^M p_nV_n\rho V_n^\dagger.
\end{equation}
Here $\{V_n\}_{n=1}^M$ denotes an arbitrary collection of $M$ unitary
matrices of order $N$, while
${\vec p}=(p_1,\dots, p_M)^T$ represents a probability vector.
By  definition, any mixed unitary channel is unital, i.e.
$\Phi \left(\Id\right)=\Id$.
In this work we focus our attention on the set  ${\cal W}_N$  of
mixed unitary channels
in the form of Eq. \eqref{mixeduni},
where matrices $V_n$ form the basis of
\emph{Weyl} unitary matrices,
\begin{eqnarray}
\label{weyl1}
U_{kl}=X^kZ^l, \ \ k,l=0,\dots,N-1.
\end{eqnarray}
Here $X$ denotes the shift operator, $X\ket i=\ket{i\oplus1}$,
and $Z$ is a diagonal unitary matrix,
$Z=\diag\{1,\omega_{_N},\omega_{_N}^2,\dots,\omega_{_N}^{N-1}\}$
with  $\omega_{_N}=\e^{i\frac{2\pi}{N}}$.
The notation $i\oplus j$, refers to addition $i+j$ modulo $N$.
For convenience, we often  use a single index
  $\mu=0,\dots,N^2-1$, instead of two, $\mu = Nk+l$.

 This set of unitary matrices, applied in context of quantum
 theory  by Weyl \cite{We27},
 and later popularized by Schwinger \cite{Sch60},
was invented by Sylvester \cite{Sy09}  already in 1867.
These matrices form $N$-dimensional unitary representation
 of the elements of the Weyl-Heisenberg group
 and they are sometimes called \emph{clock--and--shift}
matrices, which  refers to the operators $Z$ and $X$, respectively.

With the exception of the identity matrix, $U_{00}= \Id_N$,
all remaining Weyl matrices are traceless.
Full set of $N^2$ matrices $U_{\mu}$  forms an orthogonal basis
composed of unitary matrices of size $N$,
which can be considered as a unitary generalization of Pauli
matrices of size $N=2$. It is known that
any one-qubit bistochastic channel $\Phi_B$ is unitarily equivalent to a
Pauli channel,
$\Phi_B(\rho)= U_2\bigl(\Phi_p(U_1\rho U_1^{\dagger})\bigr)U_2^{\dagger}$,
which is not the case for Weyl channels acting in higher dimensions.
Furthermore, for one-qubit systems all unital channels  are mixed
unitary, which is not true in higher dimensions  \cite{LS93}.

\medskip

It is easy to see that the Weyl unitary matrices of an arbitrary order $N$
enjoy the following property,

a) $U_{kl}U_{k^\prime l^\prime}=\omega_{_N}^{lk^\prime}
U_{k\oplus k^\prime,l\oplus l^\prime}$,
 which will be crucial in the further analysis.
 In turn it implies,

b) Commutativity up to a phase,  $U_{kl}U_{k^\prime l^\prime}=
\omega_{_N}^{lk^\prime- kl^\prime}
U_{k^\prime l^\prime}U_{kl}$,

c) Reflection symmetry, $U_{k,l}U_{-k\oplus N,-l\oplus N}=
\omega_{_N} ^{-lk}\; \Id_N$.

   \subsection{Mixed unitary channels accessible by a Lindblad semigroup}
  \label{lindbladchar}
Through celebrated GKLS theory \cite{GKS76,Li76},
the time evolution of a quantum system experiencing
Markovian dynamics can be described by a Lindblad generator $\mathcal{L}$
 and represented
 as $\rho(t)=\e^{\mathcal{L}t}[\rho(0)]=\Lambda_t[\rho(0)]$.
The action of the generator $\mathcal{L}$ on a quantum state $\rho$ of size $N$
can be written in terms of at most $N^2-1$ jump operators $L_j$,
\begin{eqnarray}
\label{lindblad1}
\mathcal{L}(\rho)=\sum_{j=1}^{N^2-1}\Bigl( L_j \rho L_j^{\dagger} -\frac{1}{2} L_j^{\dagger} L_j \rho -\frac{1}{2} \rho L_j^{\dagger} L_j \Bigr).
\end{eqnarray}
The corresponding superoperator  $\mathcal{L}$
can be represented as a matrix acting in the extended space  ${\cal H}_N \otimes {\cal H}_N$,
\begin{eqnarray}\label{lindblad2}
\mathcal{L}=\sum_{j=1}^{N^2-1} L_j \otimes \overline{L}_j	- \frac{1}{2} \sum_j L_j^\dagger L_j \otimes \Id_N- \frac{1}{2} \Id_N \otimes\sum_j  L_j^T  \overline{L}_j.
\end{eqnarray}

With these preliminaries in mind, let $\Phi_{\vec{p}}$ be a mixed
unitary quantum channel written in terms of the Weyl unitary
matrices:
\begin{eqnarray}
\label{phi}
\Phi_{\vec{p}}=\sum_{k,l=0}^{N-1} p_{kl}U_{kl}\otimes\overline{U}_{kl}=\sum_{\mu=0}^{N^2-1}p_\mu U_\mu\otimes\overline{U}_\mu.
\end{eqnarray}
Probabilities in this convex combination form an $(N^2-1)$-dimensional simplex
 $\vec{p}\in\Delta_{N^2-1}\subset\mathbb{R}^{N^2-1}$.
In this paper we find out,  for which $\vec{p}$
the map is semigroup accessible,
$\Phi_{\vec{p}} \in {\cal A}_N^Q$,
so that there exists a Lindblad generator $\mathcal{L}$
and $t>0$ such that $\e^{\mathcal{L}t}=\Phi_{\vec{p}}$.

A quantum channel $\Phi$ belongs to a dynamical semigroup if and only if there is
a Lindblad generator $\mathcal{L}$ such that $\Phi=\e^\mathcal{L}$, where $\mathcal{L}$ fulfils the following three conditions \cite{WECC08}:
\begin{enumerate}[i)]
 \item  Hermiticity preserving, $\mathcal{L}[Y^\dagger]=\mathcal{L}[Y]^\dagger$
 for any operator $Y$.
 This in turn results in the following condition
  for entries  of the Lindblad generator,
  $\mathcal{L}_{\stackrel{\scriptstyle m \mu}{n \nu}}=\overline{\mathcal{L}}_{\stackrel{\scriptstyle \mu m}{\nu n}}$.
 \item  Trace preserving, $\tr\left(\mathcal{L}[Y]\right)=0$ for any $Y$, which implies,
 $\forall n,\nu:\ \ \sum_m\mathcal{L}_{\stackrel{\scriptstyle m m}{n \nu}}=0$.
 \item  Conditionally completely positive, $(\Id_{N^2}-\project{\psi_+})(\mathcal{L}\otimes\Id_N)[\project{\psi_+}](\Id_{N^2}-\project{\psi_+})\geq0$,
  where $\ket{\psi_+}=\frac{1}{\sqrt{N}}\sum\ket{ii}$ is the maximally entangled state acting on the composite space
${\cal H}_A  \otimes {\cal H}_B$.
 This condition is equivalent to $(\Id_{N^2}-\project{\psi_+}) \mathcal{L}^R(\Id_{N^2}-\project{\psi_+})\geq0$ in
 which $\mathcal{L}^R$ is the reshuffled form of  $\mathcal{L}$.
\end{enumerate}

However, as  logarithm of a matrix is usually not unique
\cite{G59,C66}, it is not  straightforward to
verify the last condition for an arbitrary map $\Phi$.

 \subsection{Bistochastic quantum maps and classical bistochastic matrices}
  \label{bisto}

Any completely positive map $\Phi$ acting on a
quantum system of size $N$ can be represented
by the corresponding dynamical matrix,
determined by the Choi--Jamio{\l}kowski isomorphism,
$D_{\Phi}=N( \Phi \otimes {\mathbb I}_N)|\psi_+\rangle\langle \psi_+|$.
As discussed in the previous Section,
the dynamical matrix is related to the superoperator $\Phi$
by a particular reordering of the entries of the matrix, called
reshuffling, $D_{\Phi} = \Phi^{R}$.

A coarse-graining channel sends any density matrix $\rho$ into its diagonal,
${\cal D}(\rho)={\tilde \rho}:={\rm diag}(\rho)$,
which can be interpreted as a probability vector.
Hence such a \emph{decoherence} process \cite{Sch19},
equivalent to stripping away all off-diagonal elements of the density matrix,
projects the set of quantum states $\Omega_{_N}$
into the classical probability  simplex $\Delta_N$.

An analogous process applied to any
Choi matrix $D_{\Phi}$ produces a diagonal matrix $\tilde D$.
Its diagonal forms a  vector $\vec{t}$ of length $N^2$ which
 can be reshaped into a classical transition matrix
$T(\Phi)=T(D^R)$.
More formally, writing  $\vec{t} = \diag(D_{\Phi})$,
we have $ T_{ij}=\vec{t}_k$,
 where $k=N(i-1)+j$, with $i,j =1,\dots, N$.

The process of stripping of the off-diagonal elements of the Choi matrix
can thus be called  decoherence process
or also hyper-decoherence, to emphasize that it acts
on a larger space of generalized states or quantum maps \cite{Zy08}.
More precisely, we describe it in the formalism of super-maps,
which send the set of all quantum operations into itself, $\Phi'=\Gamma(\Phi)$
-- see \cite{CDAP08,Zy08,Go19}. The hyper-decoherence  ${\c D}_h$
acting on a map $\Phi$  corresponds to a coarse graining supermap
and technically it can be described by the procedure of reshuffling, $D=\Phi^R$, decohering and then reshaping the diagonal matrix $\tilde D$,
 so one can put the above recipe in a nutshell as:
$ T\left(\Phi\right)=\c D_{h}\left(\Phi\right)$.
Alternatively, one may write equations for the discrete evolution of populations,
 $T_{ij} = {\rm Tr}[|i\rangle \langle i| \; {\Phi}(|j\rangle \langle j|)] $.
In this way we represent the process of hyper-decoherence
acting in the space of quantum maps
by a projection from $N^2$ dimensions to $N$.
Let $\Pi_N=\sum_{i=1}^N|i,i\rangle \langle i,i|$
denote a projection operator of dimension $N$.
Then the classical transition matrix $T$ and its entries
can be written by
 \begin{equation}
 \label{supermap}
 T\left(\Phi\right)=  \Pi_N \Phi \Pi_N , \  \ {\rm and \ \ }
 T_{ij} =\Phi_{\substack{ii\\ jj}} ,
 \end{equation}
 where the four index notation introduced in the previous section
is used. Note that no sum  is performed over the repeated indices.

If a Choi matrix $D$ is positive and satisfies the partial trace condition,
Tr$_A D= {\mathbb I}$,  it represents a stochastic map $\Phi$.
Then the corresponding transition matrix $T(\Phi)$ is stochastic,
so that $T_{ij}\ge 0$ and $\sum_i T_{ij}=1$.
$\c S_N^Q$ and $\c S_N^C$ represent the sets of
$N$-dimensional quantum and classical stochastic processes, respectively.
The condition of unitality,  $\Phi({\mathbb I})={\mathbb I}$,
 analogous to the trace preserving property,
 is equivalent to the dual partial trace condition,
Tr$_B D = {\mathbb I}$.
Any completely positive,  trace preserving and unital map
${\Psi}_B$ is called \emph{bistochastic} and  ${\cal B}_N^Q$  will
denote the set of quantum bistochastic maps acting
 on $N$ dimensional system.
Decohering any bistochastic map
one obtains a \emph{bistochastic} transition matrix  $B=T({\Psi}_B)$,
such that $B_{ij}\ge 0$ and $\sum_i B_{ij}=\sum_j B_{ij}=1$
-- see \cite{BZ17}.
Due to the theorem of Birkhoff any bistochastic matrix
can be represented as a convex combination of
permutation matrices.
The set of bistochastic matrices, denoted by  ${\cal B}_N^C$,
has $(N-1)^2$ dimensions and is called the \emph{Birkhoff polytope}.

Consider any unitary matrix $V$ of order $N$.
Unitarity condition, $VV^{\dagger}={\mathbb I}$, implies
that the matrix given by the Hadamard (element-wise) product,
$B=V \odot {\bar V}$, is bistochastic,
as its entries read $B_{ij}=|V_{ij}|^2$.
Thus one may ask, whether any bistochastic  matrix $B$
can be represented in such a way.
The answer is negative, so one introduces the set   ${\cal U}_N^C$,
of \emph{unistochastic matrices}, for which
there exist a unitary $V$ such that $B=V \odot {\bar V}$.

It is easy to check that for $N=2$ any bistochastic matrix
is unistochastic, so both sets coincide, ${\cal U}_2^C ={\cal B}_2^C$.
However, already for $N=3$ there exist bistochastic matrices
which are not unistochastic, so that
${\cal U}_3^C  \subset {\cal B}_3^C$.
As a simple example let us mention the bistochastic matrix considered by Schur,
$B_S=\frac{1}{2}(X_3+X_3^2) \notin {\cal U}_3^C$, where
$X_3$ is the three--element cycle permutation matrix.
In general, to verify whether a given bistochastic matrix
 $B\in {\cal B}_3^C$ is unistochastic one can compute
 the quantity $\cal Q$, related to the Jarlskog invariant  \cite{jarlskog,volume},
 and check if it is positive -- see Appendix \ref{AppA}.
For larger dimensions no explicit criteria necessary and sufficient
for unistochasticity are known \cite{BEKTZ05},
but in the case $N=4$ one can rely on an efficient
numerical procedure based on the algorithm of Haagerup \cite{RGZ18}.

As unitary matrix $V$ of size $N$ determines a bistochastic matrix
$B=V \odot {\bar V} \in {\cal B}_N^C$,
a unitary matrix $U$ of size $N^2$ determines a
certain quantum bistochastic channel.
Physically, such an operation
corresponds to a coupling of the principal system with
an $d$-dimensional ancillary subsystem $E$, initially in the maximally
mixed state, by a non-local unitary matrix $U$,
followed by the partial trace over the environment,
\begin{equation}
\Psi_U = \tr_E \Bigl[U\Bigl(\rho\otimes 
 \frac{\mathbb{I}_d}{d}
 \Bigr)U^{\dagger}\Bigr],
  \label{eq:defuni}
\end{equation}
By construction such maps are trace preserving and unital
for any dimension $d$ of the environment.
The case $d=N$ is distinguished,
as in this case the rank of the corresponding Choi matrix, $r\le d^2$
can be full, so in analogy to the unistochastic matrices
such quantum maps were called \emph{unistochastic channels}
\cite{unistochasticos3}.
The set of unistochastic maps, denoted by
${\cal U}_N^Q$, forms a proper subset of
the set ${\cal B}_N^Q$ of bistochastic maps.
In the one-qubit case, the set of  bistochastic maps
forms the tetrahedron of Pauli channels $\Phi_{\vec{p}}$,
while  ${\cal U}_2^Q$, forms its non-convex subset,
bounded by ruled surfaces \cite{NA07,unistochastic}
and called \emph{Steinhaus tetrahedron} \cite{St99} -- see Fig. \ref{tetra2}
and ref. \cite{PRZ19}.
The set of one-qubit accessible maps was analyzed recently
in \cite{FGL20,JSP20} from a slightly different perspective.

\begin{figure}[ht]
    \centering
    \includegraphics[scale=0.3]{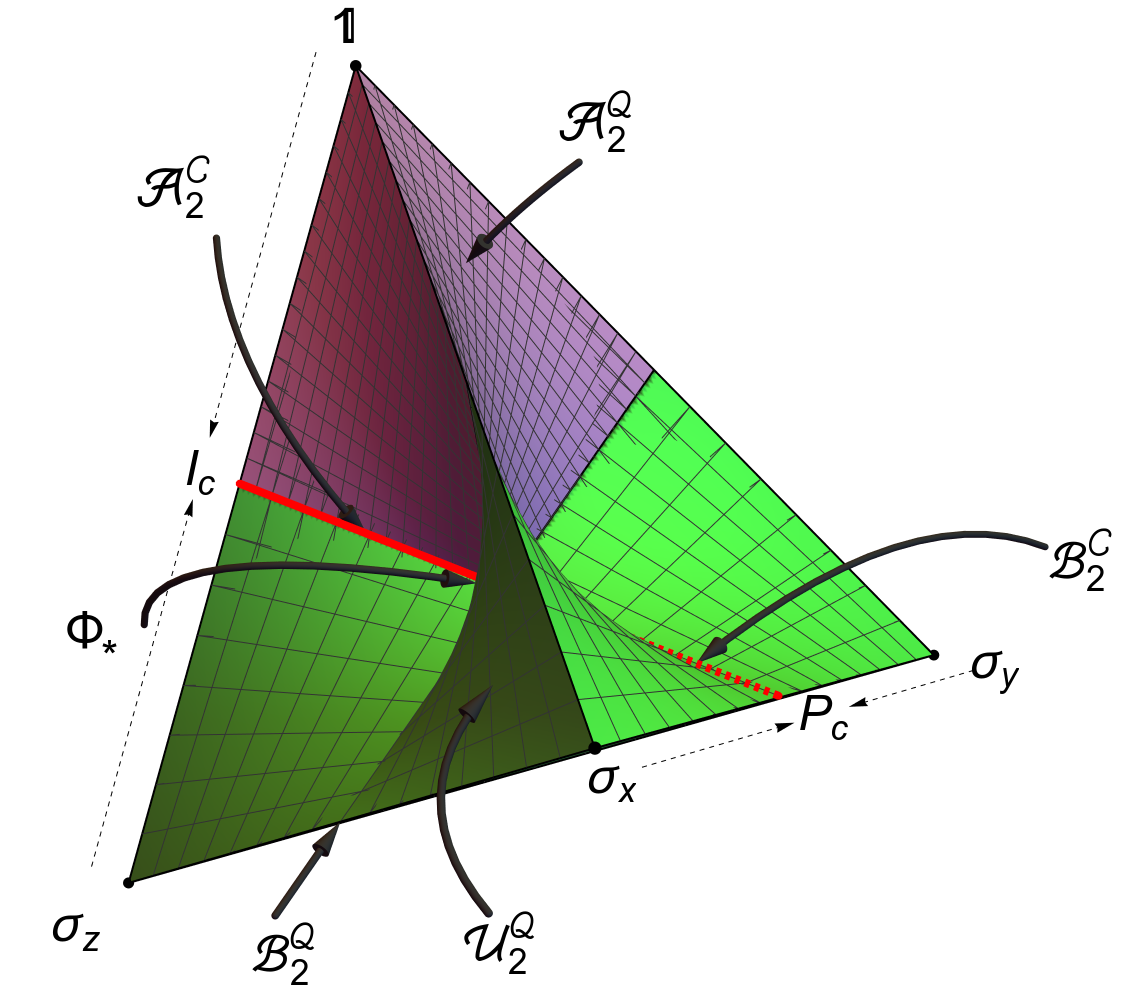}
    \caption{ Any one-qubit mixed unitary channel
     is unitarily equivalent to a convex mixture of Pauli
     rotations, represented  by the tetrahedron ${\cal B}_2^Q$ of Pauli channels.
     The set of  $N=2$ unistochastic channels, denoted by
      ${\cal U}_2^Q$, forms a non-convex set with ruled surfaces,
     called  Steinhaus tetrahedron. Its quarter containing the corner
     representing identity,  forms the set  ${\cal A}_2^Q$
     of  Pauli channels accessible by a quantum semigroup.
     This set is star--shaped with respect to the maximally depolarizing channel $\Phi_*$.
     The set of classical bistochastic matrices  ${\cal B}_2^C$,
     represented by the (red) interval  joining  classical identity $I_c$
     and classical permutation matrix $P_c$,
     includes the set  ${\cal A}_2^C$ of matrices accessible by a classical semigroup.
     This set can be obtained by hyper-decoherence (denoted by dashed arrows),
     which squeezes the 3D set ${\cal A}_2^Q$ into the interval  ${\cal A}_2^C$.
       }
    \label{tetra2}
\end{figure}

The set of unistochastic channels was found
important by studying the maps accessible by a semigroup,
as it was shown \cite{PRZ19}
that  ${\cal A}_2^Q \subset {\cal U}_2^Q$.
 In this work we show  that
 an analogous relation does not hold for $N=3$
 and analyze classical analogues of these sets.
  For any unitary dynamics, $\Psi_V^{\cal I}= V \otimes {\bar V}$,
 the corresponding classical transition matrix is  unistochastic,
  $T=V \odot {\bar V}$. A stronger property  states \cite{KCPZ18}
 that a classical transition matrix $T$ can be coherified to a unitary dynamics
 if and only if  it is unistochastic,  so the set ${\cal I}_N$
 of isometries -  unitary quantum rotations -
 decoheres to the set of classically  unistochastic matrices.

Therefore, although the word \emph{unistochastic}
in both cases refers to the underlying unitary matrix,
the sets of quantum unistochastic  operations ${\cal U}_N^Q$
 and classical unistochastic matrices  ${\cal U}_N^C$
 are not directly related by decoherence.
To explain this fact consider an arbitrary unistochastic
map $\Psi_U$,  determined  in (\ref{eq:defuni})
by a unitary matrix $U$ of order $N^2$.
It is convenient to represent it in  the four-index notation,
$U_{\substack{\alpha\beta\\\gamma\delta}}=
\langle \alpha, \beta |U| \gamma , \delta\rangle$.
It is known \cite{unistochasticos3}
that the corresponding Choi matrix  reads,
  $D_{\Psi_U}=(U^R)^{\dagger}U^R/N$.
Taking its diagonal and reshaping it we arrive at
the classical transition matrix $T(\Psi_U)$ with entries
\begin{equation}
  T_{ij}= \frac{1}{N}\sum_{b,c=1}^{N}
    \vert U_{\substack{ib\\jc}}\vert^2,\quad i,j=1,\dots, N.
    \label{eq:transitionmatrix}
\end{equation}
Note that each entry of $T$ can be written as an average of the entries of a
block of size $N$ of a unistochastic matrix $U\odot {\bar U}$ of order $N^2$.
The above observation provides a clear motivation
 to introduce a family of generalized unistochastic matrices.
 Define an auxiliary $kN \times N$ rectangular matrix $L_{N,k}$,
 being a suitable extension of identity,
\begin{equation}
L_{N,k} =  {\mathbb{I}_N}\otimes |{\bar \phi_k}\rangle ,
    \label{eq:Lkn}
\end{equation}
 where the $k$-dimensional state of the uniform superposition reads,
$|{\bar \phi_k}\rangle=\frac{1}{\sqrt{k}} \sum_{j=1}^k |j\rangle$.
For instance, for
$N=2$ and $k=1$ one has $L_{2,1}= {\mathbb{I}_2}$
while for $k=2$ such a coarse graining matrix
reads $L_{2,2}^T=\frac{1}{\sqrt{2}}\begin{pmatrix}
 1&1&0&0\\
 0&0&1&1
  \end{pmatrix}$.

  \medskip
  {\bf Definition}. A bistochastic matrix $B$ or order $N$
will be called $k$--\emph{unistochastic} if there exists an unitary matrix $U_{kN}$ of size $kN$
such that $B$ is given by the following matrix coarse graining \cite{LGR08}
of the larger unistochastic matrix of order $kN$,
\begin{equation}
  B  = L^T_{N,k} (U_{kN} \odot {\bar U_{kN}}) L_{N,k} .
    \label{k-uni}
\end{equation}

\medskip
Since  $L_{N,1}= {\mathbb{I}_N}$ this
definition implies that any standard unistochastic matrix is $1$-unistochastic,
while expression  (\ref{eq:transitionmatrix})
for the classical transition matrix is equivalent to
$T  = L_{N,N}^T (U_{N^2} \odot {\bar U_{N^2}}) L_{N,N}$.
Thus a classical transition matrix $T$
obtained by decoherence of any unistochastic channel
is $N$--unistochastic.
By construction the set  $ {\cal U}_{N,N}^C$
 of $N$--unistochastic matrices of order $N$
includes the original set
$ {\cal U}_N^C$ of unistochastic matrices.
For $N=2$ every bistochastic matrix  is unistochastic,
but already for $N=3$ there exits  a unistochastic operation
 $\Psi_U$
which decoheres to a non-unistochastic classical transition matrix $T$,
which is $3$--unistochastic -- see Appendix  \ref{AppB}.

\medskip

\begin{table}[h]
\caption{Set  ${\cal S}_N^Q$ of quantum stochastic maps
acting on $N$ dimensional states contains subsets of
bistochastic maps  ${\cal B}_N^Q$,
unistochastic maps $ {\cal U}_N^Q$,
isometric unitary rotations $ {\cal I}_N^Q$,
mixed unitary Weyl channels  $ {\cal W}_N$
defined in (\ref{mixeduni})
and its subset  $ {\cal A}_N^Q$ of maps accessible by a semigroup.
Their classical analogues
forming subsets of the set
${\cal S}_N^C$ of stochastic transition matrices  of order $N$ read:
bistochastic matrices ${\cal B}_N^C$,
$N$-unistochastic matrices $ {\cal U}_{N,N}^C$,
unistochastic matrices $ {\cal U}_N^C$,
circulant bistochastic matrices ${\cal C}_N$ and
 circulant transition matrices $ {\cal A}_N^C$ accessible by a
 classical  semigroup.
Vertical arrows represent  process of hyper-decoherence, e.g.
${\cal D}_h({\cal S}_N^Q)={\cal S}_N^C$ and
${\cal D}_h(\Phi_p)=T_{q}$
with $q$ being the marginal (\ref{q}) of the vector $p$,
the rectangular coarse graining matrix $L_{N,N^2}$ of order $N \times N^2$
is defined in   (\ref{eq:Lkn}),
while $U$ and $V$ represent unitary matrices of size $N^2$ and $N$,
respectively.
}
  \smallskip
\hskip -0.2cm
{\renewcommand{\arraystretch}{1.47}
\begin{tabular}
[c]{| l | c  c  c  c  c  c  c   | c   | c |} 
\hline \hline
   \em       Quantum  operation --     &   ${\cal S}_N^Q$    &  {\large $\supset$} &   ${\cal B}_N^Q$   &  {\large $\supset$} &  ${\cal U}_N^Q$
                                             &  {\large $\supset$} &  ${\cal I}_N^Q$   &
                                          ${\cal W}_N$    &   ${\cal A}_N^Q$
    \\
  \em  stochastic map,  $\Phi^R=D \ge 0$  ~     &    ~ ${\rm Tr}_A D= {\mathbb I}$ ~  & &   ~${\rm Tr}_B D= {\mathbb I}$ ~ & &
        ~ $D=N^{-1} (U^R)^{\dagger} U^R$ ~  & &  ~
                                             $D=(V\otimes {\bar V})^R$ ~
                                              & ~$\Phi_p$ ~&
                                             ~ $\Phi=\exp(t {\cal L}) $ ~
    \\
    \hline
    \em ~~\emph{hyper-decoherence} ${\cal D}_h (\; .\; )$     &  $\downarrow$   &  & $\downarrow$ & &  $\downarrow$ & & $\downarrow$ &  $\downarrow$  & $\downarrow$  
     \\
\hline
   \em      Classical transition       &   ${\cal S}_N^C$   &   {\large $\supset$} &   ${\cal B}_N^C$   &  {\large $\supset$} &
     $ {\cal U}_{N,N}^C $ 
                                                   &   {\large $\supset$ }
           & $ {\cal U}_{N}^C$
            & ${\cal C}_N $  &   ${\cal A}_N^C$ 
    \\
       \em stochastic  matrix, ~$T_{ij}\ge 0$         &   ~ $\sum_i T_{ij}=1 $ ~  & &   ~$\sum_j T_{ij}=1 $ ~ &  &
        ~ $T=L^T ( U\odot {\bar U})L $ ~   & &  ~
                                             $T= V\odot {\bar V}$    &  $T_q$ &   $T=\exp(t {\cal K}) $
                                              \\
\hline \hline
\end{tabular}
}
\label{tab:sets}
\end{table}

\medskip

A list of sets  of quantum operations discussed further in this work
and the corresponding sets of classical transition matrices
is presented in Table \ref{tab:sets}.
Note that the inclusion relations for quantum maps,
visible in the table,
correspond to analogous relations in the classical case.
Natural decoherence relations, like
${\cal D}_h ({\cal S}_N^Q)={\cal S}_N^C$
and
${\cal D}_h ({\cal B}_N^Q)={\cal B}_N^C$,
connect the sets of unistochastic channels and
$N$-unistochastic matrices,
${\cal D}_h ({\cal U}_N^Q)={\cal U}_{N,N}^C$.
The latter one contains the set
${\cal U}_{N}^C$ of unistochastic matrices arising
  from isometric unitary quantum maps  by hyper-decoherence,
${\cal D}_h ({\cal I}_N)={\cal U}_N^C$,
which can be modeled by Schur superchannels \cite{Pu20}.
The set of mixed unitary Weyl channels
decoheres to the set of circulant bistochastic matrices,
${\cal D}_h ({\cal W}_N)={\cal C}_N$,
while quantum accessible Weyl channels
decohere to classical accessible bistochastic
matrices,
${\cal D}_h ({\cal A}_N^Q)={\cal A}_N^C$.

Any unitary rotation forms a unistochastic map, so that
$ {\cal U}_N^Q \supset {\cal I}_N^Q$.
A classical analogue of unitary rotations is played by permutations matrices
and their set sits inside the set of unistochastic matrices,
$ {\cal P}_N^C  \subset {\cal U}_N^C \subset {\cal B}_N^C$.
In the case $N=2$ the following relations hold:
$ {\cal A}_2^Q \subset {\cal U}_2^Q$
and $ {\cal U}_2^C  = {\cal B}_2^C$, which are not true for higher $N$.
The set ${\cal U}_N^Q$ of unistochastic channels and the set
${\cal W}_N$ of the Weyl channels are incomparable.
So are the sets of Weyl channel and accessible channels,
so in this work we will restrict our attention to the Weyl channels
which are accessible by a semigroup.

\section{General case: accessibility by a semigroup for an arbitrary dimension $N$}
\label{sec:two}

  \subsection{Channel eigenvalues and probabilities}
In order to answer the question, whether a given map is
accessible by a semigroup we need to analyze further properties of
channels in the form  \eqref{phi}.
The following lemma presents  a key
feature of mixed unitary channels written in the Weyl basis.
\begin{lemma}
\label{commutativity}
Weyl unitary operators satisfy  $[U_{ij}\otimes\overline{U}_{ij},U_{kl}\otimes\overline{U}_{kl}]=
U_{ij}U_{kl}\otimes\overline{U}_{ij}\overline{U}_{kl}-
U_{kl}U_{ij}\otimes\overline{U}_{kl}\overline{U}_{ij}=0$
for any choice of the indices $i,j,k,l=0,\dots, N-1$.
 The equality holds as Weyl unitaries are commutative up to a phase,
  which is not relevant here as the expressions above contain
   $U$ and its complex conjugate, $\overline{U}$.
\end{lemma}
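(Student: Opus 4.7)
The plan is to reduce the commutator computation to the multiplication rule for Weyl matrices already stated as property (a), and then observe that the phase obtained from the $U$-side is cancelled by the conjugate phase obtained from the $\overline{U}$-side. So the tensor product structure, combined with the fact that one factor is the complex conjugate of the other, is precisely what washes out the non-commutativity.

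Concretely, I would first apply property (a) to each of the two products in the commutator. The first term gives
\begin{equation*}
U_{ij}U_{kl}\otimes\overline{U}_{ij}\overline{U}_{kl}
= \omega_{_N}^{jk} U_{i\oplus k,\,j\oplus l}\otimes \overline{\omega_{_N}^{jk} U_{i\oplus k,\,j\oplus l}}
= U_{i\oplus k,\,j\oplus l}\otimes \overline{U}_{i\oplus k,\,j\oplus l},
\end{equation*}
since $\omega_{_N}^{jk}\cdot\overline{\omega_{_N}^{jk}}=1$. Applying the same rule to the reversed product,
\begin{equation*}
U_{kl}U_{ij}\otimes\overline{U}_{kl}\overline{U}_{ij}
= \omega_{_N}^{li} U_{k\oplus i,\,l\oplus j}\otimes \overline{\omega_{_N}^{li} U_{k\oplus i,\,l\oplus j}}
= U_{k\oplus i,\,l\oplus j}\otimes \overline{U}_{k\oplus i,\,l\oplus j}.
\end{equation*}
Because addition modulo $N$ is commutative, $i\oplus k=k\oplus i$ and $j\oplus l=l\oplus j$, so the two right-hand sides are equal and the commutator vanishes.

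An alternative route, perhaps more in the spirit of the paper, uses property (b) directly: write $U_{ij}U_{kl}=\omega_{_N}^{jk-il}U_{kl}U_{ij}$ and its conjugate $\overline{U}_{ij}\overline{U}_{kl}=\omega_{_N}^{il-jk}\overline{U}_{kl}\overline{U}_{ij}$. Inserting these into the first summand of the commutator, the two scalar factors $\omega_{_N}^{jk-il}$ and $\omega_{_N}^{il-jk}$ multiply to $1$, immediately yielding the reversed ordering and thereby zero.

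There is no real obstacle here; the only point that requires a moment of care is the bookkeeping of exponents in the conjugated factor, together with the observation that the phase arising from property (a) or (b) is a scalar that factors through the tensor product and therefore cancels against its own complex conjugate. This cancellation mechanism is the structural reason why the Pauli/Weyl superoperators $U_\mu\otimes\overline{U}_\mu$ form a commuting family, and it is exactly the feature that will later permit a simultaneous diagonalisation — and hence an independent component-wise analysis of eigenvalues — in the semigroup-accessibility discussion that follows.
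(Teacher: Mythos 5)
Your proof is correct and follows essentially the same route as the paper, which disposes of the lemma with the one-line observation that the Weyl phase from property (a)/(b) is a scalar cancelled by its complex conjugate coming from the $\overline{U}$ factor; your version merely makes the exponent bookkeeping explicit, and the exponents are all correct.
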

 Accordingly, all  combinations  (not necessarily convex) of
 tensor products of Weyl unitaries and their complex conjugates
 are compatible, so are all mixed unitary channels
 $\Phi_{\vec{p}}$, i.e. $\forall\vec{p},\vec{p}'\ \
 [\Phi_{\vec{p}},\Phi_{\vec{p}'}]=0$. For the sake of
 brevity, from now on we may drop subscript $\vec{p}$ and denote
 these channels by $\Phi$.  The common eigenbasis of the
 channels defined in Eq. \eqref{phi} is given by the maximally
 entangled states proportional to the vectorized form of Weyl
 unitaries.
 For any matrix $A=\sum A_{ij}\outprod ij$ let us use
 the shorthand  notation
 $\Ket A=\sum A_{ij}\ket i\ket j$ to
 represent its vectorized form. It  enjoys
 the following properties: $(A\otimes C^T)\Ket B=\Ket{ABC}$
 and $\Bra B A\rangle\rangle=\tr{(B^\dagger A)}$,
 which  defines the Hilbert-Schmidt inner product in the space of matrices.
A matrix element of the superoperator $\Phi_{\vec{p}}$
in the non-normalized basis $\{\Ket{U_{ij}}\}$ reads,
\begin{eqnarray}\label{eigphi}
\nonumber\Bra{U_{ij}}\Phi_{\vec{p}}\Ket{U_{mn}}&=&\sum_{kl} p_{kl}\Bra{U_{ij}}U_{kl}\otimes \overline{U}_{kl}\Ket{U_{mn}}=\sum_{kl} p_{kl}\tr\big(U_{ij}^\dagger U_{kl}U_{mn}U_{kl}^\dagger\big)\\&=&\sum_{kl}
 p_{kl} \omega_{_N}^{ml-kn} 
 \tr\big(U_{ij}^\dagger U_{mn}\big)=N\sum_{kl} p_{kl}
 \omega_{_N}^{ml-kn}     
 \delta_{im}\delta_{jn}.
\end{eqnarray}
The above equalities holds due to
orthogonality of Weyl unitaries
 and their commutativity up to a phase mentioned above.
For any mixed unitary channel \eqref{mixeduni}
represented by a superoperator in the form
$\Psi_{\vec{p}}=\sum p_{n}V_{n}\otimes \overline{V}_{n} $,
the dynamical matrix $D$, obtained by reshuffling its entries,
\begin{eqnarray}
D_{\Psi}= \Psi_{\vec{p}}^R=\sum_{n}p_{n}\big(V_{n}\otimes \overline{V}_{n}\big)^R=\sum_{n}p_{n} \Ket{V_n}\Bra{V_n},
\end{eqnarray}
 corresponds to the map by the  Choi-Jamio\l kowski isomorphism \cite{BZ17}.
Note that the last form yields  the spectral decomposition of the Choi matrix
 $D_{\Psi}$ if and only if the set
$\{V_n\colon V_n\in U(N)\}_{n=1}^{N}$ forms an orthogonal basis.
Observing that $\langle\langle U_{kl}|U_{ij}\rangle\rangle=N\delta_{ik}\delta_{jl}$
and  $\Phi_{\vec{p}}^R=\sum_{kl}p_{kl}\Ket{U_{kl}}\Bra{U_{kl}}$ we infer
 that all the channels $\Phi_{\vec{p}}$ are compatible
 and also that they commute with the set of the corresponding dynamical maps.
This fact implies  the following proposition clarifying the relation between
the eigenvalues  $\lambda_{kl}$ of the superoperator $\Phi_{\vec{p}}$
and the probabilities, $p_{kl}$,  which  are proportional to the
eigenvalues of the corresponding dynamical matrix $D=\Phi_{\vec{p}}^R$.

\begin{proposition}\label{cdphi}
There exists a complex Hadamard  matrix $H$
(unitary up to rescaling  with all unimodular entries),
 which  transforms probabilities $p_{kl}$  into eigenvalues  $\lambda_{mn}$
 of the superoperator $\Phi_{\vec p}$ defined in Eq. \eqref{phi}, namely
   $\vec{\lambda}=H\vec{p}$ and  $\vec{p}=\frac{1}{N^2}H\vec{\lambda}$.
   The matrix $H$ of order $N^2$ is hermitian and its entries read \cite{CW13,CW15},
\begin{eqnarray}
\label{M}
H_{\stackrel{\scriptstyle mn}{kl}}=\omega_{_N}^{\; ml-kn},
 \ \ k,l,m,n=0,\dots, N-1.
\end{eqnarray}
\end{proposition}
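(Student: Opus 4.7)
The plan is to read off the eigenvalue equation directly from the matrix element computation already performed in equation \eqref{eigphi}, and then verify that the resulting $N^2 \times N^2$ transformation matrix $H$ has the required Hadamard and Hermiticity properties.

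First I would observe that the key calculation has effectively been done: equation \eqref{eigphi} shows that $\Bra{U_{ij}}\Phi_{\vec p}\Ket{U_{mn}}$ is proportional to $\delta_{im}\delta_{jn}$, so the vectorized Weyl operators $\{\Ket{U_{mn}}\}$ form a (non-normalized) eigenbasis of the superoperator $\Phi_{\vec p}$. Combined with the orthogonality relation $\Bra{U_{kl}}U_{ij}\rangle\rangle = N\delta_{ik}\delta_{jl}$, dividing by $N$ extracts the eigenvalue
\[
\lambda_{mn} \;=\; \sum_{k,l=0}^{N-1} p_{kl}\,\omega_{_N}^{\,ml-kn},
\]
which is exactly the relation $\vec\lambda = H\vec p$ with $H_{\substack{mn\\kl}} = \omega_{_N}^{ml-kn}$.

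Next I would check that $H$ qualifies as a complex Hadamard matrix and is Hermitian. Unimodularity of entries is immediate since every $\omega_{_N}^{ml-kn}$ is a root of unity. For unitarity up to scale, I would compute
\[
(HH^\dagger)_{\substack{mn\\m'n'}} \;=\; \sum_{k,l} \omega_{_N}^{ml-kn}\,\omega_{_N}^{-m'l+kn'} \;=\; \Bigl(\sum_{l} \omega_{_N}^{l(m-m')}\Bigr)\Bigl(\sum_{k} \omega_{_N}^{k(n'-n)}\Bigr) \;=\; N^2\,\delta_{mm'}\delta_{nn'},
\]
using only the standard geometric-sum identity $\sum_{j=0}^{N-1}\omega_{_N}^{jr}=N\delta_{r,0\,(\mathrm{mod}\,N)}$. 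Hermiticity follows from $\overline{H_{\substack{mn\\kl}}} = \omega_{_N}^{kn-ml} = H_{\substack{kl\\mn}}$. Together these give $H^{-1} = N^{-2} H^\dagger = N^{-2} H$, which yields the inverse formula $\vec p = N^{-2} H \vec\lambda$.

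I do not expect any real obstacle: the only substantive step is the matrix-element calculation, which is already carried out in \eqref{eigphi} as a consequence of Lemma~\ref{commutativity} and the phase-commutation rule (b) for Weyl matrices, and everything else is an elementary character-sum verification. The only bookkeeping care required is keeping consistent sign conventions on the exponent $ml-kn$ so that the Hermiticity condition $H^{\dagger} = H$ and the inversion formula come out with the stated signs; a slip here would produce a unitary but non-Hermitian variant. Once the conventions are fixed, the proposition follows in a few lines.
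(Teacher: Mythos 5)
Your proposal is correct and follows essentially the same route as the paper: the eigenvalue relation $\lambda_{mn}=\sum_{kl}p_{kl}\,\omega_{_N}^{ml-kn}$ is read off from the matrix-element computation in Eq.~\eqref{eigphi}, and the Hadamard/Hermiticity properties of $H$ are verified by the same elementary character sum (the paper computes $H^2=N^2\Id$ using Hermiticity, you compute $HH^\dagger$ directly, which is the same calculation). No gaps.
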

To demonstrate the above statement one needs to show that
$H^2=HH^\dagger=N^2 \Id$, which means that $H$
is a complex Hadamard matrix \cite{TZ06},
self inverse up to the constant $N^2$.
Hermiticity of $H$ is obvious from definition,
and for unitarity we have
\begin{eqnarray*}
\sum_{k,l=0}^{N-1}H_{\stackrel{\scriptstyle mn}{kl}}
H_{\stackrel{\scriptstyle kl}{pq}}=\sum_{kl}\; \omega_{_N}^{ml-kn}\omega_{_N}^{kq-lp}=\sum_{l}\omega_{_N}^{(m-p)l}\; \sum_k \omega_{_N}^{(q-n)k}=N^2\delta_{mp}\delta_{nq}.
\end{eqnarray*}

\begin{remark}\label{remark}
For any  quantum channel of the form $\Phi_{\vec{p}}=\sum p_{kl}U_{kl}\otimes\overline{U}_{kl}=\frac{1}{N}\sum\lambda_{kl}\Ket{U_{kl}}\Bra{U_{kl}}$, the following facts can be verified
with the help of the matrix $H$,
\begin{enumerate}
\item For any superoperator $\Phi_{\vec{p}}$ of this form its leading eigenvalue (corresponding to invariant state $\frac{1}{\sqrt{N}}\Ket\Id$) is equal to unity, $\lambda_{00}=1$.
\item For completely depolarizing channel $\Phi_\ast$ corresponding to
the flat probability vector, $p_{kl}=\frac{1}{N^2}$ for $k,l=0,\dots, N-1$,
 all subleading eigenvalues vanish, $\lambda_{kl}=0$ for $k\neq0$ and $l\neq0$.
\item $\lambda_{k,l}=\overline{\lambda}_{-k\oplus N,-l\oplus N}$,
so apart from  $\lambda_{00}$, for an even $N$ there exist three other eigenvalues,
$\lambda_{\frac{N}{2},0}$, $\lambda_{0,\frac{N}{2}}$, and
$\lambda_{\frac{N}{2},\frac{N}{2}}$ which are always real.
\item As the entries of $H$ are unimodular and the eigenvalues of the superoperator
$\Phi_{\vec p}$ are their convex combinations, they belong to the unit disk,
 $|\lambda_{k,l}|\leq1$ for any $k,l$.
\end{enumerate}
\end{remark}

  \subsection{Lindblad generators}
Knowing all we need about the spectral form of quantum
channels and their dynamical maps, consider now a special
\emph{family} of dynamical semigroups
$\Lambda_{t_\mu}=\e^{t_\mu\mathcal{L}_\mu}$, defined by the
non-negative interaction time  $t_\mu$ and the Lindblad
generator $\mathcal{L}_\mu$ specified by a single jump
operator $L_i=U_\mu\delta_{i\mu}$.
Due to Eq.~\eqref{lindblad2},
the generator $\c{L}_{\mu}$ is
\begin{equation}\label{lmu}
\mathcal{L}_\mu=U_\mu\otimes\overline{U}_\mu-\Id_{N^2}.
\end{equation}
Note that for the case $\mu=0$ Lindblad generator is trivial,
$\c L_0=0$.
By the definition of $\mathcal{L}_{\mu}$
and Lemma~\ref{commutativity},
the Lindblad generators do commute,
$[\mathcal{L}_\mu ,\mathcal{L}_\nu]=0$.
The commutativity of Lindblad
generators implies commutativity on their respective
dynamical semigroups,
$\Lambda_{t_\mu}\Lambda_{t_\nu}=
\Lambda_{t_\nu}\Lambda_{t_\mu}$.
So it is convenient to write:
\begin{eqnarray}
\label{lt}
\Lambda_{t_0}\Lambda_{t_1}\dots\Lambda_{t_{N^2-1}}&=&
\e^{t_0\c L_0} \e^{t_1\mathcal{L}_1}\dots\e^{t_{N^2-1}\mathcal{L}_{N^2-1}}=
\e^{\sum_\mu t_{\mu}\mathcal{L}_{\mu}}=\e^
{t\mathcal{L}}=\Lambda_{t}.
\end{eqnarray}
Here $t=\sum_{\mu} t_\mu$ and
\begin{equation}\label{p'}
\mathcal{L}=\sum_{\mu} p'_\mu\mathcal{L}_\mu,\ \ \rm{where:} \ \ \ p'_\mu=\frac{t_\mu}{t}.
\end{equation}
Regarding that $\c L_0=0$, $t_0$ has no physical meaning and
one can set it to zero, $t_0=0$, which means the first component of the
$N^2$ dimensional probability vector $\overrightarrow{p'}$
 is zero, $p'_0=0$.
Above representation of the semigroup $\Lambda_t$ implies
the following statement.
\begin{corollary}
\label{logconv}
For any dimension $N$  the set ${\cal A}_N^Q$
of quantum channels accessible by a  dynamical semigroup is
\emph{log convex}: any generator $\c{L}$ in the exponent can be represented by  a
convex combination of $N^2-1$ generators $\c{L}_{\mu}$.
\end{corollary}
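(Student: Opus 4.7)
The plan is to unpack Corollary~\ref{logconv} as a direct harvesting of the construction carried out in \eref{lt}--\eref{p'}. Log-convexity of $\c{A}_N^Q$ means that the set of admissible Lindblad generators (those whose exponentials produce mixed Weyl channels in $\c{A}_N^Q$) forms a convex subset of the real vector space of superoperators; equivalently, the image of a convex set under the exponential map.

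First, I would verify that each single-jump generator $\c{L}_\mu$ from \eref{lmu}, with $\mu=1,\dots,N^2-1$, is a legitimate GKLS generator in the sense of Section~\ref{lindbladchar}: Hermiticity preservation follows from $U_\mu$ being unitary, trace preservation from $U_\mu^\dagger U_\mu=\Id$, and conditional complete positivity from the fact that $\c{L}_\mu$ is presented in canonical form with the single jump operator $L_i=\delta_{i\mu}U_\mu$. Thus each $\Lambda_{t_\mu}=\e^{t_\mu\c{L}_\mu}$ is by itself a valid Lindblad semigroup. Next, invoking Lemma~\ref{commutativity}, the family $\{\c{L}_\mu\}$ is mutually commuting, so the identity in \eref{lt} exhibits any non-negative combination $\sum_\mu t_\mu\c{L}_\mu$ as the generator of the composed semigroup $\prod_\mu\Lambda_{t_\mu}$. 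Normalising by $t=\sum_\mu t_\mu$, one recovers $\c{L}=\sum_\mu p'_\mu\c{L}_\mu$ with $\vec{p}\,'$ a probability vector and $p'_0=0$, as in \eref{p'}. This already establishes that the set of admissible generators contains the convex cone spanned by $\{\c{L}_\mu\}_{\mu=1}^{N^2-1}$, and therefore is closed under positive scaling and addition.

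To conclude log-convexity, I would take two elements $\Phi_i=\e^{\c{L}^{(i)}}\in\c{A}_N^Q$ with $\c{L}^{(i)}=\sum_\mu p_\mu^{(i)}\c{L}_\mu$ ($i=1,2$), and form for $\alpha\in[0,1]$ the convex combination $\alpha\c{L}^{(1)}+(1-\alpha)\c{L}^{(2)}=\sum_\mu\bigl[\alpha p_\mu^{(1)}+(1-\alpha)p_\mu^{(2)}\bigr]\c{L}_\mu$, whose coefficients remain non-negative. By the previous paragraph this combination exponentiates to an element of $\c{A}_N^Q$, which is exactly the log-convexity statement. The second assertion---that at most $N^2-1$ generators suffice---is then read off from the fact that only the indices $\mu=1,\dots,N^2-1$ contribute, since $\c{L}_0=0$ adds nothing. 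The only step I expect to require genuine care, rather than bookkeeping, is the converse direction tacitly needed to make ``log-convex'' a statement about $\c{A}_N^Q$ itself rather than about a distinguished parametrisation: one must argue that every Lindblad generator producing a channel of the Weyl form \eref{phi} does decompose on the basis $\{\c{L}_\mu\}$. This should follow from Proposition~\ref{cdphi}, since both $\c L$ and $\Phi=\e^{\c L}$ are simultaneously diagonalised by the Hadamard matrix $H$ of \eref{M}, forcing $\c L$ to live in the span of the $\c{L}_\mu$; this spectral identification is the main technical point underlying the corollary.
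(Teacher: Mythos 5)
Your proof is correct and is essentially the paper's own argument: the corollary is obtained there directly from \eref{lt}--\eref{p'}, where commutativity of the generators $\c{L}_\mu$ (Lemma~\ref{commutativity}) turns any concatenation of the elementary semigroups into $\e^{t\c{L}}$ with $\c{L}$ a convex combination of the $\c{L}_\mu$, so that non-negative combinations of admissible generators remain admissible---exactly the construction you reproduce. The only place you go beyond the paper is the closing remark on the converse; note that Proposition~\ref{cdphi} alone does not force an arbitrary logarithm of a Weyl channel to be diagonal in the Weyl basis (logarithms are non-unique), but the paper sidesteps this by defining accessibility relative to the family \eref{lt}, so the corollary as stated does not depend on that converse.
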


Moreover, it is known that
any concatenation of quantum channels is a quantum channel itself.
However, for the family of semigroups in Eq.~\eqref{lt}, the stronger fact is
that quantum channels gained by an arbitrary concatenation
of dynamical semigroups, $\e^{t_\mu\mathcal{L}_\mu}$, are mixed unitary channels in
the sense of Eq. \eqref{phi}, i.e.
$\Lambda_{t}=\Phi$,  and the order
of dynamical semigroups is not important due to
commutativity. To prove the claim,
by applying Eq. \eqref{lmu}, we obtain $\mathcal{L}=\sum_\mu
p'_\mu U_\mu\otimes\overline{U}_\mu-\Id_{N^2}$,
so the spectral form of the Lindblad generator is:
\begin{eqnarray}\label{lindbladspec}
\mathcal{L}=\frac{1}{N}\sum_{\nu=0}^{N^2-1}(\lambda_\nu^{\prime}-1)\Ket{U_\nu}\Bra{U_\nu},
\end{eqnarray}
where $\lambda_\nu^{\prime}$
 are the eigenvalues of the operator $\sum p'_\mu U_\mu\otimes\overline{U}_\mu$
obtained  by acting with the matrix $H$ defined in Eq. \eqref{M} on the
 probability vector  $p'_\mu$ of rescaled interaction times.
In conclusion, the dynamical semigroup associated with $\mathcal{L}$ has the form:
\begin{equation}\label{ltexplicit}
\Lambda_{t}=\frac{1}{N}\sum_{\nu=0}^{N^2-1}\e^{t(\lambda_\nu^{\prime}-1)}
\Ket{U_\nu}\Bra{U_\nu}.
\end{equation}

Now we will show that the semigroup ~\eqref{ltexplicit} can be written in the form \eqref{phi}
as a convex combination of Weyl unitary matrices.
   The probabilities $p_{\mu}$ in the latter equation
   are obtained by applying $H$ to the eigenvalues of
   $\Lambda_{t}$, and they are proportional to the eigenvalues of
 $\Lambda_{t}^R$, the reshuffled form of  $\Lambda_{t}$.
 This implies that every entry of the probability vector is non negative, as required.
Accordingly, quantum
channels accessible by a dynamical semigroup $\Lambda_{t}$
form the subset $\mathcal{A}_N^Q$ within the simplex $\Delta_{N^2-1}$
of mixed unitary channels in the Weyl basis.
\par
  Note that if $\lambda_{\nu}'$ is real, then  $\e^{t(\lambda_\nu^{\prime}-1)}$, the
  corresponding eigenvalue of $\Lambda_t$,
  is  positive. Also, if $\lambda_{\nu}'$ and
  $\overline{\lambda'}_{\nu}$  form a complex conjugate pair,
   the corresponding eigenvalues of $\Lambda_t$
   are again a  complex conjugate pair.
In this case, at
$t_+=\frac{2n\pi}{|\rm{Im}\left(\lambda'_\nu\right)|}$
and $t_-=\frac{(2n-1)\pi }{|\rm{Im}\left(\lambda'_\nu\right)|}$ for any
natural number $n\in\mathbb{N}$, the equality
$\e^{t(\lambda_\nu^{\prime}-1)}=
\e^{t(\overline{\lambda'}_\nu-1)}$ holds. So
 $\Lambda_t$ becomes a map with a positive degenerate or a negative
 degenerate eigenvalue, respectively, at $t_+$ and $t_-$ .
Finally,
since $|\lambda_\nu^\prime|\leq1$ (see Remark \ref{remark}),
the moduli of the eigenvalues of $\Lambda_t$ decay
exponentially with respect to time $t$ or
remain unchanged and equal to unity.

\begin{theorem}
  Consider a mixed unitary channel represented in the Weyl basis,
    written in the vectorized form $|U_{\mu}\rangle \rangle$,
  \[
    \Phi = \frac{1}{N}\sum_{\mu=0}^{N^2-1} \lambda_{\mu}
    \Ket{U_\mu}\Bra{U_\mu}.
  \]
  In general the eigenvalues $\lambda_{\mu}$ of the superoperator are complex,
  and a
  suitable choice of   $\log\lambda_\mu$ is discussed below.
   The necessary and sufficient
condition for $\Phi$ to be accessible by a dynamical
semigroup \eqref{lt} is that
the sum on the right hand side of Eq. (\ref{time}) is real and positive,
\begin{eqnarray}
\label{time}
  t_\mu=\frac{1}{N^2}\sum_{\nu=1}^{N^2-1}\big(\pr
    H\pr)_{\mu\nu}\log\lambda_\nu\geq0 , 
\end{eqnarray}
for  $\mu=1,2,\ldots,N^2-1$.
Here $H$ denotes the complex Hadamard matrix defined in \eqref{M},
while  $\pr=\Id_{N^2}-\project{00}$ is a projector operator onto $(N^2-1)$-dimensional space
and $\project{00}$ written as a product basis projects $\vec{\lambda}$ to its leading element $\lambda_0=1$.
\end{theorem}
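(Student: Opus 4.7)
The plan is to reduce accessibility to a linear inversion problem in the spectral picture. By Lemma~\ref{commutativity} every semigroup $\Lambda_t$ of the form \eqref{lt} and the target Weyl channel $\Phi$ share the common eigenbasis $\{\Ket{U_\mu}\}$, so the equality $\Lambda_t=\Phi$ reduces, index by index, to matching eigenvalues on this basis. The whole theorem should then drop out by applying the Hadamard identity $H^2=N^2\Id$ from Proposition~\ref{cdphi}.

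First I would impose the spectral matching. From \eqref{ltexplicit} one has $\Lambda_t=\frac{1}{N}\sum_\nu e^{t(\lambda'_\nu-1)}\Ket{U_\nu}\Bra{U_\nu}$, while $\Phi$ has the analogous form with eigenvalues $\lambda_\nu$. Adopting the convention $t_0=0$ (justified by $\mathcal{L}_0=0$) and using Proposition~\ref{cdphi} to write $t\lambda'_\nu=\sum_\mu H_{\nu\mu}t_\mu$, together with $H_{\nu 0}=1$, the equality $\Lambda_t=\Phi$ becomes, for each $\nu\ge 1$,
\begin{equation*}
\log\lambda_\nu \;=\; t(\lambda'_\nu-1) \;=\;\sum_{\mu=1}^{N^2-1}(H_{\nu\mu}-1)\,t_\mu,
\end{equation*}
which is a real-linear system for the unknowns $t_1,\dots,t_{N^2-1}$ once a branch of the logarithm has been chosen.

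Next I would invert this system by exploiting the Hadamard identities $H^2=N^2\Id$ and $H\vec 1=N^2\vec e_{00}$; the latter follows from the same geometric sum $\sum_l\omega_{_N}^{ml}=N\delta_{m0}$ that underpins Proposition~\ref{cdphi}. Multiplying the matching equation by $H_{\mu'\nu}$ and summing over $\nu\ge 1$, the combination $\sum_{\nu\ge 1}H_{\mu'\nu}(H_{\nu\mu}-1)$ collapses to $(H^2)_{\mu'\mu}-H_{\mu'0}H_{0\mu}-(H\vec 1)_{\mu'}+H_{\mu'0}=N^2\delta_{\mu'\mu}$ for $\mu'\ge 1$, since the $-1$ from $H_{\mu'0}H_{0\mu}$ cancels exactly the offset $\sum_{\nu\ge 1}H_{\mu'\nu}=-1$. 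This yields
\[
t_{\mu'}\;=\;\frac{1}{N^2}\sum_{\nu\ge 1}H_{\mu'\nu}\log\lambda_\nu,
\]
which is precisely \eqref{time} in components, since $(\pr H\pr)_{\mu\nu}$ agrees with $H_{\mu\nu}$ exactly when $\mu,\nu\ge 1$. Sufficiency then follows: if the resulting $t_\mu$ are non-negative, the operator $\mathcal{L}=\sum_\mu(t_\mu/t)\mathcal{L}_\mu$ is a legitimate GKLS generator (each $\mathcal{L}_\mu$ has the single jump operator $U_\mu$), and by construction the spectrum of $e^{t\mathcal{L}}$ matches that of $\Phi$ on the common eigenbasis, so $e^{t\mathcal{L}}=\Phi$. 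Necessity is immediate from the invertibility of the linear system and the fact that interaction times must be real and non-negative.

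The main obstacle will be the multi-valued complex logarithm. For real positive $\lambda_\nu$ the principal branch is forced, but complex conjugate pairs $\lambda_\nu=\overline{\lambda_{-\nu}}$ admit $2\pi\mathbb{Z}$-shifts subject to the compatibility condition $\log\lambda_{-\nu}=\overline{\log\lambda_\nu}$, which is required for the right-hand side of \eqref{time} to be real. Accessibility must therefore be interpreted as: \emph{some} branch choice compatible with this conjugation yields $t_\mu\ge 0$ for every $\mu\ge 1$. Extra care is needed at the degenerate times $t_\pm$ identified below \eqref{ltexplicit}, where distinct branches coalesce and the non-uniqueness of $\log$ of a matrix mentioned earlier becomes most delicate.
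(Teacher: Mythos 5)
Your proposal is correct and follows essentially the same route as the paper: reduce $\Lambda_t=\Phi$ to the spectral matching $\log\lambda_\nu=\sum_\mu(H_{\nu\mu}-1)t_\mu$ on the common eigenbasis $\{\Ket{U_\mu}\}$, then invert this linear system on the subspace orthogonal to the leading eigenvector using the Hadamard property $H^2=N^2\Id$ (the paper phrases this inversion as the identity $(\pr H\pr)(\pr H'\pr)=N^2(\Id_{N^2}-\project{00})$ with $H'=H-W$, which is the same computation you carry out by direct summation). Your treatment of sufficiency and of the branch ambiguity of the logarithm likewise matches the paper's argument and its subsequent remarks on the integers $M_{k,l}$.
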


Before proceeding to the proof note that if there exists a single real
negative eigenvalue of $\Phi$, its logarithm is complex, so the above
condition  cannot be satisfied, hence the map is not accessible.
On the other hand, if negative eigenvalue is degenerated
(an even number of times) then the imaginary parts of log$\lambda_{\nu}$
can cancel and the sum  (\ref{time}) can be real and positive.
The fact that a one-qubit map with a negative degenerated
eigenvalue of the superoperator can be accessible by a semigroup
was already reported in  \cite{DZP18}.

\begin{proof}
To prove the \enquote{if} part, we will assume $\Phi$ is accessible by a
dynamical semigroup~\eqref{lt}, $\Phi=\Lambda_t$ for some interaction
time $t\geq0$, and apply Eq. \eqref{ltexplicit}:
\begin{eqnarray}\label{mprime}
\log\lambda_\mu=(\lambda_\mu^\prime-1)t=
(\sum_\nu H_{\mu\nu}p'_\nu-1)t=
\sum_\nu\left(H_{\mu\nu}-1\right)t_\nu=
\sum_\nu H_{\mu\nu}^\prime t_\nu,
\end{eqnarray}
where $H'=H-W$ with $W_{\mu\nu}=1$. Recalling the definition
of matrix $H$, see Eq.~\eqref{M}, it is clear that
the entries of its first row and column are equal to $1$, so
the corresponding
entries of $H^\prime$ vanish. Consequently,  $H^\prime$ has
the same structure of a projector onto a
$(N^2-1)$-dimensional subspace,
in other words, $H^\prime=\pr
H^\prime\pr$. This means that $H'$ is not invertible,
however, one
can define its inverse on the $(N^2-1)$-dimensional subspace  $\Pi$.
We proceed to show that this inverse exists and is equal to $\pr H\pr$, up
to the constant $N^2$,
\begin{eqnarray}
\nonumber(\Pi H\Pi)(\Pi H^\prime\Pi)&=&\sum_{\stackrel{\scriptstyle m n k l}{ m'n'k'l'}}
\omega_{_N}^{ml-nk}  
\left(\omega_{_N}^{m'l'-n'k'} 
-1\right)\delta_{km'}\delta_{ln'}\outprod{mn}{k'l'}\\ \nonumber&-&
\sum_{\stackrel{\scriptstyle m n }{ m'n'k'l'}}
\left(\omega_{_N}^{m'l'-n'k'} 
-1\right)\left(\delta_{m'0}\delta_{n'0}\outprod{mn}{k'l'}+\delta_{mm'}\delta_{nn'}\outprod{00}{k'l'}\right)\\ &+&\sum_{\stackrel{\scriptstyle m' n' }{ k'l'}}
\left(\omega_{_N}^{m'l'-n'k'}
-1\right)\delta_{m'0}\delta_{n'0}\outprod{00}{k'l'}=N^2\left(\sum_{mn}\project{mn}-\project{00}\right).
\end{eqnarray}
By multiplying both sides of  Eq. \eqref{mprime}
by $\Pi H\Pi$,
one gets the equality in Eq.~\eqref{time}. Because
$t_\mu\geq 0$ for $\mu=1,\dots,N^2-1$, we completely recover
Eq~\eqref{time}. Note that Eq. \eqref{mprime} implies
$\log\overline{\lambda}_\mu=\overline{\log\lambda_\mu}$.

To prove \enquote{then} part, we should  show that if inequalities \eqref{time}
 between eigenvalues hold, the channel is accessible by  dynamical
semigroups~\eqref{lt}. This fact is actually obvious as we have explicitly
 introduced the interaction time and this completes the proof.
\end{proof}

Note that the above result allows us not only to
check, whether a given channel is accessible,
but if the answer is positive, one can also construct the desired semigroup.
For any probability vector  ${\vec p}$,
determining the mixed unitary channel  (\ref{mixeduni}),
we find the complex vector, $\vec{\lambda}=H\vec{p}$,
check if conditions (\ref{time}) are satisfied,
and if it is the case, we use the non-negative vector $\vec t$
of interaction times to write the semigroup (\ref{lt}).

Moreover, as it should be, Eq.~\eqref{time} is equivalent of existence
of a proper generator for the quantum channel $\Phi$
 that satisfies properties mentioned in the section \ref{lindbladchar}.
For accessible channels from $\c A^Q_N$, the (non-real) logarithm of $\Phi$
providing such a generator is equal to:
\begin{eqnarray}\label{logphi}
\log\Phi=\frac{1}{N}\sum_\mu\log{\lambda_\mu}\Ket{U_\mu}\Bra{U_\mu}.
\end{eqnarray}
The above equation fulfils trace preserving condition and by considering
$\log\overline{\lambda}_\mu=\overline{\log\lambda_\mu}$,
 Hermiticity preserving property  also holds.
Now, we  need to demonstrate that  $\log\Phi$ is
conditionally completely positive.
Note that
$\left(\log\Phi\right)^R=\sum_\mu
C_\mu\Ket{U_\mu}\Bra{U_\mu}$, where $C_\mu=\frac{1}{N^2}\sum_\nu
H_{\mu\nu}\log{\lambda_\nu}$, and that
the maximally entangled state
$\ket{\psi_+}$ is equal to $\frac{1}{\sqrt{N}}\Ket{U_0}$ where
$U_0=\Id_N$. So conditionally completely positivity for
$\log\Phi$ defined by Eq. \eqref{logphi}  means that all of
 eigenvalues of its reshuffled form are positive
besides the one associated with $\Ket{U_0}$, i.e.
$C_\mu\geq0$ for  $\mu=1,\dots,N^2-1$.
Observing that $\log\lambda_0 =0$, we get
 $\sum_\nu (\Pi H\Pi)_{\mu\nu}\log\lambda_\nu\geq0$ which is satisfied
 as we assumed $\Phi\in\c A^Q_N$.

Thus far,  we have shown which generator for channels in $\c A_N^Q$
is suitable, however, for logarithm of $\lambda_\nu$ one has
 $\log\lambda_\nu=\log
r_\nu+i\theta_\nu+i2\pi M_\nu$ with
$M_\nu\in\mathbb{Z}$ and $\theta_\nu\in[-\pi,\pi]$.
Note that both $\pi$ and $-\pi$ are included in the interval,
so for a negative degenerate eigenvalue one can assign $\pi$
to one of them and $-\pi$ to another one.
 Recalling the condition imposed on the logarithm of
  eigenvalues, one can  verify (with two indices)
 $M_{-k\oplus N,-l\oplus N}=-M_{k,l}$. This implies:
\begin{equation}\label{eneq1}
 t_{m,n}=\frac{1}{N^2}\sum_{k,l}H_{\stackrel{\scriptstyle
 mn}{kl}}\left(\log{r_{kl}}+i\theta_{kl}+i2\pi M_{k,l}\right)\geq 0,
 \end{equation}
 for $m,n\in\{0,\cdots,N-1\}$ which are not simultaneously zero,
 and so for $m'=-m\oplus N$ and $n'=-n\oplus N$:
\begin{equation}\label{eneq2}
 t_{-m\oplus N,-n\oplus N}=
 \frac{1}{N^2}\sum_{k,l}H_{\stackrel{\scriptstyle
 mn}{kl}}\left(\log{r_{kl}}-i\theta_{kl}
 -i2\pi M_{k,l}\right)\geq 0.
\end{equation}
 These two relations result in the following bounds:
 \begin{equation}
 -\sum_{k,l}H_{\stackrel{\scriptstyle
 mn}{kl}}\left(\log{r_{kl}}+i\theta_{kl}\right)\leq
 i2\pi\sum_{k,l}H_{\stackrel{\scriptstyle
 mn}{kl}} M_{k,l}\leq\sum_{k,l}H_{\stackrel{\scriptstyle
 mn}{kl}}\left(\log{r_{kl}}-i\theta_{kl}\right).
 \end{equation}

In the next Remark we summarize further properties of the accessible
channels fulfilling relation \eqref{time} and properties of $M_{k,l}$.

\begin{remark}
Regarding this fact that $\lambda'_\nu$s themselves
are eigenvalues  of a quantum Weyl
channel--see Eq.~\eqref{lindbladspec},
so they have the properties mentioned in Remark \ref{remark},
one can achieve the following results for accessible maps,
 $\Phi=\Lambda_t$:
\begin{enumerate}
\item As it is mentioned earlier in this section for a real $\lambda_{\nu}'$
dynamical semigroup
$\Lambda_t$ has a positive eigenvalue. Due to Remark \ref{remark}
we know for an even $N$  there are three real subleading eigenvalues
$\lambda'_\nu$.
 Consequently, for an even $N$ the essentially real subleading eigenvalues of
 any accessible map $\Phi\in\c A_N^Q$   are necessarily positive,
  i.e. $0\leq\lambda_{mn}\leq1$ for
 $\{(m,n)\}=\{(0,\frac{N}{2}),(\frac{N}{2},0),(\frac{N}{2},\frac{N}{2})\}$.

\item Adding Eq. \eqref{eneq1} to Eq. \eqref{eneq2}, we get the following
necessary condition for eigenvalues of an accessible map:
 \begin{equation}\label{ncon}
 \sum_{k,l}H_{\stackrel{\scriptstyle
 mn}{kl}}\log{r_{kl}}\geq0,
 \end{equation}
 where $r_{kl}$ is the modulus of $\lambda_{kl}$, i.e.
 $\lambda_{kl}=r_{kl}\e^{i\theta_{kl}}$.

 \item If $\forall m,n:\ \ \sum_{k,l}H_{\stackrel{\scriptstyle
 mn}{kl}}\left(\log{r_{kl}}+i\theta_{kl}\right)\geq0$, we do not need to
  check any $M_{k,l}$ to see if the channel is accessible.
  However, if there exist $M_{k,l}$ for which inequality
  \eqref{eneq1} holds for any $m,n$, then the associated quantum channel
  is accessible not by a unique Lindblad generator.

    \item If $\sum_{k,l}H_{\stackrel{\scriptstyle
 mn}{kl}}\left(\log{r_{kl}}+i\theta_{kl}\right)\leq0$ for
 both $(m,n)=(m',n')$ and $(m,n)=(-m'\oplus N,-n'\oplus N)$ the
  associated channel is not accessible.

  \item Applying Eq.~\eqref{mprime} for an even $N$,
   we can show $M_{k,l}=0$ for
  $\{(k,l)\}=
  \{(0,\frac{N}{2}),(\frac{N}{2},0),(\frac{N}{2},\frac{N}{2})\}$.

\end{enumerate}
\end{remark}

Eq.~\eqref{mprime} shows the eigenvalues of a quantum
channel belonging to a dynamical semigroup as an explicit
function of time. By applying matrix $H$, we can transform
eigenvalues into probabilities and gain the following
relation representing the probabilities as an explicit
function of time.
\begin{equation}\label{ptime}
p_\mu=\frac{1}{N^2}\sum_\nu H_{\mu\nu}\e^{\sum_\eta H'_{\nu\eta} t_\eta}.
\end{equation}

Rewriting the previous equation using two indices $kl$ rather than
$\mu$, one has
$p_{kl}=\frac{1}{N^2}\sum_{mn}H_{\stackrel{\scriptstyle kl
}{ mn}}\e^{\sum_{rs}H'_{\stackrel{\scriptstyle mn }{
rs}}t_{rs}}$.
Computing the partial derivative with respect to $t_{ij}$,
we obtain the following useful relations
\begin{eqnarray}\label{prob3}
\nonumber\frac{\partial p_{kl}}{\partial t_{ij}}&=&
\frac{1}{N^2}\sum_{mn}H_{\stackrel{\scriptstyle kl }{ mn}}
H'_{\stackrel{\scriptstyle mn }{ ij}}
\e^{\sum_{rs}H'_{\stackrel{\scriptstyle mn }{ rs}}t_{rs}}=
\frac{1}{N^2}\sum_{mn}H_{\stackrel{\scriptstyle kl }{ mn}}
\left(H_{\stackrel{\scriptstyle mn }{ ij}}-1\right)
\e^{\sum_{rs}H'_{\stackrel{\scriptstyle mn }{ rs}}t_{rs}}\\
 \nonumber &=&\frac{1}{N^2}\sum_{mn}
 H_{\stackrel{\scriptstyle kl }{ mn}}
 H_{\stackrel{\scriptstyle mn }{ ij}}
 \e^{\sum_{rs}H'_{\stackrel{\scriptstyle mn }{ rs}}t_{rs}}-
 \frac{1}{N^2}\sum_{mn}
 H_{\stackrel{\scriptstyle kl }{ mn}}
 \e^{\sum_{rs}H'_{\stackrel{\scriptstyle mn }{ rs}}t_{rs}}\\
 \nonumber
           &=&\frac{1}{N^2}\sum_{mn}H_{\stackrel{\scriptstyle
           kl }{ mn}}H_{\stackrel{\scriptstyle mn }{
       ij}}\e^{\sum_{rs}H'_{\stackrel{\scriptstyle mn }{
   rs}}t_{rs}}-p_{kl}.
\end{eqnarray}
In the next step we get
\begin{eqnarray}\label{prob2}
 \nonumber p_{kl}+\frac{\partial p_{kl}}{\partial t_{ij}}&=&
 \frac{1}{N^2}\sum_{mn}\omega_{_N}^{(kn-ml)} \omega_{_N}^{(mj-ni)}
 \e^{\sum_{rs}H'_{\stackrel{\scriptstyle mn }{ rs}}t_{rs}}\\
 &=&  \nonumber
 \frac{1}{N^2}\sum_{mn}\omega_{_N}^{n(k-i)-m(l-j)}
 \e^{\sum_{rs}H'_{\stackrel{\scriptstyle mn }{ rs}}t_{rs}}
 = \sum_{mn}H_{\stackrel{\scriptstyle -i\oplus k,
 -j\oplus l}{m\ ,\ n}}
 \e^{\sum_{rs}H'_{\stackrel{\scriptstyle mn }{
 rs}}t_{rs}},
\end{eqnarray}
which implies the relation
\begin{equation}\label{prob}
 p_{kl}+\frac{\partial p_{kl}}{\partial t_{ij}}=p_{-i\oplus k, -j\oplus l}.
\end{equation}
This equality  shows that it is enough to know only one
component of the probability vector, say $p_{00}$, to
compute the rest of the probability vector,
$p_{00}+\frac{\partial p_{00}}{\partial
t_{ij}}=p_{-i\oplus N, -j\oplus N}$.
By a proper substitution, this can be applied to
eigenvalues.

 To demonstrate this approach in action consider first the case $N=2$,
 in which the superoperator is expressed by the  Pauli matrices,
  $\Phi=\sum_\mu
 p_\mu\sigma_\mu\otimes\overline{\sigma}_\mu$. In this case,
  the Hadamard matrix $H$ of order $N^2$ defined in \eqref{phi} reads,
 \begin{equation}
 H=\begin{pmatrix}
 1&1&1&1\\
 1&1&-1&-1\\
 1&-1&1&-1\\
 1&-1&-1&1
  \end{pmatrix}.
 \end{equation}
 By acting with $H$ on the probability vector $\vec{p}=(p_0,p_1,p_2,p_3)^T$, one
 obtains eigenvalues as a function of the probabilities \cite{PRZ19}.
 Furthermore, by applying Eq. \eqref{time} in qubit case, one obtains relation
\begin{equation}
 \begin{pmatrix}
 t_1\\
 t_2\\
 t_3
  \end{pmatrix}=\frac{1}{4}\begin{pmatrix}
 1&-1&-1\\
 -1&1&-1\\
 -1&-1&1
  \end{pmatrix}\begin{pmatrix}

 \log{\lambda_1}\\
\log{\lambda_2} \\
\log{\lambda_3}
  \end{pmatrix}=\frac{1}{4}\begin{pmatrix}

 \log{\frac{\lambda_1}{\lambda_2\lambda_3}}\\
\log{\frac{\lambda_2}{\lambda_1\lambda_3}} \\
\log{\frac{\lambda_3}{\lambda_1\lambda_2}}
\end{pmatrix},
  \end{equation}
 in which  only the `core' of size $N^2-1$  \cite{TZ06} of the complex Hadamard matrix $H$ appears.
Non-negativity of all components of the vector $\vec t
=(t_1,t_2,t_3)$ implies the desired set of three inequalities
 \begin{equation}
 \label{lambda3}
 \lambda_i\geq\lambda_j\lambda_k,
 \ \ {\rm for} \ \
i\neq j\neq k.
 \end{equation}
  These relations
   form the necessary and
    sufficient condition for a quantum channel to be a seed
    for Pauli semigroups \cite{DZP18,PRZ19}. By
    applying Eq. \eqref{ptime}, one can easily
    find the time evolution of the probability
    vector,
 \begin{eqnarray}\label{qubitprob}
   \nonumber p_0&=&\frac{1}{4}\left(1+\e^{-2\left(t_2+t_3\right)}+
   \e^{-2\left(t_1+t_3\right)}+\e^{-2\left(t_1+t_2\right)}\right),
   \\
   \nonumber p_1&=&\frac{1}{4}\left(1+\e^{-2\left(t_2+t_3\right)}-
   \e^{-2\left(t_1+t_3\right)}-\e^{-2\left(t_1+t_2\right)}\right),
   \\
   \nonumber p_2&=&\frac{1}{4}\left(1-\e^{-2\left(t_2+t_3\right)}+
   \e^{-2\left(t_1+t_3\right)}-\e^{-2\left(t_1+t_2\right)}\right),
   \\
   p_3&=&\frac{1}{4}\left(1-\e^{-2\left(t_2+t_3\right)}-
   \e^{-2\left(t_1+t_3\right)}+\e^{-2\left(t_1+t_2\right)}\right),
 \end{eqnarray}
where the total interaction time  reads, $t=\sum_{\mu=1}^3 t_\mu$.

\section{Geometry of the set  $\mathcal{A}_N^Q$ of maps accessible by a semigroup}
\label{sec:three}

The set $\mathcal{A}_N^Q$ of  mixed unitary Weyl quantum channels
acting on $N$-dimensional states and accessible by a
dynamical semigroup is not convex \cite{WECC08}.
In Corollary \ref{logconv} we have shown that
for any dimension $N$  the set ${\cal A}_N^Q\subset {\cal W}_N$
forms a log--convex subset of the probability
simplex  $\Delta_{N^2-1}$ representing  all Weyl mixed unitary channels.
 Another important property of this set is established below.

  \subsection{The set $\mathcal{A}_N^Q$ of accessible maps is star-shaped}

  We show that the log-convex set
$\mathcal{A}_N^Q$ of quantum Weyl channels accessible by a
dynamical semigroup is  star-shaped.

\begin{proposition}
\label{*shap}
The set $\mathcal{A}_N^Q\in\Delta_{N^2-1}$ of mixed unitary
Weyl channels accessible by a semigroup has the
star-shape property with respect to completely depolarizing
channel $\Phi_*$.
\end{proposition}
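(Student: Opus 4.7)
The plan is to parametrize the segment joining an arbitrary accessible channel $\Phi\in\mathcal{A}_N^Q$ to the completely depolarizing channel $\Phi_*$ and then verify the accessibility condition \eqref{time} at every point of the segment. Write $\Phi'=\alpha\Phi+(1-\alpha)\Phi_*$ for $\alpha\in[0,1]$. Since both $\Phi$ and $\Phi_*$ are Weyl mixed unitary channels, so is $\Phi'$, with probability vector $\vec p\,'=\alpha\vec p+(1-\alpha)\vec p^{\,*}$, where $\vec p^{\,*}$ has all entries equal to $1/N^2$. By Proposition~\ref{cdphi}, the eigenvalues of $\Phi'$ equal $\vec\lambda\,'=H\vec p\,'=\alpha\vec\lambda+(1-\alpha)\vec\lambda^{\,*}$; from Remark~\ref{remark} we have $\lambda_0^*=1$ and $\lambda_\mu^*=0$ for $\mu\neq 0$, hence $\lambda'_0=1$ and $\lambda'_\mu=\alpha\lambda_\mu$ for $\mu=1,\dots,N^2-1$.

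Next I would plug these eigenvalues into the criterion \eqref{time}. The key point is that for $\alpha\in(0,1]$ the scalar $\alpha$ is real and positive, so one may unambiguously write $\log\lambda'_\mu=\log\alpha+\log\lambda_\mu$, using for $\log\lambda_\mu$ the same branch that witnesses the accessibility of $\Phi$ (in particular, the relation $\log\overline\lambda_\mu=\overline{\log\lambda_\mu}$ is preserved). Therefore
\begin{equation}
t'_\mu=\frac{1}{N^2}\sum_{\nu=1}^{N^2-1}(\Pi H\Pi)_{\mu\nu}\log\lambda'_\nu
=t_\mu+\frac{\log\alpha}{N^2}\sum_{\nu=1}^{N^2-1}H_{\mu\nu}.
\end{equation}

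The remaining step is a straightforward Hadamard-matrix computation. Because $H$ satisfies $H^2=N^2\Id$, each row of $H$ other than the zeroth is orthogonal to the constant row, i.e.\ $\sum_{\nu=0}^{N^2-1}H_{\mu\nu}=N^2\delta_{\mu,0}$; since $H_{\mu 0}=1$, this gives $\sum_{\nu=1}^{N^2-1}H_{\mu\nu}=-1$ for every $\mu\geq 1$. Substituting,
\begin{equation}
t'_\mu=t_\mu-\frac{\log\alpha}{N^2}\geq t_\mu\geq 0,
\end{equation}
because $\log\alpha\leq 0$ and the hypothesis $\Phi\in\mathcal{A}_N^Q$ guarantees $t_\mu\geq 0$ via \eqref{time}. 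Hence $\Phi'\in\mathcal{A}_N^Q$ for every $\alpha\in(0,1]$, and the case $\alpha=0$ is handled by observing that $\Phi_*$ itself is accessible (it is the $t\to\infty$ limit of any depolarizing semigroup, and its interaction times are recovered as the $\alpha\to 0^+$ limit of the formula above).

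The only delicate point is the multivaluedness of the complex logarithm when $\Phi$ possesses non-real or negative eigenvalues; here one simply fixes the branch chosen to establish $\Phi\in\mathcal{A}_N^Q$ and shifts by the real number $\log\alpha$, so the verification of \eqref{time} for $\Phi'$ reduces to the one already available for $\Phi$. All other steps are purely algebraic consequences of the spectral structure displayed in Eqs.~\eqref{lindbladspec}--\eqref{ltexplicit} and of the basic properties of the Hadamard matrix $H$.
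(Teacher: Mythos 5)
Your proof is correct. The core idea coincides with the paper's: both arguments rest on the uniform shift of interaction times $t_\mu \mapsto t_\mu - N^{-2}\log\alpha \geq t_\mu \geq 0$. The difference lies in the direction of verification. The paper starts from the shifted times and shows \emph{constructively}, by exponentiating $\sum_\mu t'_\mu\mathcal{L}_\mu$ and using the operator identities $\sum_\mu \mathcal{L}_\mu = N^2(\Phi_*-\Id)$, $\Phi_*^2=\Phi_*$ and $\Phi\,\Phi_*=\Phi_*$, that the resulting semigroup element equals $m\Phi+(1-m)\Phi_*$; this avoids any mention of logarithms of eigenvalues. You instead start from the convex combination, observe that its subleading eigenvalues are $\lambda'_\mu=\alpha\lambda_\mu$, and verify the accessibility criterion \eqref{time} via the Hadamard row-sum identity $\sum_{\nu=1}^{N^2-1}H_{\mu\nu}=-1$. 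Your route makes the spectral picture explicit (uniform radial contraction of the subleading spectrum) and connects naturally to the log-convexity and spiral-boundary discussions elsewhere in the paper, at the cost of having to handle the branch of $\log\lambda_\mu$ — which you do correctly, since the shift $\log\alpha$ is real and the witnessing branch for $\Phi$ can be reused. The endpoint $\alpha=0$ is a genuine (if minor) degeneracy in both treatments, as $\log\alpha\to-\infty$; your remark that $\Phi_*$ is reached only as the infinite-time limit is the honest way to state it.
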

\begin{proof}
The aim is to show for a quantum channel
$\Phi\in\mathcal{A}_N^Q$ any convex combination with
$\Phi_\ast$ belongs to $\mathcal{A}_N^Q$,
$\Phi'=m\Phi+(1-m)\Phi_\ast\in \mathcal{A}_N^Q$, where $0\leq
m\leq1$. For any $t_\mu\geq0$  associated with
$\Phi\in\mathcal{A}_N^Q$ through Eq. \eqref{lt}, one can
define the non-negative parameter $ t_\mu^\prime=t_\mu-N^{-2}\log{m}$  for which:
\begin{eqnarray*}
\Phi'&=&\e^{ \sum_\mu t^\prime_\mu\mathcal{L}_\mu}=
\e^{ \sum_\mu t_\mu\mathcal{L}_\mu}
\e^{-\frac{\log(m)}{N^2}\sum_\mu\mathcal{L}_\mu}=
\Phi_t \e^{-\frac{\log(m)}{N^2}
\sum(U_\mu\otimes\overline{U}_\mu-\Id_{N^2})}
\\&=&
\Phi_t\e^{\log(m)\Id_{N^2}-\log(m)\Phi_\ast}=
m\Phi_t\e^{-\log(m)\Phi_\ast} = 
m\Phi_t\big((-1+\frac{1}{m})\Phi_\ast+\Id_{N^2}\big)=m\Phi_t+(1-m)\Phi_\ast.
\end{eqnarray*}
Here we use the fact that all subleading eigenvalues of
$\Phi_\ast$ are zero, and that the composition of any unital
channel with the completely depolarizing channel $\Phi_\ast$ is equal to $\Phi_\ast$.
\end{proof}

Note that the boundaries  of the set $\mathcal{A}_N^Q$
consist of channels accessible by a semigroup with
at least one of the non-trivial ($\mu\neq0$)
 parameters $t_\mu$  in Eq. \eqref{lt} vanishing.
This means that all quantum channels of the form
$\exp\left(\sum_{\mu=1}^{N^2-2}\c L_\mu  t_\mu\right)$
belong to  $\partial {\cal A}_N^Q$,
for any choice of $N^2-2$ out of
$N^2-1$ non-trivial Lindblad generators and
those channels using $N^2-1$
Lindblad generators,
 $\Phi_{\vec{p}}=\exp\left(\sum_{\mu=1}^{N^2-1}\c
L_\mu t_\mu\right)$,
lie inside ${\cal A}_N^Q$. To see that,
note that if  $\Phi_{\vec{p}}$ belongs to $\mathcal{A}_N^Q$, it is
connected to $\Id_{N^2}$ through a trajectory indicated by
non-negative interaction time and each point of this
trajectory corresponds to a channel belonging to
$\mathcal{A}_N^Q\in\Delta_{N^2-1}$. This in turn leads us to
the fact that for any $t\geq0$ corresponding to a point on
the trajectory connecting $\Id_{N^2}$ to  $\Phi_{\vec{p}}$,
the dynamical map is positive, also for small time $t$.
Expanding the dynamical map around $t=0$,
positivity condition of the corresponding Choi matrix, $D_{\Phi}
= \Phi^{R}$, implies:
\begin{equation}\label{boundary}
\left(\Id_{N^2}+\epsilon\dfrac{\mathrm{d}\Phi}{\mathrm{d}t}\big\vert_{t=0}\right)^\mathrm{R}\geq0,
\end{equation}
which  holds for any $\epsilon>0$ small enough
 for any $\Phi\in\mathcal{A}_N^Q$. Applying Eq. \eqref{lt}, it
 is easy to see that the above equation is equal to
 $\left(\Id_{N^2}+\epsilon\mathcal{L}\right)^\mathrm{R}=
\left(\Id_{N^2}+\epsilon(\sum_\mu p'_\mu
U_\mu\otimes\overline{U}_\mu-\Id_{N^2})\right)^\mathrm{R}$
which must remain positive
for small $\epsilon$.
 Due to Proposition \ref{cdphi},  its eigenvalues belong to the set
$\{N\left(1-\epsilon(1-p'_0)\right),\epsilon Np'_1,\dots,\epsilon Np'_{N^2-1}\}$ and each of which should be non-negative.
 Noting that $\epsilon$ is small, the first eigenvalue is always
positive.  However,  if $p'_\mu=0$ for any
$\mu\in\{1,2,\dots,N^2-1\}$, the corresponding eigenvalue is
zero which means that  Eq. \eqref{boundary} is saturated,
so that this subspace belongs to the boundary of $\mathcal{A}_N^Q$.
Due to Eq.~\eqref{p'} this fact implies
that  the boundary of $\mathcal{A}_N^Q$ is achieved
if we set to zero the interaction time $t_\mu$  associated with
any selected generator $\mathcal{L}_\mu$.
The above statement is valid for any $N$.
In particular, it implies Proposition (2) shown in \cite{PRZ19} for the case of a single  qubit.

\section{Hyper--decoherence of quantum Weyl channels}
\label{sec:six}

\subsection{Corresponding circulant bistochastic transition matrices}

 As discussed in Sec.~\ref{sec:one}, one can assign a classical stochastic
  transition matrix $T$ to any quantum channel by decohering and reshaping
  the dynamical matrix, $D_{\Phi}=\Phi^R$,
 so that $T$ forms  a minor of order $N$ of the superoperator $\Phi$ of order $N^2$.
  If $\{K_n\}$ represents the set of Kraus operators related to a channel,
 it is straightforward to see the associated stochastic map is gained by
 $T=\sum_nK_n\odot\overline{K}_n$. Thus for the Weyl channels defined by
 Eq.~\eqref{phi} the associated bistochastic transition
 $T$ is given by
 \begin{equation}
 T(\Phi_{\vec{p}})=\sum_{k,l=0}^{N-1}p_{kl}U_{kl}\odot\overline{U}_{kl}=
 \sum_{k,l=0}^{N-1}p_{kl}X^kZ^l\odot X^k\overline{Z^l}=
 \sum_{k,l=0}^{N-1}p_{kl}X^k.
 \end{equation}
 To prove above equation we used the fact that for two arbitrary matrices
$A$ and $C$ and two diagonal matrices $B$ and $D$, the equality
$AB\odot C D=(A\odot C)(B\odot D)$ holds.
Since the operators $Z^l$ do not influence the terms
contributing to the transition matrix $T$,
 it is convenient to introduce an
$N$-point probability vector $\overrightarrow{q}$,
given by the marginal  of the vector $\overrightarrow{p}$ of length $N^2$,
\begin{equation}
\label{q}
 q_k=\sum_l p_{kl}.
\end{equation}
Note that   $\overrightarrow{q}$ can also be considered as a reduction of
$\overrightarrow{p}$,  analogous to partial trace.
Thus  the transition matrix $T_q={\cal D}_h(\Phi_p)$
corresponding the Weyl channel  (\ref{mixeduni})
has the following form
  \begin{equation}
  \label{T}
  T_q(\Phi)=\sum_{k=0}^{N-1}q_kX^k \in {\cal C}_N.
  \end{equation}

The set ${\cal C}_N$ of classical transitions defined by Eq. \eqref{T}
consists of \emph{circulant} bistochastic matrices of order
$N$, i.e., bistochastic matrices whose rows (columns)
are cyclic permutations of its first row (column).
Hence the set ${\cal C}_N$  forms an  $(N-1)$-dimensional
simplex $\Delta_{N-1}=  \{\Id_N, X, X^2, ..., X^{N-1}\}$ embedded
inside the Birkhoff polytope ${\cal B}_N^C$ of dimension $(N-1)^2$
which contains all permutation matrices of order $N$.
Note that the shift operator $X$, also written $X_N$,
is actually the $N$-element cycle permutation,
thus hyper-decocherence of the Weyl channels
gives the set of circulant bistochastic matrices,
${\cal D}_h({\cal W}_N)={\cal C}_N$.

\begin{lemma}\label{com.xk}
Arbitrary powers of cycle permutation do commute,  $[X^k,X^{k'}]=0$.
This implies that all (not necessarily
convex) combinations of $X^k$ are compatible, so the transitions matrices
in Eq.\eqref{T} satisfy relation
$T(\Phi_{\vec{p}})T(\Phi_{\vec{p}'})
=T(\Phi_{\vec{p}'})T(\Phi_{\vec{p}})$.
\end{lemma}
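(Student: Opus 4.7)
The plan is to observe that this lemma is essentially a consequence of the elementary fact that any matrix commutes with its own powers, together with bilinearity of the product. No heavy machinery is needed, and the only thing to check is that the commutation descends cleanly from the generators $X^k$ to the linear combinations that appear in \eqref{T}.

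First I would establish the commutativity of the generators directly. Since $X$ is a single matrix (the $N$-cycle permutation), powers of $X$ satisfy $X^k X^{k'} = X^{k+k'} = X^{k'+k} = X^{k'} X^{k}$, where the exponent addition can be taken modulo $N$ because $X^N = \Id_N$. This immediately gives $[X^k, X^{k'}] = 0$ for all integers $k, k'$, and by bilinearity of the commutator, any two linear combinations $\sum_k a_k X^k$ and $\sum_{k'} b_{k'} X^{k'}$ commute as well; the coefficients need not be probabilities.

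Next I would specialize this to the transition matrices of the form \eqref{T}. Writing $T(\Phi_{\vec{p}}) = \sum_{k=0}^{N-1} q_k X^k$ and $T(\Phi_{\vec{p}'}) = \sum_{k'=0}^{N-1} q'_{k'} X^{k'}$, with $\vec{q}$ and $\vec{q}'$ being the marginals \eqref{q} of $\vec{p}$ and $\vec{p}'$, one directly computes
\begin{equation}
T(\Phi_{\vec{p}})\, T(\Phi_{\vec{p}'}) = \sum_{k,k'} q_k q'_{k'} X^{k+k'} = \sum_{k,k'} q'_{k'} q_k X^{k'+k} = T(\Phi_{\vec{p}'})\, T(\Phi_{\vec{p}}),
\end{equation}
completing the proof. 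There is no real obstacle here; the statement is essentially a restatement of the fact that the algebra generated by a single cyclic permutation is commutative (indeed, it is isomorphic to the group algebra of $\mathbb{Z}_N$), and the lemma is recorded mainly to emphasize this feature before it is exploited in the subsequent analysis of the classical accessible set $\mathcal{A}_N^C$.
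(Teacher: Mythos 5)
Your proof is correct and follows exactly the reasoning the paper intends: the paper states this lemma without a separate proof, treating the commutation of powers of a single matrix and its bilinear extension to combinations as self-evident, which is precisely what you spell out. Nothing is missing.
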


\subsection{Spectral properties of classical and quantum maps }

The Fourier matrix of order $N$, with entries
 $F_{mn}=\exp(-i 2 mn \pi/N)$,
diagonalizes any combination of
powers $X_N^k$ of the cyclic permutation matrix $X_N$.
The common eigenbasis is given by
$\ket{x_j}=\frac{1}{\sqrt{N}} F^\dagger\ket j$. Also eigenvalues and
coefficients of expansion can be transformed to each other applying $F$
and $F^\dagger$. For instance,  for a transition matrix $T$  in Eq. \eqref{T},
its eigenvalues $\xi_i$ and the probabilities $q_k$ satisfy relations,
$\vec{\xi}=F\vec{q}$ and
$\vec{q}=\frac{1}{N}F^\dagger\vec{\xi}$.
To see this let us rewrite $T$ in the basis $\{\ket{x_j}\}$
and use the notation, $\omega_N=\exp(2 \pi i/N)$,
\begin{eqnarray}
\label{fourcirc}
\nonumber\bra{x_m}T\ket{x_n}&=&
\frac{1}{N}\sum_kq_k\sum_x\sum_i\sum_j
\omega_{_N}^{-im}\omega_{_N}^{jn}
\bra i x\oplus k\rangle\bra x j\rangle\\&=&
\frac{1}{N}\sum_kq_k\omega_{_N}^{-mk}\sum_x
\omega_{_N}^{x(n-m)}=\sum_k\omega_{_N}^{-mk}q_k\delta_{mn}.
\end{eqnarray}

As any circulant bistochastic transition matrix $T\in {\cal C}$
contains non-negative real entries and
can be diagonalized by the Fourier matrix $F_N$,
the following spectral properties hold:
\begin{enumerate}[i)]
\item  Any circulant bistochastic matrix $T$ has the leading (Frobenius--Perron) eigenvalue
 $\xi_0=1$.

\item   For the matrix  $T_\ast$, located at the center of the
   Birkhoff polytope ${\cal B}_N$ of bistochastic matrices and corresponding to the flat probability vector,
$q_k=\frac{1}{N}$  for $k=0,\dots,N-1$,
all subleading eigenvalues vanish, $\xi_i=0$ for $i\neq 0$.

\item  For a cyclic permutation matrix  $X_N$ its spectrum belongs to the unit circle
  and forms the $N$--sided regular polygon.

\item  Eigenvalues of an arbitrary $T$ are either real or appear in complex conjugate pairs.

\item  $\xi_{-i\oplus N}=\overline{\xi}_i$,
so for an even $N$ apart from the leading eigenvalue $\xi_0$,
 there exists another real eigenvalue $\xi_{\frac{N}{2}}$.

\item  As the modulus of the entries of $F$ is unity and eigenvalues
 of $T$ are just a convex combination of these elements, so
 the spectrum belongs to the unit disk,
$|\xi_i|\leq1$ for any $i$. Furthermore, as the entries of $F_N$
are given by powers of the $N$-th root of unity, $\omega_N=e^{i 2\pi/N}$,
then all eigenvalues $\xi_i$ belong to the regular $N$--sided polygon
containing  $\xi_0=1$.
\end{enumerate}

The above properties of the circulant bistochastic matrices -- classical maps --
can be extended for certain quantum maps.

\begin{proposition}
\label{prop-channel-hyperdeco}
Consider a Weyl channel $\Phi_{\vec p}$ from the simplex
$\Delta(\Phi_{\mathbb{I}}, \Phi_ X, \dots, \Phi_{X^{N-1}})$.
The set of eigenvalues of the superoperator $\Phi_{\vec p}$ is then equal (up to degeneracy)  to the spectrum of the corresponding circulant bistochastic matrix  $T(\Phi_{\vec p})$.
Moreover, the spectrum of $\Phi_{\vec p}$ is $N$-fold degenerate.
\end{proposition}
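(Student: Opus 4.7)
The plan is to reduce this to a direct comparison between the Hadamard-transform formula for the superoperator eigenvalues (Proposition~\ref{cdphi}) and the Fourier-transform formula for the circulant matrix eigenvalues (Eq.~\eqref{fourcirc}), exploiting the fact that the hypothesis kills the $Z^l$ part of the Weyl basis.

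First I would unpack the hypothesis: a channel in the simplex $\Delta(\Phi_{\mathbb{I}},\Phi_X,\dots,\Phi_{X^{N-1}})$ corresponds to a probability vector $\vec p$ supported only on indices $(k,0)$, i.e.\ $p_{kl}=q_k\,\delta_{l,0}$, where $\vec q$ is the marginal from Eq.~\eqref{q}. Then I would apply the general formula $\vec\lambda=H\vec p$ from Proposition~\ref{cdphi}, with $H_{\substack{mn\\kl}}=\omega_N^{ml-kn}$, to obtain
\begin{equation*}
\lambda_{mn}=\sum_{k,l}\omega_N^{ml-kn}\,q_k\,\delta_{l,0}=\sum_{k}q_k\,\omega_N^{-kn}.
\end{equation*}
The key observation is that the right-hand side is independent of the index $m$, so as $m$ ranges over $\{0,\dots,N-1\}$ each distinct value of $\lambda_{mn}$ is produced $N$ times. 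This immediately gives the claimed $N$-fold degeneracy of the spectrum of $\Phi_{\vec p}$.

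Next I would compute the spectrum of $T(\Phi_{\vec p})=\sum_k q_k X^k$ using Eq.~\eqref{fourcirc}, which yields the eigenvalues $\xi_n=\sum_k q_k\,\omega_N^{-kn}$ for $n=0,\dots,N-1$. Comparing with the formula for $\lambda_{mn}$ above, one reads off $\lambda_{mn}=\xi_n$ for every $m$, hence the set of eigenvalues of $\Phi_{\vec p}$ coincides, as a set, with $\{\xi_0,\dots,\xi_{N-1}\}=\mathrm{spec}(T(\Phi_{\vec p}))$, proving the first claim.

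There is no real obstacle here: the content of the proposition is essentially the statement that restricting the Weyl-channel Hadamard matrix $H$ to the sub-simplex where $l=0$ collapses the two-index Fourier transform on $\mathbb{Z}_N\times\mathbb{Z}_N$ to the one-index Fourier transform on $\mathbb{Z}_N$ appearing in the diagonalisation of circulant matrices, with the trivial character in the $m$-variable producing the $N$-fold degeneracy. The only care needed is to keep the sign/index conventions of $H$ in Eq.~\eqref{M} consistent with those of $F$ used in Eq.~\eqref{fourcirc}, so that the two expressions match term by term rather than merely as unordered multisets.
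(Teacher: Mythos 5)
Your proposal is correct and follows essentially the same route as the paper's own proof: both substitute $p_{kl}=q_k\delta_{l,0}$ into the Hadamard-transform formula $\lambda_{mn}=\sum_{kl}\omega_N^{ml-nk}p_{kl}$ from Proposition~\ref{cdphi}, observe that the result $\sum_k q_k\omega_N^{-nk}$ is independent of $m$ (giving the $N$-fold degeneracy), and identify it with the Fourier eigenvalue $\xi_n$ of the circulant matrix $T(\Phi_{\vec p})$. No gaps.
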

\begin{proof}
Let $p_{kl}$ denote the probability vector of size $N^2$  defining  the Weyl
channel $\Phi_{\vec p}$. Due to Proposition~\ref{cdphi}
the eigenvalues of the superoperator $\Phi_{\vec p}$ read
\begin{equation}
    \lambda_{mn} = \sum_{kl} \omega_N^{ml-nk} p_{kl} .
\label{lambdamn}
\end{equation}
Since hyper-decoherence transforms this channel into
the classical map,  $T(\Phi_{\vec p}) = \sum X^k q_k$,
with $q_k = \sum_l p_{kl}$
the spectrum of the circulant bistochastic matrix
$T(\Phi_{\vec p})$ can be expressed
by the Fourier matrix $F$ of size $N$,
\[
\xi_n = \sum_k F_{nk} q_k,
\]

\noindent
Thus Proposition~\ref{prop-channel-hyperdeco} is equivalent to showing that $\forall m$, $\lambda_{mn} = \xi_n$.
We start with  expression (\ref{lambdamn})
 for eigenvalues $\lambda_{mn}$
 and notice that since  $\Phi_{\vec p}$ belongs to the
 the simplex
$\Delta(\Phi_{\mathbb{I}}, \Phi_ X, \dots, \Phi_{X^{N-1}})$
and $p_{kl} \neq 0$ only for $l = 0$, then
\begin{align*}
    \lambda_{mn} &= \sum_{kl} \omega_N^{ml-nk} p_{kl} = \sum_k \omega_N^{-nk} p_{k0} \\
    &= \sum_k \omega_N^{-nk} q_k = \sum_k F_{nk} q_k = \xi_n.
\end{align*}
In the second row we used that $q_k = \sum_l p_{kl} = p_{k0}$ which is due to the fact that
$\Phi_{\vec p}$ belongs to the simplex $\Delta$.
This shows that both set of eigenvalues are equal.
\end{proof}
\begin{proposition}
  Consider an arbitrary Weyl  channel $\Phi_{\vec p}$.
  There exists a change of  basis  such that  $\Phi_{\vec p}$ is
  block diagonal, each block of order $N$ forms a circulant
  matrix.  Thus the spectrum of the superoperator $\Phi_{\vec p}$
  is contained in the regular $N$-polygon, with its rightmost vertex being $1$.
  In addition, one of these blocks forms matrix $T(\Phi_{\vec p})$,
  so the spectrum of  $\Phi_{\vec p}$ contains the spectrum of the
  corresponding transition matrix  $T$.
\end{proposition}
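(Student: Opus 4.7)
\medskip
The plan is to exploit the structure of the Weyl operators to identify an $N$-fold block decomposition of $\Phi_{\vec{p}}$ directly in the computational basis (up to a relabelling), rather than via the full diagonalisation in $\{\Ket{U_\mu}\}$. The key observation is that every term $U_{kl}\otimes\overline{U}_{kl}$ preserves a natural $\mathbb{Z}_N$ grading on $\c{H}_N\otimes\c{H}_N$, which will yield the desired block structure.

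First I would compute explicitly how $U_{kl}\otimes\overline{U}_{kl}$ acts on a computational basis vector $\ket{ij}$. From $U_{kl}\ket{i}=\omega_{_N}^{il}\ket{i\oplus k}$ and $\overline{U}_{kl}\ket{j}=\omega_{_N}^{-jl}\ket{j\oplus k}$ one obtains
\[
(U_{kl}\otimes\overline{U}_{kl})\ket{ij}=\omega_{_N}^{(i-j)l}\ket{i\oplus k,\,j\oplus k},
\]
so the difference $s:=(i-j)\bmod N$ is preserved by every summand in $\Phi_{\vec{p}}$. Relabelling $\ket{ij}$ as $\ket{s,t}':=\ket{s\oplus t,\,t}$ with $s=(i-j)\bmod N$ and $t=j$, the space $\c{H}_N\otimes\c{H}_N$ decomposes as $\bigoplus_{s=0}^{N-1}V_s$, where each $V_s=\spa\{\ket{s,t}'\}_{t=0}^{N-1}$ has dimension $N$ and is invariant under $\Phi_{\vec{p}}$.

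Next I would show that on each $V_s$ the channel acts as a circulant matrix. The previous computation yields
\[
\Phi_{\vec{p}}\ket{s,t}'=\sum_{k=0}^{N-1}\Bigl(\sum_{l=0}^{N-1}p_{kl}\,\omega_{_N}^{sl}\Bigr)\ket{s,\,t\oplus k}',
\]
so $\Phi_{\vec{p}}|_{V_s}=\sum_{k}c_{k,s}X^k$ with $c_{k,s}=\sum_{l}p_{kl}\omega_{_N}^{sl}$ is a polynomial in the cyclic shift $X$, hence circulant of order $N$. Diagonalising each block by the Fourier matrix recovers the eigenvalues $\lambda_{mn}=\sum_{kl}p_{kl}\omega_{_N}^{ml-nk}$ of Proposition~\ref{cdphi}; since each $\omega_{_N}^{ml-nk}$ is a vertex of the regular $N$-gon and $\{p_{kl}\}$ is a probability vector, every $\lambda_{mn}$ is a convex combination of those vertices and thus lies inside the polygon, whose rightmost vertex is $1$.

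Finally, the identification of one block with $T(\Phi_{\vec{p}})$ is immediate: for $s=0$ one has $c_{k,0}=\sum_l p_{kl}=q_k$, so $\Phi_{\vec{p}}|_{V_0}=\sum_k q_k X^k$, which coincides with the classical transition matrix $T(\Phi_{\vec{p}})$ of Eq.~\eqref{T}; this gives the inclusion $\mathrm{spec}(T)\subseteq\mathrm{spec}(\Phi_{\vec{p}})$. The only non-routine step is spotting the relabelling $\ket{ij}\mapsto\ket{(i-j)\bmod N,\,j}$ that exhibits the $\mathbb{Z}_N$ grading; once this is in place, every remaining claim reduces to direct algebra together with the convex-hull observation already invoked in Remark~\ref{remark}(4).
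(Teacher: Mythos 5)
Your proof is correct and follows essentially the same route as the paper: your relabelling $\ket{ij}\mapsto\ket{(i-j)\bmod N,\,j}$ is precisely the paper's permuted basis $\{\ket{j\oplus k,j}\}$, your blocks $\Phi_{\vec p}|_{V_s}=\sum_k\bigl(\sum_l p_{kl}\,\omega_{_N}^{sl}\bigr)X^k$ coincide with the paper's $\Phi_{\vec p}^{(k)}$, and the identification of the $s=0$ block with $T(\Phi_{\vec p})$ and the convex-hull argument for the $N$-gon are the same. No gaps.
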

\begin{proof}
  Consider the basis given by $\{\ket{j\oplus k,j}\}$
  for  $k$, $j\in\{0,\dots, N-1\}$,
  corresponding to a permutation of the original basis.
  The next step is to compute an entry of $\Phi_{\vec p}$ in
  the new basis:
  \begin{align*}
    \bra{j\oplus k, j}
    \Phi_{\vec p}
    \ket{j'\oplus k', j'}
    &=
    \bra{j\oplus k, j}
    \bigl(
      \sum_{mn} p_{mn} U_{mn}\otimes \overline{U}_{mn}
    \bigr)
    \ket{j'\oplus k', j'}
    \\
    &=
    \sum_{mn}
    p_{mn}
    \bra{j\oplus k}
    U_{mn}
    \ket{j'\oplus k'}
    \bra{j}
    \overline{U}_{mn}
    \ket{j'}\\
    &=
      \sum_{mn} p_{mn}
    \bra{j\oplus k}
    X^{m}Z^{n}
    \ket{j'\oplus k'}
    \bra{j}
      \overline{X^{m}}
      \overline{Z^{n}}
    \ket{ j'}
    \\
    &=
    \sum_n p_{-j'\oplus j, n} \omega_N^{nk}\delta_{k,k'}.
  \end{align*}

  Thus the following decomposition holds
  \begin{equation}
    \Phi_{\vec p} =
    \sum_{k=0}^{N-1}
    \sum_{j,j'=0}^{N-1}
    \Bigl(
    \sum_l p_{-j'\oplus j, l} \omega_N^{kl}
      \Bigr)
      \ket{j\oplus k, j}
      \bra{j'\oplus k, j'}.
    \label{eq-block-diagonal-quantum-channel}
  \end{equation}
  Fixing $k$, the remaining sum on $j$ and $j'$ corresponds
  to a block $\Phi_{\vec p}^{(k)}$ of size $N$, given by
   \begin{equation}
    \Phi_{\vec p}^{(k)} =
   \sum_{k'}^{N-1}
   \Bigl(
     \sum_{l=0}^{N-1}
     p_{k' l}
     \omega_N^{kl}\Bigr)
     X^{k'}
     =
     \sum_{k'}
     X^{k'}
     \Bigl(
       \sum_{ml} (Z^{k})_{ml} p_{k' l}
       \Bigr) =
     \sum_{k'}
     X^{k'}
     q_{k'}^{Z^{k}},
  \end{equation}
  where we introduced a shorthand notation,
  $q^{Z^{k}}_m  \equiv
\sum_{lk'} (Z^{k})_{k'l} p_{ml}$.
Therefore, we can write
\[
  \Phi_{\vec p}
  = \bigoplus_{k}
  \Phi_{\vec p}^{(k)}.
\]
\par
Given that each block $\Phi_{\vec p}^{(k)}$ is circulant and can be
diagonalized by the $N$ dimensional Fourier matrix,
 the spectrum of  $\Phi_{\vec p}$ is contained in the regular
$N$-polygon.
Finally, observing that $q^{Z^{0}}_m = q_m$,
we realize that the first block of the superoperator is
equal to its classical analogue,
$\Phi_{\vec p}^{(0)} = T(\Phi_{\vec p})$.
\end{proof}

\subsection{Kolmogorov generators and accessible bistochastic matrices}

In analogy with the quantum case, it is interesting to find
 which transition matrices out of those described in \eqref{T} are accessible
by a classical dynamical semigroup.
From a mathematical perspective such stochastic matrices are also
called \emph{embeddable} and their properties were analyzed in \cite{Ru62,Da10}.
In our physics--oriented approach it will be convenient to make use of the notion of
\emph{strictly incoherent operations} (SIO), defined as these
stochastic maps, which are not able
 to generate and to use quantum coherence \cite{WY16,YMGGV16}.
Therefore an arbitrary quantum operation is  strictly incoherent
if and only if  its Kraus operators are of  the form,
$K_n=\sum_i c_{ni}\outprod{P_n(i)}i$
where $P_n(i)$ is a permutation matrix \cite{CG16} and $c_{ni}$
are  arbitrary complex coefficients  satisfying the  normalization condition.

In the analyzed case of mixed Weyl channels
the corresponding Kraus operators read,
$K_{kl}=\sqrt{p_{kl}}\sum_m
\omega_{_N}^{ml}\outprod{m\oplus k}m$,
which proves  that these channels are strictly incoherent.
This property is  equivalent to  the following decoupling property,
\begin{equation}
\Phi_{\vec{p}} \left(\c D\left(\rho\right)\right)=\c D\left(\Phi_{\vec{p}}\left(\rho\right)\right).
\end{equation}
It states that the decoherence map $\c D$, acting in the space of density matrices --
see Sec.~\ref{sec:one} --
 is not coupled with the dynamics generated by a Weyl channel
$\Phi_{\vec{p}}$.
Thus coherence  of input states of  Weyl channels  plays
no role in the evolution of probability vector of their diagonal entries.
Applying  a mixed unitary Weyl channel \eqref{phi} on
 an input state $\rho$ we see
 that the evolution of its diagonal entries
 is governed by the classical transition matrix  $T$ given in  \eqref{T}
 and related
 to the Weyl channel through Eq. \eqref{supermap}.
If $d_Y$ denotes the vector of diagonal
elements of a matrix $Y$, written $d_Y=\diag(Y)$,
then the following equality holds,
\begin{equation}
Td_\rho=d_{\Phi(\rho)}.
\end{equation}
This relation, not working for an arbitrary quantum channel,
 shows that in the case of a Weyl channel, the
 corresponding transition matrix  $T(\Phi)$
 is responsible for the evolution of the diagonal elements of $\rho$.
This fact can be restated in the following way.

\begin{proposition}
\label{pauli master prop.}
The evolution of diagonal elements of input states for the channels
defined by Eq. \eqref{phi} that also are accessible by a dynamical
semigroup, i.e. those Weyl channels describing a Markovian evolution, is
governed by Markovian Pauli master equation:
\begin{equation}\label{pauli master}
\frac{\rm{d}}{\rm{dt}}d_\rho=\c Kd_\rho,
\end{equation}
where $\c K$ is the Kolmogorov operator.
\end{proposition}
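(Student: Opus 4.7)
The plan is to show that along any semigroup trajectory $\rho(t)=\Phi_t(\rho(0))$ with $\Phi_t=\mathrm{e}^{\mathcal{L}t}\in\mathcal{A}_N^Q$, the vector $d_{\rho(t)}$ of diagonal populations evolves under a fixed one-parameter semigroup of bistochastic matrices of order $N$; the generator of that classical semigroup is then the desired Kolmogorov operator $\mathcal{K}$.

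First I would establish that the class of Weyl channels is closed under composition. Using the multiplication rule $U_{kl}U_{k'l'}=\omega_{_N}^{lk'}U_{k\oplus k',l\oplus l'}$ recalled in Sec.~\ref{sec:Weyl}, the scalar phase cancels against its complex conjugate coming from the $\overline U_\mu$ factor, so $\Phi_{\vec p}\Phi_{\vec p'}$ is again a Weyl channel of the form~\eqref{phi} with probability vector given by the convolution of $\vec p$ and $\vec p'$. In particular each $\Phi_t=\mathrm{e}^{\mathcal{L}t}$ lies in $\mathcal{W}_N$, which is also visible from the spectral form~\eqref{ltexplicit}. Therefore the decoupling identity $T(\Phi)d_\rho=d_{\Phi(\rho)}$, which encapsulates the strictly-incoherent property verified just before the statement, can be iterated:
\[
d_{\Phi_t\Phi_s(\rho)}=T(\Phi_t)\,d_{\Phi_s(\rho)}=T(\Phi_t)\,T(\Phi_s)\,d_\rho,
\]
so hyper-decoherence is multiplicative along the flow, $T(\Phi_{t+s})=T(\Phi_t)T(\Phi_s)$.

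Setting $T_t:=T(\Phi_t)$, the relation $T_{t+s}=T_tT_s$ together with $T_0=\Id_N$ says that $\{T_t\}_{t\ge 0}$ is a continuous one-parameter semigroup, and each $T_t$ is bistochastic because $\Phi_t$ is unital and trace-preserving. By the classical theory of Markov semigroups this forces $T_t=\mathrm{e}^{\mathcal{K}t}$, with $\mathcal{K}=\dot T_t|_{t=0}$ satisfying the Kolmogorov conditions $\mathcal{K}_{ij}\ge 0$ for $i\neq j$ (since $(T_t)_{ij}\ge 0$ vanishes at $t=0$ for $i\neq j$) and $\sum_i\mathcal{K}_{ij}=0$ (from stochasticity of each column of $T_t$). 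Differentiating $d_{\rho(t)}=T_t\,d_{\rho(0)}$ at an arbitrary $t$ then gives
\[
\frac{\mathrm{d}}{\mathrm{d}t}\,d_{\rho(t)}=\mathcal{K}\,T_t\,d_{\rho(0)}=\mathcal{K}\,d_{\rho(t)},
\]
which is exactly~\eqref{pauli master}.

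The only non-routine point is the closure-plus-iteration argument at the heart of Step~1: the relation $Td_\rho=d_{\Phi(\rho)}$ was derived for a single Weyl channel acting on a fixed state, and one must be able to apply it with $\rho$ replaced by $\Phi_s(\rho)$, for which closure of the Weyl family under composition is essential. Once this is in place, the remainder reduces to classical Markov-semigroup bookkeeping, and one can further identify $\mathcal{K}$ as the image of the Lindblad generator $\mathcal{L}$ under the hyper-decoherence supermap acting on the space of generators, in line with the decoherence picture announced in the abstract.
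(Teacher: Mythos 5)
Your argument is correct and rests on the same two facts the paper uses: the decoupling identity $T(\Phi)d_\rho=d_{\Phi(\rho)}$ for strictly incoherent Weyl channels, and the observation that $\e^{t\c L}$ remains a Weyl channel for all $t\geq 0$, so the identity can be applied along the whole trajectory. The only (harmless) difference is that you obtain $\c K$ abstractly as the generator of the continuous one-parameter semigroup $T_t=T(\e^{t\c L})$, whereas the paper constructs it explicitly as the hyper-decoherence of the Lindblad generator, $\c K=\c D_{h}(\c L)=\sum_k q'_k X^k-\Id_N$ (Lemma~\ref{K to L}), and verifies $\c D_{h}(\e^{t\c L})=\e^{t\c K}$ by direct computation in Theorem~\ref{IM A_N^Q}.
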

 So $d_{\rho(t)}$ in the above equation is determined as
 $d_\rho(t)=d_{\rho(t)}=\e^{t\c K}d_\rho(0)=\mathbb{T}_td_\rho(0)$.
 For future references let us explicitly mention necessary properties
 of a Kolmogorov operator  represented by a real matrix of order $N$ -- see \cite{CMMV13},

 a)$\forall i\neq j \ \ \ \c K_{ij}\geq0$,

 b)$\forall j \ \ \ \ \ \  \sum_i \c K_{ij}=0$, or equivalently:

 b')$\forall j \ \ \ \ \ \ \ \c K_{jj}=-\sum_{i\neq j}\c K_{ij}$.

\medspace

As discussed in Section \ref{sec:one},
the process of decoherence  $\c D$ brings a quantum state
into a classical probability vector,
while the hyper-decoherence  $\c D_{h}$
sends a quantum channel $\Phi$ into a classical
 transition matrix $T$.
 In a similar manner a Lindblad operator $\c L$ governing the quantum dynamics
 is transformed by process of hyper-decoherence into a Kolmogorov operator $\c K$
 generating a classical semigroup.
 Consider first a Lindblad generator determining a quantum semigroup
 \eqref{p'} related to accessible Weyl channels.
 \begin{lemma}\label{K to L}
Decohering and then reshaping the diagonal elements of  $\c L^R$,
such that
$\c L$ is defined in Eq. \eqref{p'},  gives a proper Kolmogorov
operator corresponding to the Lindblad generator.
 \end{lemma}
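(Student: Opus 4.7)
The plan is to compute explicitly the $N\times N$ matrix $\c K$ produced by the described procedure, to recognise it as a simple function of the marginal probability vector $q'_k:=\sum_l p'_{kl}$, and then to verify the three defining properties (a), (b), (b') of a Kolmogorov generator listed in the paragraph preceding the lemma.

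The first step is to translate the diagonal-and-reshape recipe into the four-index notation of Section~\ref{sec:one}. Since $(\c L^R)_{\substack{m\mu\\n\nu}}=\c L_{\substack{mn\\\mu\nu}}$, the prescription amounts to $\c K_{ij}=\c L_{\substack{ii\\jj}}$. Substituting the explicit form $\c L=\sum_{k,l} p'_{kl}\,U_{kl}\otimes \overline U_{kl}-\Id_{N^2}$ (which follows from Eqs.~\eqref{lmu} and \eqref{p'} together with the normalisation $\sum_{\mu}p'_{\mu}=1$) and taking the $(ii,jj)$ matrix element gives
\[
  \c K_{ij}=\sum_{k,l=0}^{N-1} p'_{kl}\,|(U_{kl})_{ij}|^2-\delta_{ij}.
\]

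The second step exploits the clock-and-shift structure of the Weyl basis: a direct calculation yields $(X^kZ^l)_{ij}=\omega_{_N}^{lj}\,\delta_{i,j\oplus k}$, so that $|(U_{kl})_{ij}|^2=\delta_{i,j\oplus k}$ is independent of $l$. Performing the sum over $l$ first collapses $p'_{kl}$ to its marginal $q'_k$, exactly as in Eq.~\eqref{q}, and produces the compact circulant expression
\[
  \c K_{ij}=q'_{i\ominus j}-\delta_{ij},\qquad i\ominus j:=(i-j)\bmod N.
\]
I would pause here to observe that $\c K$ is itself circulant, a feature that foreshadows the commutation of the semigroup dynamics with hyper-decoherence analysed in the remainder of the Section.

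Finally, the Kolmogorov axioms follow at once. For $i\neq j$ one has $\c K_{ij}=q'_{i\ominus j}\geq 0$ since $\vec{q}'$ is a probability vector, giving (a); summing a column one finds $\sum_i \c K_{ij}=\sum_k q'_k-1=\sum_{k,l}p'_{kl}-1=0$, giving (b), and (b') is equivalent. Reality of $\c K$ is automatic from $q'_k\in\mathbb{R}$. The calculation is essentially mechanical; the only delicate point is the bookkeeping between the four-index reshuffling convention and the $(k,l)\leftrightarrow\mu$ relabelling of the Weyl matrices, after which the $l$-independence of $|(U_{kl})_{ij}|^2$ cleanly reduces $\vec{p}'\in\Delta_{N^2-1}$ to its marginal $\vec{q}'\in\Delta_{N-1}$.
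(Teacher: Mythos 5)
Your proof is correct and follows essentially the same route as the paper: both reduce $\mathcal{K}$ to the circulant form $\sum_k q'_k X^k-\Id_N$ by exploiting the fact that $|(X^kZ^l)_{ij}|^2=\delta_{i,j\oplus k}$ is independent of $l$ (the paper phrases this via the Hadamard product identity $U_{kl}\odot\overline{U}_{kl}=X^k$), and then verify the Kolmogorov axioms from non-negativity of the marginal $\vec{q}'$ and its normalisation. Your version merely makes explicit the entrywise bookkeeping that the paper leaves as "straightforward."
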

 \begin{proof}
  It is straightforward to see the aforementioned method results in the
  following operator,
 $\c K=\sum_{k,l}p'_{k,l}U_{kl}\odot\overline{U}_{kl}-\Id_N$.
 Thus based on the discussions right after Eq.\eqref{T}, we have:
 \begin{equation}\label{kolm}
  \c K=\sum_{k=0}^{N-1}q'_kX^k-\Id_N=\sum_kq'_k\c K_k
 \end{equation}
 where $q'_k=\sum_lp'_{kl}$ and $\c K_k=X^k-\Id_N$ is the
 Kolmogorov operator associated to $\c L_{kl}$ and is independent of
 $l$. Note that $\c K_0=0$ is the trivial case. It is easy to see $\c K$ and
 $\c K_k$  satisfy properties (a), (b) and (c) of a
 valid Kolmogorov operator which completes the proof.
 \end{proof}

In general, the process of hyper-decoherence acting in the space of Lindblad operators
produces a valid Kolmogorov generator  \cite{Ko72,CMMV13},
with entries ${\c K}_{ij} = {\rm Tr}[|i\rangle \langle i|\; {\c L}(|j\rangle \langle j|)] $.
Thus any Lindblad operator produces,  by the $N$--dimensional  projection operator $\Pi_N$
used in  Eq.~\eqref{supermap},
a corresponding classical  Kolmogorov generator,
 \begin{equation}
 {\c K} \left({\c L}\right)  = \c D_{h}(\c L) =  \Pi_N {\c L} \Pi_N , \  \ {\rm and \ \ }
  {\c K}_{ij} ={\c L}_{\substack{ii\\ jj}} .
 \label{Lind-Kol}
 \end{equation}
However, the analyzed case of Weyl channels and the corresponding Lindblad operators
is rather special, as all the Lindblad  generators
are commutative - see Eq. (\ref{lt}).
This property is also inherited in the classical setup,
as due to Lemma \ref{com.xk}
 all the generators  $\c K_k$ are commutative,
 so are all $\c K$ and $\c K'$ in the form of  Eq.
 \eqref{kolm}. The commutativity of Kolmogorov operators $\c K_k$
 imposes commutativity on their respective dynamical semigroups,
 $\mathbb{T}_{t_k}\mathbb{T}_{t_{k'}}=
 \mathbb{T}_{t_{k'}}\mathbb{T}_{t_k}$ where
 $\mathbb{T}_{t_k}=\e^{t_k\c K_k}$.

Moreover,  classical transition matrices  gained by any concatenation of classical
 dynamical semigroups, $\e^{t_k\c K_k}$,
 belong to the set ${\cal C}_N$ of circulant bistochastic
 matrices  (\ref{T}), i.e.
 \begin{equation}\label{classical dynamical semigroups}
\mathbb{T}_t=
\mathbb{T}_{t_0}\mathbb{T}_{t_1}\dots\mathbb{T}_{t_{N-1}}=
\e^{t_0\c K_0}\e^{t_1\c K_1}\dots\e^{t_{N-1}
\c K_{N-1}}=\e^{\sum_{k=0}^Nt_k\c K_k}=\e^{t\sum q'_k
\c K_k}=\e^{t\c K}=T_q,
 \end{equation}
 where the total interaction time $t=\sum t_k$ and $q'_k={t_k}/{t}$.
 Observe that  the order of
 classical dynamical semigroups is not important because of their
 commutativity. To see that, one can apply the Fourier matrix $F$ to diagonalize the
 Kolmogorov operator and to find eigenvalues, so $\mathbb{T}_t$ has
 the following form:
  \begin{equation}\label{spect. of dyn. semi. gr.}
  \mathbb{T}_t=\sum_j\e^{t(\xi'_j-1)}\project{x_j},
  \end{equation}
 where $\xi'_j$ are the eigenvalues of $\sum q'_kX^k$
 obtained by applying $F$ on the probabilities $q'_k$. Therefore,
 the above classical dynamical semigroup
  can be rewritten in the form \eqref{T} in
 which probabilities $q_k$ are given by,
\begin{equation}\label{qk1}
q_k=\frac{1}{N}\sum_jF^\dagger_{kj}\e^{t(\xi'_j-1)}=
\frac{1}{N}\sum_{j}\omega_{_N}^{kj}\exp{\left(\sum_{m}
(\omega_{_N}^{-jm}-1)t_{m}\right)}.
\end{equation}

 In Theorem \ref{IM A_N^Q}  we will prove  that this equality provides
 a valid probability vector with non-negative entries.
Observe that according to Eq. \eqref{spect. of dyn. semi. gr.},
 for any real $\xi'_j$, the corresponding eigenvalue of
 $\mathbb{T}_t$ is positive.
 For the complex conjugate pair $\xi'_j$ and
 $\xi'_{-j\oplus N}$, the corresponding eigenvalues of
 $\mathbb{T}_t$ are complex conjugates as well.
Moreover, at $t_+=\frac{2n\pi}{|\rm{Im}(\xi'_j)|}$ and
$t_-=\frac{(2n-1)\pi}{|\rm{Im}(\xi'_j)|}$ for any $n\in\mathbb{N}$,
$\mathbb{T}_t$ will possess positive degenerate and negative
degenerate eigenvalues respectively. Finally, as
 $|\xi'_j|\leq 1$, the moduli of the eigenvalues of
 $\mathbb{T}_t$ are either unity or
 decay exponentially with respect to $t$.

 We denote by $\c A_N^C\subset {\cal C}_N$ the set of circulant bistochastic transitions
 satisfying Eq. \eqref{classical dynamical semigroups} and forming the set of
 transition matrices accessible by a classical dynamical semigroups with
 the Kolmogorov operator in the form of Eq. \eqref{kolm}. The
 aforementioned equation also implies the following result,
 analogous to the quantum counterpart
formulated in Corollary \ref{logconv},

 \begin{corollary}
 The set $\c A_N^C$ of classical transition matrices
 acting on $N$-point probability vectors
 and  accessible by a classical dynamical
 semigroup generated by $\c K$ in Eq. \eqref{kolm} is
 $\rm{log\ convex}$.
 \end{corollary}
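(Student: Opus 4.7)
The plan is to mirror the argument of Corollary~\ref{logconv} and exploit the commutativity established in Lemma~\ref{com.xk}. By the identity \eqref{classical dynamical semigroups}, every element $T_q \in \c A_N^C$ admits the representation $T_q = \exp(t\c K)$ with $\c K = \sum_{k=0}^{N-1} q'_k \c K_k$, where $q'_k = t_k/t \geq 0$ and $\sum_k q'_k = 1$ (the $k=0$ term being trivial since $\c K_0 = 0$). Hence the logarithmic image of $\c A_N^C$ coincides with the positive cone generated by the $N-1$ elementary Kolmogorov operators $\c K_k = X^k - \Id_N$ for $k = 1, \ldots, N-1$, and log-convexity reduces to showing that this cone is convex and that its exponential image lies in $\c A_N^C$.

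First, I would observe that the cone is convex almost by inspection. Given two accessible transitions with generators $\c K^{(a)} = \sum_k a_k \c K_k$ and $\c K^{(b)} = \sum_k b_k \c K_k$, where $a_k, b_k \geq 0$, any combination $\c K^{(c)} := \alpha \c K^{(a)} + (1-\alpha) \c K^{(b)} = \sum_k (\alpha a_k + (1-\alpha) b_k)\, \c K_k$ still has non-negative coefficients. Setting $t = \sum_k (\alpha a_k + (1-\alpha) b_k)$ and $q'_k = (\alpha a_k + (1-\alpha) b_k)/t$, the operator $\c K^{(c)}$ acquires the canonical form \eqref{kolm}, so by the construction already spelled out in \eqref{classical dynamical semigroups} we get $\exp(\c K^{(c)}) \in \c A_N^C$. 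Crucially, the rearrangement $\exp(\alpha \c K^{(a)} + (1-\alpha) \c K^{(b)}) = \exp(\alpha \c K^{(a)})\exp((1-\alpha)\c K^{(b)})$ used implicitly here is legal precisely because all $\c K_k$ mutually commute (Lemma~\ref{com.xk}); without this, convexity in the exponent would not translate into membership of the exponential image.

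The only remaining bookkeeping---and the one place where one could worry about an obstacle---is to verify that any non-negative combination $\sum_k c_k \c K_k$ is a genuine Kolmogorov generator satisfying properties (a), (b), and (b') listed after Proposition~\ref{pauli master prop.}. This is immediate once unpacked: each $\c K_k = X^k - \Id_N$ has non-negative off-diagonal entries (inherited from the permutation matrix $X^k$) and vanishing column sums (since $X^k$ is bistochastic and the $-\Id_N$ cancels the diagonal contribution), and both conditions are preserved under non-negative linear combinations. Consequently, the log-convexity of $\c A_N^C$ is a direct corollary of the commutativity of the $\c K_k$ together with the decomposition of \eqref{classical dynamical semigroups}, with no additional technical ingredient beyond those supplied by Lemma~\ref{K to L}.
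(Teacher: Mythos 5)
Your proposal is correct and follows essentially the same route as the paper: the paper derives this corollary directly from Eq.~\eqref{classical dynamical semigroups}, observing that every accessible transition is the exponential of a non-negative (hence convex-cone) combination of the commuting elementary generators $\c K_k$, exactly as in the quantum Corollary~\ref{logconv}. Your write-up merely makes explicit the bookkeeping (convexity of the cone, the commutativity needed to split the exponential, and the verification that non-negative combinations of the $\c K_k$ remain valid Kolmogorov generators) that the paper leaves implicit.
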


\begin{theorem}
Assume now a circulant bistochastic transition in the form of
\begin{equation}
T=\sum_n\xi_n\project{x_n}.
\end{equation}

Again, we choose logarithm such that
$\overline{\log\xi_n}=\log\overline{\xi_n}=\log\xi_{-n\oplus N}$.
The necessary and sufficient
condition for $T$ to be accessible by a classical dynamical
semigroup  \eqref{classical dynamical semigroups},  $T\in\c A_N^C$,
is that  the sum on the right hand side of Eq. (\ref{NS for T}) is real and positive,
\begin{equation}
\label{NS for T}
t_m=\frac{1}{N}\sum_n(QF^\dagger Q)_{mn}
\log\xi_n\geq0,\
\end{equation}
for $ m=1,\dots, N-1$.
Here $F$ denotes  the Fourier matrix, $Q=\Id_N-\project{0}$   is the
projector operator on $(N-1)$-dimensional subspace and
$\project{0}$ projects $\vec{\xi}$ to $\xi_0=1$.
Note that there is no condition on $t_0$ as it is the coefficient of
 $\c K_0=0$.
\end{theorem}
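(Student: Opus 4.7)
The plan is to mirror the structure of the earlier quantum theorem on the accessibility of Weyl channels. First I would treat the ``only if'' direction: assume $T\in\c A_N^C$, so that by equation~\eqref{classical dynamical semigroups} there exist $t_k\ge 0$ with $T=\exp\bigl(\sum_{k=0}^{N-1}t_k\c K_k\bigr)$, where $\c K_k=X^k-\Id_N$. Since all $X^k$ are simultaneously diagonalized by the Fourier matrix $F$ (see equation~\eqref{fourcirc}), the eigenvalues of $T$ satisfy $\xi_n=\exp\bigl(t(\xi'_n-1)\bigr)$ with $\xi'_n=\sum_k F_{nk}q'_k$ and $q'_k=t_k/t$. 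Choosing the branch of the logarithm as in the statement, this yields the linear relation
\begin{equation}
\log\xi_n=\sum_{k=0}^{N-1}(F_{nk}-1)\,t_k=\sum_{k=0}^{N-1}F'_{nk}\,t_k,
\end{equation}
where $F'=F-W$ with $W_{nk}=1$ for all $n,k$.

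Next I would invert $F'$ on the appropriate subspace. The key observation is that $F_{0k}=F_{n0}=1$, so the first row and column of $F'$ vanish and $F'=QF'Q$, the operator $F'$ being singular on the full space but invertible on the $(N-1)$-dimensional subspace $Q\mathbb{C}^N$ with $Q=\Id_N-\project{0}$. To locate the pseudo-inverse I would verify the identity
\begin{equation}
(QF^\dagger Q)(QF'Q)=N\,Q,
\end{equation}
whose proof is a short calculation using $FF^\dagger=N\Id_N$ together with the fact that $\sum_{k\neq 0}\omega_N^{mk}=-1$ for $m\neq 0$. Multiplying the previous display by $(QF^\dagger Q)/N$ then yields equation~\eqref{NS for T}, and the interaction times $t_m$ are manifestly non-negative by hypothesis.

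For the ``if'' direction the argument is essentially definitional: given that the right-hand side of~\eqref{NS for T} is real and non-negative for every $m=1,\dots,N-1$, set these numbers as interaction times and build the semigroup $\mathbb{T}_t=\prod_m\e^{t_m\c K_m}$; by construction its spectral form~\eqref{spect. of dyn. semi. gr.} reproduces the eigenvalues $\xi_n$, hence the matrix $T$. I would also note, as a sanity check, that the choice $\overline{\log\xi_n}=\log\xi_{-n\oplus N}$ is consistent with the reality of $t_m$, since complex conjugation of $n$ pairs with complex conjugation of the Fourier row.

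The main obstacle I expect is the branch-of-logarithm issue, exactly as in the quantum case: when some $\xi_n$ is negative or non-real, $\log\xi_n$ is defined only up to an integer multiple of $2\pi i$, and one must argue that the sum in~\eqref{NS for T} is independent of these shifts (up to symmetries) and that, for negative eigenvalues, degeneracy is needed so that the imaginary parts cancel in pairs $n$ and $-n\oplus N$. This subtlety is handled in the quantum theorem via the integers $M_{k,l}$; I would invoke the same mechanism here, using the symmetry $\xi_{-n\oplus N}=\overline{\xi_n}$ of circulant bistochastic spectra to ensure that any admissible choice of logarithms produces a real vector $\vec t$, and that for accessibility it is enough that at least one such choice makes every $t_m$ non-negative.
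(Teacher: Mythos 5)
Your proposal is correct and follows essentially the same route as the paper: derive $\log\xi_n=\sum_m F'_{mn}t_m$ from the spectral form of the semigroup, invert $F'$ on the subspace $Q$ via the identity $(QF^\dagger Q)(QF'Q)=N\,Q$, and observe that the converse is immediate because the interaction times are exhibited explicitly. Your additional remarks on logarithm branches go slightly beyond the paper's classical proof (which simply fixes the branch in the statement), but they import the same mechanism used in the quantum theorem and do not change the argument.
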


\begin{proof}
Let us first show that if $T$ is an accessible classical map, then
Eq.~\eqref{NS for T} holds. So we  start by the dynamical
semigroup in Eq. \eqref{spect. of dyn. semi. gr.}
that implies the following relation for the
eigenvalues of an accessible map:
\begin{equation}\label{e1}
\log\xi_n=t(\xi'_n-1)=t(\sum_{m=0}^{N-1}
\omega_{_N}^{-mn}q'_m-1)=
\sum_{m=0}^{N-1}(\omega_{_N}^{-mn}-1)t_m=
\sum_{m=0}^{N-1}F'_{mn}t_m,
\end{equation}
where $F'=F-V$ with $V_{mn}=1$. As the entries of the first row and
column of  $F'$ are all zero, $F'$ is not invertible and $F'=QF'Q$. It is
again possible to define the inverse on the $(N-1)$-dimensional
subspace $Q$.
In the following we show this inverse exists and equals to
$QF^\dagger Q$, up to the constant $N$:
\begin{eqnarray}
\nonumber(QF^\dagger Q)(QF'Q)&=&
\sum_{m,n=1}^{N-1}\sum_{i,j=0}^{N-1}
\sum_{k,l=0}^{N-1}\omega_{_N}^{ij}(\omega_{_N}^{-kl}-1)
\project mi\rangle\langle j\project n k\rangle\bra l\\ \nonumber
&=&\sum_{m,n=1}^{N-1}\sum_{l=0}^{N-1}
\omega_{_N}^{mn}(\omega_{_N}^{-nl}-1)\outprod ml\\
&=&N\sum_{m=1}^{N-1}\sum_{l=0}^{N-1}\delta_{ml}\outprod ml
=N\sum_{m=1}^{N-1}\outprod mm= N(\Id_N-\project{0}).
\end{eqnarray}
So by acting $(QF^\dagger Q)$ on the both sides of Eq. \eqref{e1} the
condition in Eq. \eqref{NS for T} is achieved for interaction time $t_m$
which should be non-negative. Moreover, note that Eq.~\eqref{e1}
implies $\overline{\log\xi_n}=\log\overline{\xi_n}$.
On the other hand, if the inequality \eqref{NS for T} holds between
eigenvalues of a classical map, as the interaction time
is explicitly presented, it is clear that the map belongs to $\c A_N^C$.
\end{proof}

As in the quantum version,  condition~\eqref{NS for T} for $T$ implies
existence of a proper logarithm of $T$ satisfying
 properties of a Kolmogorov operator mentioned
right after Prop. \ref{pauli master prop.}. In analogy to the quantum case
one can simply write $\log T$ as follows:
\begin{equation}
\log T= \sum_{n=0}^{N-1}\log\xi_n\project{x_n}=\frac{1}{N}
\sum_{m=0}^{N-1}\sum_{n=0}^{N-1}\sum_{j=0}^{N-1}
\omega_{_N}^{mn}\log\xi_n\outprod{m\oplus j} j.
\end{equation}
For this generator   $\forall j: \
\sum_i(\log T)_{ij}=0$ is straightforwardly verified.
In addition,   $\forall i\neq j:\ (\log T)_{ij}\geq0 $   means
\begin{eqnarray}
 \forall m\neq0:\quad\quad\quad \frac{1}{N}
 \sum_{n=0}^{N-1}\omega_{_N}^{mn}\log\xi_n\geq0.
\end{eqnarray}

Interestingly, the interaction times  $t_m$ entering Eq. \eqref{NS for T}
can be obtained by
summation over the second index of $t_{mn}$ introduced in Eq.~\eqref{time}
when it is rewritten by two indices $m,n$ rather than
$\mu$. This observation suggests a closer relation between
quantum accessible channels and classical accessible transition matrices.
 The following theorem
shows that $\c A_N^C$ is the image of $\c A_N^Q$ in classical space,
 i.e. $\c D_{h}(\c A_N^Q)=\c A_N^C$,
 where hyper-decoherence $\c D_{h}$ is defined by the projection \eqref{supermap}.

\begin{theorem}\label{IM A_N^Q}
Let $\Phi_{\vec{p}}$ be a Weyl channel accessible by a Lindblad generator in
Eq. \eqref{p'}, i.e. $\Phi_{\vec{p}} =\e^{t\c L}\in\c A_N^Q$.
Then the corresponding
circulant bistochastic matrix $T(\Phi)$ belongs to the set of
transition matrices  accessible by a classical semigroup whose generator $\c K$ is
defined by  Eq. \eqref{kolm} based on $\c L$, i.e.
$T=\e^{t \c K}\in\c A_N^C$ and $\c K$ is related to $\c L$
 through Lemma \ref{K to L}. Hence for a generator $\c L$ related to a Weyl channel the
 hyper-decoherence commutes with the time evolution,
 \begin{equation}
 \c D_{h}\left(\e^{t\c L}\right)=\e^{t\c D_{h}\left(\c L\right)}
  =e^{t {\c K}}.
 \end{equation}
\end{theorem}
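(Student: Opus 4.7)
The plan is to establish the commutation identity $\mathcal{D}_h(e^{t\mathcal{L}})=e^{t\mathcal{D}_h(\mathcal{L})}$ by a direct computation on the explicit formula \eqref{ptime} for the probabilities $p_{kl}(t)$ of an accessible Weyl channel, and then to identify the marginal probabilities $q_k=\sum_l p_{kl}$ with the form \eqref{qk1} prescribed by the classical dynamical semigroup generated by $\mathcal{K}=\mathcal{D}_h(\mathcal{L})$. Everything reduces to a single structural observation about the Hadamard matrix $H$ of size $N^2$: since $H_{\substack{0n\\rs}}=\omega_N^{-rn}$ is independent of $s$, marginalising the quantum parameters over the second index commutes with exponentiation.

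Concretely, I would first apply hyper-decoherence to $\Phi_{\vec p}$ using \eqref{T} and \eqref{q}, so that $T(\Phi_{\vec p})=\sum_{k=0}^{N-1}q_k X^k$ with $q_k=\sum_l p_{kl}$. Substituting \eqref{ptime} and performing the sum over $l$ with the identity $\sum_l \omega_N^{-ml}=N\delta_{m,0}$ collapses the double Hadamard sum to
\begin{equation*}
  q_k=\frac{1}{N}\sum_{n=0}^{N-1}\omega_N^{kn}\,\exp\!\Bigl(\sum_{r,s}(\omega_N^{-rn}-1)\,t_{rs}\Bigr).
\end{equation*}
Because the exponent depends on $s$ only through the sum $\sum_s t_{rs}$, I would introduce $t_r^{\mathrm{cl}}:=\sum_{s}t_{rs}$ and recognise the resulting expression as exactly formula \eqref{qk1} for the probabilities of a classical dynamical semigroup with interaction times $t_r^{\mathrm{cl}}$. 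By Lemma \ref{K to L}, the Kolmogorov generator governing this semigroup is precisely $\mathcal{K}=\mathcal{D}_h(\mathcal{L})=\sum_k q'_k X^k-\Id_N$ with $q'_k=\sum_l p'_{kl}$, so that $T(\Phi_{\vec p})=e^{t\mathcal{K}}$. Since $t_r^{\mathrm{cl}}\ge 0$ as a sum of non-negative interaction times (using $\Phi_{\vec p}\in\mathcal{A}_N^Q$), the resulting circulant bistochastic matrix automatically lies in $\mathcal{A}_N^C$, as required.

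As a cross-check I would verify consistency with the classical accessibility criterion \eqref{NS for T}. The eigenvalues of $T(\Phi_{\vec p})$ are recovered by inverse Fourier transform of $q_k$: a short computation using the same $\sum_l\omega_N^{-ml}=N\delta_{m,0}$ trick on the quantum formula $\lambda_{mn}=\sum_{kl}\omega_N^{ml-kn}p_{kl}$ yields $\xi_n=\lambda_{0,n}$. Then the classical formula \eqref{NS for T} and the quantum formula \eqref{time}, rewritten with two indices, give the marginalisation identity
\begin{equation*}
  t_m^{\mathrm{cl}}=\frac{1}{N}\sum_n \omega_N^{mn}\log\xi_n=\frac{1}{N}\sum_n\omega_N^{mn}\log\lambda_{0,n}=\sum_l t_{m,l},
\end{equation*}
so the classical interaction times are precisely the $l$-marginals of the quantum ones. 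Their non-negativity, hence $T\in\mathcal{A}_N^C$, follows at once.

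The only subtle step is the bookkeeping: one must track carefully how the $N^2$-dimensional Hadamard $H$ degenerates under marginalisation into the $N$-dimensional Fourier matrix $F$, and how this degeneration exactly matches the structure of $\mathcal{K}=\mathcal{D}_h(\mathcal{L})$ from Lemma \ref{K to L}. This is the main obstacle, but it is essentially a one-line fact---$H_{\substack{0n\\rs}}$ is independent of $s$---which is a direct consequence of the commutativity of Weyl channels with hyper-decoherence already visible in Proposition \ref{prop-channel-hyperdeco}. Once this observation is made, the commutation $\mathcal{D}_h(e^{t\mathcal{L}})=e^{t\mathcal{D}_h(\mathcal{L})}$ drops out with no further work.
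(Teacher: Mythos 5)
Your proposal is correct and follows essentially the same route as the paper: both marginalise the explicit time-dependent probabilities of Eq.~\eqref{ptime} over the second index, use $\sum_l\omega_N^{-li}=N\delta_{i0}$ to collapse the Hadamard sum to the Fourier form of Eq.~\eqref{qk1}, and identify the classical interaction times as $t_m=\sum_n t_{mn}\ge 0$. Your added cross-check via $\xi_n=\lambda_{0,n}$ is consistent with the paper's remark preceding the theorem but is not needed for the argument.
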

\begin{proof}
We start the proof by finding classical transition of $\Phi=\e^{t\c L}$.
Assuming $t_{mn}=tp'_{mn}$, and rewriting Eq.~\eqref{ptime}
using two indices $k,l$ one has:
\begin{equation}
\Phi_{\vec{p}} =\e^{t\c L}=\sum_{k,l}p_{k,l}U_{kl}\otimes\overline{U}_{kl},
\end{equation}
where
\begin{equation}
p_{kl}=\frac{1}{N^2}\sum_{ij}\omega_{_N}^{kj-li}
\exp{\left(\sum_{mn}(\omega_{_N}^{in-jm}-1)t_{mn}\right)}.
\end{equation}
Hence $T$ can be gained by Eq. \eqref{T}, in which the probabilities $q_k$ read,
\begin{eqnarray}\label{qk2}
\nonumber q_k&=&\sum_l p_{kl}=\frac{1}{N^2}
\sum_{ij}\omega_{_N}^{kj}\sum_{l}(\omega_{_N}^{-li})
\exp{\left(\sum_{mn}(\omega_{_N}^{in-jm}-1)t_{mn}\right)}\\
&=&\frac{1}{N}\sum_{j}\omega_{_N}^{kj}
\exp{\left(\sum_{m}(\omega_{_N}^{-jm}-1)t_{m}\right)},
\end{eqnarray}
where $t_m=\sum_nt_{mn}=t\sum_np'_{mn}=tq'_m$,
while  $q'_m$ is defined by Eq. \eqref{kolm}.
Equalities \eqref{qk1} and  \eqref{qk2} complete the proof.
Furthermore,  Eq. \eqref{qk1} explicitly  represents a  valid probability vector $q$
with non-negative components, as it was claimed before.
\end{proof}

Thus the classical image of a quantum accessible Weyl channel
 is a circulant bistochastic matrix accessible by a classical semigroup.
 An immediate consequence of this fact and Prop. \ref{*shap}  is the following Corollary.

\begin{corollary}
 The set $\c A_N^C$ of circulant bistochastic transition matrices,
   accessible by a semigroup  generated by Kolmogorov operators \eqref{kolm}, is
star-shaped with respect to the flat matrix, $T_\ast=\frac{1}{N}\sum_kX^k$,
in analogy to its quantum counterpart $\c A_N^Q$.
\end{corollary}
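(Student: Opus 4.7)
The plan is to mirror the proof of Proposition \ref{*shap} at the classical level. Fix an accessible circulant bistochastic $T\in\c A_N^C$, so by Eq.~\eqref{classical dynamical semigroups} there exist non-negative interaction times $t_k\geq 0$ with $T=\exp\bigl(\sum_{k=0}^{N-1}t_k\c K_k\bigr)$. For $m\in(0,1]$ I would introduce the shifted times $t_k':=t_k-N^{-1}\log m$, which are non-negative since $\log m\leq 0$, and define $T':=\exp\bigl(\sum_k t_k'\c K_k\bigr)\in\c A_N^C$ by construction. The goal then reduces to showing $T'=mT+(1-m)T_\ast$, which identifies the convex combination as a member of the accessible set.

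The computation closely parallels the quantum one. By Lemma \ref{com.xk} the Kolmogorov generators $\c K_k$ mutually commute, so
\[
T'=T\cdot\exp\bigl(-N^{-1}\log m\,{\textstyle\sum_k}\c K_k\bigr).
\]
Three algebraic facts do all the work: $\sum_{k=0}^{N-1}\c K_k=N(T_\ast-\Id_N)$, which follows directly from $\c K_k=X^k-\Id_N$ together with $T_\ast=N^{-1}\sum_k X^k$; the idempotency $T_\ast^2=T_\ast$, which lets the exponential of the projection be summed explicitly to $\exp(-\log m\,T_\ast)=\Id_N+(1/m-1)T_\ast$; and $TT_\ast=T_\ast$, valid for every circulant bistochastic $T$ because $X^kT_\ast=T_\ast$ for every $k$. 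Combining these identities yields $\exp(-\log m\,(T_\ast-\Id_N))=m\,\Id_N+(1-m)T_\ast$, whence $T'=T\bigl(m\,\Id_N+(1-m)T_\ast\bigr)=mT+(1-m)T_\ast$, as required.

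A shorter, essentially coordinate-free route is also available through Theorem \ref{IM A_N^Q}: given $T\in\c A_N^C$ lift it to some $\Phi\in\c A_N^Q$ with $\c D_h(\Phi)=T$, apply Proposition \ref{*shap} to get $\Phi':=m\Phi+(1-m)\Phi_\ast\in\c A_N^Q$, and then use linearity of the hyper-decoherence supermap together with the immediate identity $\c D_h(\Phi_\ast)=T_\ast$ (since $\Phi_\ast$ corresponds to the flat probability vector, whose marginal is flat as well) and $\c D_h(\c A_N^Q)=\c A_N^C$ to conclude $mT+(1-m)T_\ast=\c D_h(\Phi')\in\c A_N^C$. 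I would present the direct proof as the main argument since it exhibits the explicit semigroup realization of each interpolant, and mention the decoherence route as a conceptual confirmation. There is no real obstacle beyond the mildly delicate boundary case $m=0$, where $\log m$ diverges and $T_\ast$ is reached only as the infinite-time limit of the semigroup generated by $\sum_k\c K_k$; this is handled exactly as for $\Phi_\ast$ in the quantum argument.
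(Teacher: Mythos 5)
Your proposal is correct, and your primary argument is a genuinely different route from the paper's. The paper derives this corollary as an ``immediate consequence'' of Theorem~\ref{IM A_N^Q} combined with Proposition~\ref{*shap}: since hyper-decoherence commutes with the semigroup dynamics and $\c D_h(\c A_N^Q)=\c A_N^C$, the star-shape of $\c A_N^Q$ with respect to $\Phi_\ast$ pushes forward under the linear supermap $\c D_h$ to the star-shape of $\c A_N^C$ with respect to $T_\ast=\c D_h(\Phi_\ast)$ --- exactly your second, ``coordinate-free'' route (one does need the surjectivity of $\c D_h$ onto $\c A_N^C$, which the paper's Theorem~\ref{IM A_N^Q} supplies and which you implicitly use when you ``lift'' $T$). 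Your main argument instead redoes Proposition~\ref{*shap} entirely at the classical level: the time shift $t_k\mapsto t_k-N^{-1}\log m$, the identity $\sum_k\c K_k=N(T_\ast-\Id_N)$, the idempotency $T_\ast^2=T_\ast$, and the absorption property $TT_\ast=T_\ast$ combine correctly to give $\exp\bigl(-\log m\,(T_\ast-\Id_N)\bigr)=m\,\Id_N+(1-m)T_\ast$ and hence $T'=mT+(1-m)T_\ast\in\c A_N^C$. What the direct route buys is self-containedness --- an explicit semigroup realization of every point on the segment $[T_\ast,T]$ without invoking the quantum machinery; what the paper's route buys is brevity and the conceptual point that the classical set inherits its geometry from the quantum one. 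Your handling of the degenerate endpoint $m=0$ as an infinite-time limit matches the (unstated) convention of the quantum proof.
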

Already we have seen the set of classical accessible transitions are
\emph{log-convex} and \emph{star-shaped}. The following Proposition shows how
one can get the boundaries of the set $\c A_N^C$.

\begin{proposition}
\label{prop14}
The boundaries of the set, $\partial\c A_N^C$, are accessible when at
least one of the probabilities $q'_k$ related to non-trivial Kolmogorov
operators ($k\neq0$) \eqref{kolm} is zero. In other words, if one takes at most $N-2$
out of $N-1$ non-trivial Kolmogorov operators, the resultant transitions,
$\e^{t\c K}$,  form the boundaries $\partial\c A_N^C$.
\end{proposition}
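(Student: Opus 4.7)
The plan is to mirror the boundary argument given in Section~\ref{sec:three} for the quantum set $\c A_N^Q$ and transfer it to the classical setting, using the fact that every $T \in \c A_N^C$ is reached along a trajectory $T(s)=\e^{s\c K}$ starting from $\Id_N$, with $\c K = \sum_{k=0}^{N-1} q'_k \c K_k$ as in Eq.~\eqref{kolm}. The boundary of $\c A_N^C$ should correspond to those $T$ for which the infinitesimal generator $\c K$ saturates one of the Kolmogorov off-diagonal positivity conditions, i.e.\ $\c K_{ij}=0$ for some $i\neq j$.

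First, I would expand along the trajectory: $T(\epsilon) = \Id_N + \epsilon \c K + O(\epsilon^2)$. Stochasticity of $T(\epsilon)$ for all small $\epsilon>0$ is equivalent to $\c K_{ij}\ge 0$ for $i\neq j$. Using the identity $(X^k)_{ij}=\delta_{k,(i-j)\bmod N}$, Eq.~\eqref{kolm} then gives $\c K_{ij} = q'_{(i-j)\bmod N}$ for $i\neq j$. Hence the off-diagonal non-negativity reduces to $q'_k \ge 0$ for $k=1,\dots,N-1$, and saturation $q'_{k_0}=0$ for some $k_0\neq 0$ is equivalent to $t_{k_0}=0$, i.e.\ at most $N-2$ of the non-trivial generators $\c K_k$ effectively contribute to $\c K$.

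To upgrade this local statement to an arbitrary endpoint $t>0$, I would use the fact that the ratios $q'_k = t_k/t$ are constant along each trajectory, so the saturation detected in the limit $\epsilon \searrow 0$ propagates to $T(t)$. Conversely, if all $q'_k>0$, small perturbations of $\vec t$ inside $\mathbb{R}_{\ge 0}^{N-1}$ produce nearby elements of $\c A_N^C$, so $T$ sits in the interior. This identifies $\partial\c A_N^C$ with the image of the boundary of the parameter cone, as claimed.

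The main obstacle I anticipate is this last step: verifying that the parametrization $\vec t \mapsto T$ from Eq.~\eqref{qk1} actually transports boundary points of the cone $\mathbb{R}_{\ge 0}^{N-1}$ to boundary points of $\c A_N^C$ inside the ambient simplex $\c C_N$ of circulant bistochastic matrices. I would address this through the Fourier diagonalization $\vec q = N^{-1} F^\dagger \vec \xi$ combined with the logarithmic change of variables in Eq.~\eqref{NS for T}, which together make the parametrization a local diffeomorphism on the open positive orthant; its image is therefore open there, and so the boundary of $\c A_N^C$ must be exactly the image of those $\vec t$ for which some $t_k$ vanishes.
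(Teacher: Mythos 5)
Your argument is essentially the paper's own proof: both expand $T(\epsilon)=\e^{\epsilon\c K}=\Id_N+\epsilon\c K+O(\epsilon^2)$ near $t=0$, identify the off-diagonal entries of $\c K$ with the weights $q'_k$, and read off that non-negativity of the entries of $T(\epsilon)$ is saturated exactly when some $q'_{k}=0$ for $k\neq 0$. Your closing paragraph, which promotes this first-order saturation to a genuine boundary statement for the endpoint $T=\e^{t\c K}$ via the local-diffeomorphism property of $\vec t\mapsto T$, is a welcome tightening of a step the paper leaves implicit, but it does not change the route.
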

\begin{proof}
To prove the claim, note that if $T$ belongs to $\c A_N^C$ it is
 connected to $\Id_{_N}$ through a trajectory that at each
 $t\geq 0$ belongs to the set, so does for small time $t=\epsilon$.
 Expanding $\e^{t\c K}$ around $t=0$, one has:
 \begin{eqnarray}
 \nonumber \e^{\epsilon\c K}&=&\left(\Id_{_N}+\epsilon
 \dfrac{\mathrm{d}T}{\mathrm{d}t}\big\vert_{t=0}\right)=
 \left(\Id_{_N}+\epsilon\c K\right)=
 \left(1-\epsilon(1-q'_0)\right)
 \Id_{_N}+\epsilon\sum_{k=1}^{N-1}q'_kX^k,
 \end{eqnarray}
that should describe a proper bistochastic transition for small
$\epsilon>0$. This means all of its entries should be non-negative and
sum of them over each row and column equals to unity. The latter is
satisfied automatically. The first condition means
$1-\epsilon(1-q'_0)\geq0$ and $\epsilon q'_k\geq0$ for
$k\in\{1,\dots,N-1\}$. Regarding that $\epsilon$ is small
$1-\epsilon(1-q'_0)$ is always positive, however, if $q'_k$ is taken
to be zero for $k\in\{1,\dots,N-1\}$, the second inequality is satisfied
with equality. So in that case we get the boundaries of the set.
\end{proof}

\subsection{Spectra of accessible quantum Weyl channels and circular
bistochastic matrices }
\label{AppSpiral}
In this subsection we analyze the boundary of certain cross-sections
of the set ${\cal A}_N^Q$ of accessible Weyl channels
and characterize the support of their spectra in the complex plane.
As Proposition~\ref{prop-channel-hyperdeco} relates the eigenvalues of a superoperator $\Phi_p$,
corresponding to a Weyl channel, with eigenvalues of its classical action,
described by a circulant bistochastic matrix, our results concern also
support of spectra of transistion matrices accessible by a classical semigroup.
Circulant bistochastic matrices of order N
are spanned by the cycle permutation matrix X and its powers,
$\Delta(X,X^2,..., X^N)$, where $X^N = \mathbb{I}_N$.

We consider the spectrum of accessible Weyl channels
$\c A_N^Q$, \emph{i.e.}, for which
there exists a trajectory, $z(t)$, such that $z(t=0)=\Phi_\Id$ 
 and belongs to the set of Weyl channels at all times $t$.
This is equivalent to the statement that the trajectory in the space  of transition
matrices corresponding   to $z(t)$ by hyper-decoherence
belongs to the simplex $\Delta$ for all times $t$.
In order to obtain an analytical description, it suffices
to study the behavior of the curve at times $t$ close to $0$.
Using this observation, we get the boundary of spectrum $\c A_N^Q$ by a parametric
expression,
\begin{equation}
e^{\pm i t - t \tan\left(\pi/N\right)}
\ \text{ for }
t \in \left[0,\pi \right].
\label{spira}
\end{equation}
\begin{figure}[ht]
	\centering
	\includegraphics[width=0.57\linewidth]{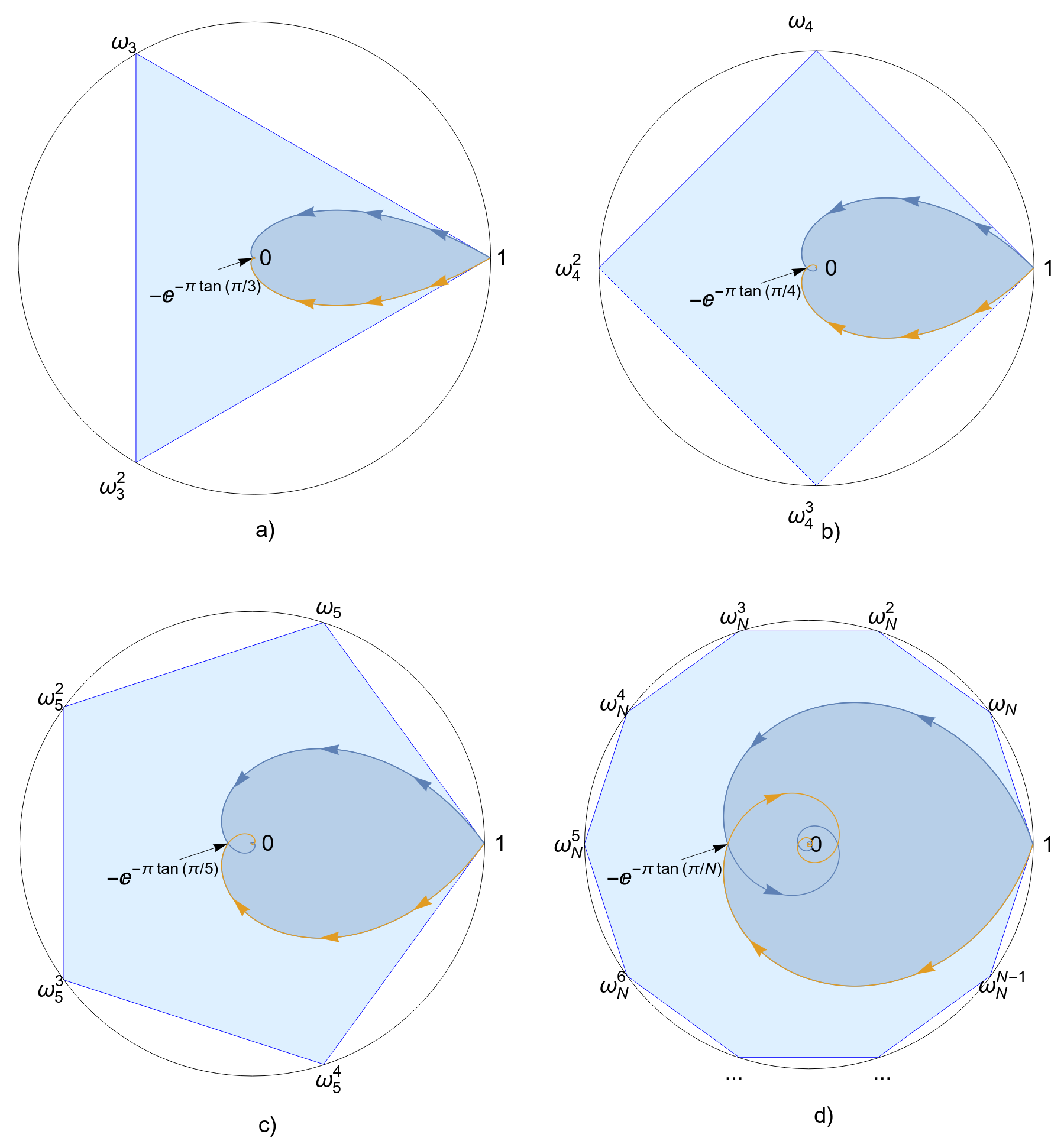}
	\caption{
	  Shaded polygons, spanned by roots of unity, represent
		  support of the spectra of  superoperators $\Phi_p$ representing all Weyl channels
		     acting in dimension $N=3$, $4$, $5$, $10$ for panels a)-d), respectively.
     Support of the spectra of accessible Weyl channels from $\mathcal{A}_N^Q$
      is denoted by a dark set bounded by two logarithmic	spirals
		(\ref{spira}),  which cross at $x_{min}= - e^{-\pi \tan 	\left(\pi/N\right)}$,
		so that the origin, $z=0$, belongs to its interior.
		The same shapes correspond to spectra of all circulant bistochastic
		matrices and matrices accessible by a classical semigroup from ${\cal A}_3^C$.
	}
	\label{fig:accesible}
\end{figure}

Note that in the region described above, equal to the support of spectra
of accessible classical transition matrices from $\mathcal{A}_N^C$, 
there is a line segment of negative values. 
More precisely, the intersection of 
of the spectra of accessible classical and quantum maps
with the real line is
$
\mathcal{A}_N \cap \mathbb{R} = [-e^{-\pi \tan\left(\pi/N\right)}, 1].
$
Thus, the set $\mathcal{A}_N^C$ of bistochastic matrices accessible by a classical semigroup
contains also matrices with degenerated eigenvalues which are real and negative.
The minimum real eigenvalue for a matrix $T$ belonging to $\mathcal{A}^{C}_N$,
or for a superoperator $\Phi$ from $\mathcal{A}^{Q}_N$,
reads
\begin{equation}\label{key}
x_{\rm min}(N)= -e^{-\pi  \tan \left(\pi / N\right)} =
-1+\frac{\pi ^2}{N}+\mathcal{O}\left(\frac{1}{N^2}\right).
\end{equation}
For $N=3$ this number is close to zero, $x_{\rm min}(3)=
-e^{-\sqrt{3}\pi} \simeq -0.004333$,
but for large dimension one has, $x_{\rm min}(N) \to -1$,
and the support of the spectrum of accessible maps covers completely the unit disk.
In order to calculate the area of this set, we consider real and imaginary part
of the parametric expression, $x(t)$ and $y(t)$, respectively
\begin{equation}
\begin{split}
x(t) &=
\cos (t) e^{-t  \tan\left(\pi/N\right)},
\\
y(t) &=
\sin (t) e^{-t  \tan\left(\pi/N\right)}.
\end{split}
\end{equation}
Next, we calculate the area enclosed by this curve using standard analytical methods
\begin{equation}
  V_N=
2 \int_{0}^{\pi} x(t) y'(t) \dd t
=
\frac{1}{2} \left(1-e^{-2 \pi  \tan \left(\pi/N\right)}\right) \cot
\left(\frac{\pi }{N}\right)
=
\pi -\frac{\pi ^3}{N}+\mathcal{O}\left(\frac{1}{N^2}\right) ,
\end{equation}
which for $N=3$ gives
 $V_3=\frac{1-e^{-2 \sqrt{3} \pi }}{2 \sqrt{3}} \simeq 0.28867$.
Observe that for any $z\in \mathbb{C}\backslash\mathbb{R}$,
with $|z| \leq 1$, the curve $z^t$ tends to $0$ as $t \to
\infty$. The trajectory is a logarithmic spiral (\emph{Spira mirabilis}),
 and therefore, it crosses the negative real axis infinitely many times.

It should be mentioned here that in the mathematical
literature, stochastic matrices accessible by a classical semigroup are called
\emph{embeddable} and an expression equivalent to Eq.~\eqref{spira} appeared in the
literature \cite{Ru62,Da10}.
The characteristic heart-like set plotted in Fig.~\ref{fig:accesible}a
was analyzed recently \cite{LKM17} in the context of
Markovian evolution of quantum coherence for $N=3$ systems.
Note that the logaritmic spiral ~\eqref{spira} plotted in the case $N=3$
in Fig. \ref{fig:accesible}a, is relevant for our study in several ways:
 it characterizes
a) the boundary of the set of quantum accessible Weyl channels $\mathcal{A}^{Q}_3$
at the triangular face $\Delta(\Phi_{\mathbb{I}}, \Phi_ X, \Phi_{X^{2}})$ -- see
Fig. \ref{fig:4panel}a; \
b)  the boundary of  the set $\mathcal{A}^{C}_3$ of classical accessible matrices
in the triangle  $\Delta(X,X^2, \mathbb{I}_3)$ of circular bistochastic matrices -- see Fig. \ref{DAfig1}c;
 the boundary of the support of the spectrum of members of
  c)  quantum accessible maps from $\mathcal{A}^{Q}_3$;\
  d) classical accessible transition matrices from $\mathcal{A}^{C}_3$ -- see Fig. \ref{DAfig3}c.
 The two latter statements concerning the boundary of the spectra
   in the complex plane are also valid for higher dimensions $N$
   -- see Fig.   \ref{fig:accesible}b-d.

\section{Lindblad dynamics for qutrits}\label{sec:four}

  \subsection{Explicit criterion for accessibility by a Lindblad semigroup}

Let us now apply previous results for the special case $N=3$.
Although the explicit form of Weyl unitary matrices is not relevant here,
 we wish to explain the notation for future reference.
For the sake of brevity, we are going to use one
index $\mu\in\{0,\dots,8\}$ instead of two $k,l\in\{0,1,2\}$, related by
$\mu=3k+l$.
This means,
up to a phase, $U_1$ is equal to  $U_2^\dagger$,   $U_3$ is
$U_6^\dagger$, $U_4$ equals to $U_8^\dagger$, and  $U_5$ is
$U_7^\dagger$ which in turn implies  following relations on
the eigenvalues:  $\lambda_2=\overline\lambda_1$,
$\lambda_3=\overline\lambda_6$, $\lambda_4=\overline\lambda_8$, and
$\lambda_5=\overline\lambda_7$.

In the case $N=3$ the Hermitian matrix $H$ of size $N^2=9$,
 defined in Eq.~\eqref{M}
can be represented in the element-wise logarithm notation
useful for complex Hadamard matrices \cite{TZ06},
\begin{eqnarray}\label{Mqutrit}
\log H&=&
\frac{2 \pi  i}{3}
\left(
\begin{array}{ccccccccc}
 \bullet & \bullet & \bullet & \bullet & \bullet & \bullet & \bullet & \bullet & \bullet \\
 \bullet & \bullet & \bullet & 2& 2 & 2 & 1 & 1 & 1 \\
 \bullet & \bullet & \bullet & 1 & 1 & 1 & 2 & 2 & 2 \\
 \bullet & 2 & 2 & \bullet & 1 & 2 & \bullet & 1& 2 \\
 \bullet & 1 & 2 & 2 & \bullet & 1 & 1 & 2 & \bullet \\
 \bullet & 1 & 2 & 1 & 2 & \bullet & 2 & \bullet & 1 \\
 \bullet & 2 &  1 & \bullet & 2 & 1 & \bullet & 2& 1 \\
 \bullet & 2 &  1 & 2 & 1 & \bullet & 1 & \bullet & 2 \\
 \bullet & 2 &  1 & 1 & \bullet & 2 &
 2 & 1 & \bullet \\
\end{array}
\right),
\end{eqnarray}
where
 $\bullet$ represents zero,
 so that $H_{11}=\exp(0)=1$, while $H_{24}=\exp(i 4\pi/3)=\omega_3^2$.
Applying $H$ on the probability vector $p_\mu$ defining the
mixed unitary channel (\ref{phi})
yields eigenvalues $\lambda_{\mu}$ of the superoperator $\Phi_{\vec p}$.
To describe a legitimate quantum channel
accessible by a semigroup, $\Phi_{\vec p} \in \mathcal{A}_3^Q$,
the interaction times need to be non-negative, $t_{\mu}\ge 0$,
which  due to Eq.~\eqref{time} yields
constraints for the complex eigenvalues $\lambda_{\mu}$.

\begin{eqnarray}
\nonumber\log\Bigl(\frac{|\lambda_1|^2}{|\lambda_3||\lambda_4||\lambda_5|}\Bigr)\pm\sqrt{3}\left(\theta_3+\theta_4+\theta_5
+M_3+M_4+M_5\right)&\geq&0,\\ \nonumber
 \log\Bigl(\frac{|\lambda_3|^2}{|\lambda_1||\lambda_4||\lambda_5|}\Bigr)\pm\sqrt{3}\left(\theta_1+\theta_4-\theta_5
 +M_1+M_4-M_5\right)&\geq&0,\\ \nonumber
\log\Bigl(\frac{|\lambda_4|^2}{|\lambda_1||\lambda_3||\lambda_5|}\Bigr)\pm\sqrt{3}\left(\theta_1-\theta_3+\theta_5
+M_1-M_3+M_5\right)&\geq&0,\\
\log\Bigl(\frac{|\lambda_5|^2}{|\lambda_1||\lambda_3||\lambda_4|}\Bigr)\pm\sqrt{3}\left(\theta_1+\theta_3-\theta_4
+M_1+M_3-M_4\right)&\geq&0.
\end{eqnarray}
Here  $\theta_i$ denotes the phase of $\lambda_i$ and $M_i\in\mathbb{Z}$
is an integer.
Since the arguments of the logarithm have to be positive
the second line implies $|\lambda_3|\ge \sqrt{|\lambda_1| |\lambda_4||\lambda_5|}$.
Substituting this to an analogous equation
 from the first line, $|\lambda_1|^2\ge \sqrt{|\lambda_3| |\lambda_4||\lambda_5|}$,
we obtain an inequality,  $|\lambda_1|^{3/2}\ge  |\lambda_4|^{3/2} |\lambda_5|^{3/2}$.
Taking both sides to the power $2/3$ and repeating these
steps with the other equations,
we arrive  at the following necessary condition on the eigenvalues:
\begin{equation}\label{necessary}
  |\lambda_\alpha|\geq|\lambda_\beta||\lambda_\gamma|,
\end{equation}
for any choice of indices but the leading one,
$\alpha \ne \beta \ne \gamma \ne 0$. These  conditions are analogous to
 relations (\ref{lambda3}) defining the set ${\cal A}_2^Q$
 of  qubit channels accessible by a semigroup \cite{DZP18,PRZ19},
  except that for $N=3$ these conditions are necessary but not sufficient.

  \subsection{Product probability vectors}
 Applying Eq. \eqref{mprime},
 we obtain eigenvalues of a dynamical semigroup as an explicit function of time:
\begin{eqnarray}
\nonumber\lambda_0&=&1,\\ \nonumber
\lambda_{1,2}&=&\e^{-\frac{3}{2}(t_3+t_4+t_5+t_6+t_7+t_8)}\e^{\pm i\frac{\sqrt{3}}{2}(-t_3-t_4-t_5+t_6+t_7+t_8)},\\ \nonumber
\lambda_{3,6}&=&\e^{-\frac{3}{2}(t_1+t_2+t_4+t_5+t_7+t_8)}\e^{\pm i\frac{\sqrt{3}}{2}(t_1-t_2+t_4-t_5+t_7-t_8)},\\ \nonumber
\lambda_{4,8}&=&\e^{-\frac{3}{2}(t_1+t_2+t_3+t_5+t_6+t_7)}\e^{\pm i\frac{\sqrt{3}}{2}(t_1-t_2-t_3+t_5+t_6-t_7)},\\ \nonumber
\lambda_{5,7}&=&\e^{-\frac{3}{2}(t_1+t_2+t_3+t_4+t_6+t_8)}\e^{\pm i\frac{\sqrt{3}}{2}(t_1-t_2+t_3-t_4-t_6+t_8)},\\ \nonumber
\end{eqnarray}
For channels $\Phi_{\vec p}$ accessible  by a semigroup, the probabilities
can be expressed by Eq.~\eqref{ptime} as explicit functions of time.
These functions are analogous to Eq. \eqref{qubitprob}
valid for $N=2$,
but contain many more terms,
For instance, in the case $N=3$ the time dependence of $p_{0}$
has the form
\begin{eqnarray}
\nonumber p_{0}=\frac{1}{9}\big(1&+&2\e^{-\frac{3}{2}
(t_1+t_2+t_3+t_5+t_6+t_7)}\cos{[\textstyle\frac{\sqrt{3}}{2}(t_1-t_2-t_3+t_5+t_6-t_7)]}\\
  \nonumber&+&2\e^{-\textstyle\frac{3}{2}
(t_3+t_4+t_5+t_6+t_7+t_8)}\cos{[\textstyle\frac{\sqrt{3}}{2}(t_3+t_4+t_5-t_6-t_7-t_8)]}\\
  \nonumber&+&2\e^{-\textstyle\frac{3}{2}
(t_1+t_2+t_4+t_5+t_7+t_8)}\cos{[\textstyle\frac{\sqrt{3}}{2}(t_1-t_2+t_4-t_5+t_7+t_8)]}\\
  \nonumber&+&2\e^{-\frac{3}{2}
(t_1+t_2+t_3+t_4+t_6+t_8)}\cos{[\textstyle\frac{\sqrt{3}}{2}(t_1-t_2+t_3-t_4-t_6+t_8)]}\big).
\end{eqnarray}

\medskip

A particular result characteristic to the single qubit case is the structure of the boundary
$\partial{\cal A}_2^Q$
of the set of accessible maps, which consists of the channels corresponding
to probability vectors with a product structure  \cite{PRZ19},
\begin{equation}\label{qubitproductprob}
  p_0p_i=p_jp_k,
\end{equation}
for any different choice of $i,j,k$ from the set $\{1,2,3\}$.
In the case $N=3$ we are dealing with 8-dimensional simplex and
product probability vectors form at most a 4-dimensional
subset of the 7-dimensional boundary of $\mathcal{A}_3$.
Therefore, there are other channels at the boundary
represented by a probability vector $p$ without
the product structure.

\begin{proposition}\label{pro:classicalproduct}
Consider any four out of eight Lindblad generators $\mathcal{L}_\mu$ defined
 by Eq. \eqref{lmu} whose corresponding unitaries $U_\mu$ are Hermitian
 conjugates up to a phase. Let us represent them by
 $\mathcal{L}_a,\mathcal{L}_b,\mathcal{L}_\alpha,\mathcal{L}_\beta$
  for which, up to a phase, $U_a$ equals to $U_b^\dagger$ and $U_\alpha$
  equals to $U_\beta^\dagger$. For this set the associated dynamical
  semigroup defined by Eq. \eqref{lt} belongs to the 4-dimensional boundaries
   of $\mathcal{A}_3^Q$ and the corresponding  probability vector possesses
   the product structure.
\end{proposition}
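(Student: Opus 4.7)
The proposition makes two claims: that each four-generator semigroup lies on a $4$-dimensional face of $\partial \mathcal{A}_3^Q$, and that its probability vector factorizes. I would address them in that order.

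For the first claim, I invoke the boundary description derived at the end of Section~\ref{sec:three}: a channel in $\mathcal{A}_N^Q$ lies on $\partial \mathcal{A}_N^Q$ precisely when at least one nontrivial interaction time $t_\mu$ vanishes. Using only the four chosen generators forces the remaining four interaction times to zero, confining $\Phi$ to a codimension-four face of the $8$-dimensional set $\mathcal{A}_3^Q$, parametrized by the four surviving times $(t_a, t_b, t_\alpha, t_\beta)\in[0,\infty)^4$.

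For the product structure, I exploit the cyclic-group structure peculiar to $N=3$: every Weyl matrix satisfies $U_\mu^3 = \Id$ (up to a phase), so the triple $\{\Id, U_a, U_b\}$ with $U_b = U_a^\dagger$ constitutes a cyclic group $G_1 \simeq \mathbb{Z}_3$, and similarly $\{\Id, U_\alpha, U_\beta\}$ generates $G_2 \simeq \mathbb{Z}_3$. By Lemma~\ref{commutativity} the four generators $\mathcal{L}_\mu$ commute pairwise, so the semigroup~\eqref{lt} factorizes as
\begin{equation*}
\Lambda_t = e^{t_a \mathcal{L}_a + t_b \mathcal{L}_b}\; e^{t_\alpha \mathcal{L}_\alpha + t_\beta \mathcal{L}_\beta}.
\end{equation*}
Each factor is a convex combination of the three superoperators $g \otimes \bar{g}$ with $g \in G_i$, carrying probability weights $(r_0, r_1, r_2)$ on $G_1$ and $(s_0, s_1, s_2)$ on $G_2$. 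Multiplying these commuting factors yields
\begin{equation*}
\Lambda_t = \sum_{i,j=0}^{2} r_i s_j \; (U_a^i U_\alpha^j) \otimes \overline{(U_a^i U_\alpha^j)},
\end{equation*}
where the phases arising from property (a) of Section~\ref{sec:Weyl} cancel between the factor and its complex conjugate.

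The crucial combinatorial step is that the index map $(k,l)\mapsto U_{kl}$ identifies the Weyl matrices, modulo phases, with $\mathbb{Z}_3\times\mathbb{Z}_3$; the four possible Hermitian-conjugate pairs correspond exactly to the four nontrivial cyclic subgroups (lines) through the origin, and any two distinct such lines span the whole group. Consequently the nine products $U_a^i U_\alpha^j$ enumerate all nine Weyl matrices without repetition, and after relabelling by the Weyl index $\gamma(i,j)$ one obtains $p_{\gamma(i,j)} = r_i s_j$. This factorization immediately implies multiplicative identities of the form $p_0\, p_{\gamma(i,j)} = p_{\gamma(0,j)}\, p_{\gamma(i,0)}$, generalizing the $N=2$ boundary relations~\eqref{qubitproductprob}. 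The main technical obstacle is verifying this enumeration; it can be handled either exhaustively over the $\binom{4}{2}=6$ inequivalent pair choices, or cross-checked against multiplicative eigenvalue identities such as $\lambda_1 \lambda_3 = \lambda_4$ and $\lambda_2 \lambda_3 = \lambda_5$ that follow directly from setting the four complementary interaction times to zero in the spectral formulas displayed earlier in this section.
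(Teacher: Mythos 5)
Your proof is correct and follows the same skeleton as the paper's: the boundary claim is handled identically (four vanishing interaction times place the channel on $\partial\mathcal{A}_3^Q$ by the discussion at the end of Section~\ref{sec:three}), and the product structure comes from the same factorization $\Lambda_t=\Lambda_{t_a}\Lambda_{t_b}\cdot\Lambda_{t_\alpha}\Lambda_{t_\beta}$. Where you diverge is in how the factorization of $\vec p$ is justified. The paper computes the probabilities explicitly as functions of time via Eq.~\eqref{ptime} and exhibits $\vec{\tilde p}=\vec w(a,b)\times\vec w(\alpha,\beta)$ with concrete local weights $f_1,f_2,f_3$; you instead argue structurally that each conjugate pair together with the identity spans a $\mathbb{Z}_3$ subgroup of the Weyl group modulo phases, that each factor of the semigroup is therefore a convex combination supported on one subgroup (nonnegativity of the weights following from positivity of the Choi matrix of each factor, as established after Eq.~\eqref{ltexplicit}), and that two distinct order-$3$ subgroups of $\mathbb{Z}_3\times\mathbb{Z}_3$ intersect trivially and hence their products enumerate all nine Weyl labels, giving $p_{\gamma(i,j)}=r_i s_j$. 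Your route buys a cleaner conceptual explanation of \emph{why} the product structure appears (and why exactly the conjugate-pair groupings work), at the cost of not producing the explicit time dependence $f_i(t_x,t_y)$ that the paper records and reuses; the two arguments are otherwise equivalent, and your cross-check identities such as $\lambda_1\lambda_3=\lambda_4$ do follow from the displayed spectral formulas upon setting the complementary times to zero.
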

\begin{proof}
The corresponding 8--dimensional vector $t_{\mu}$ of interaction times has
four components  equal to zero and four non-zero components,
so the channel  belongs to the boundary  of $\mathcal{A}_3^Q$.
 To see the product structure of the probability vector $p_{\mu}$
  one can express probabilities as  explicit functions of time through Eq. \eqref{ptime}.
 For the probabilities appearing in $\Lambda_{t_{a}}\Lambda_{t_{b}}\Lambda_{t_\alpha}
\Lambda_{t_{\beta}}=\sum p_\mu U_\mu \otimes \overline{U}_\mu$,
one has $\vec{\tilde{p}}=\vec{w}(a,b)\times\vec{w}(\alpha,\beta)$
in which $\vec{\tilde{p}}$ is a permutation of the probability  vector $\vec{p}$,
while $\vec{w}(x,y)=\big(f_1(x,y),f_2(x,y),f_3(x,y)\big)$ is a
local probability vector of size three given by
\begin{eqnarray}
\nonumber
f_1(x,y)&=&\frac{1}{3}\big(1+2\e^{-\frac{3}{2}(t_x+t_y)}\cos[\textstyle\frac{\sqrt{3}}{2}(t_x-t_y)]\big),\\
\nonumber
f_2(x,y)&=&\frac{1}{3}\big(1-\e^{-\frac{3}{2}(t_x+t_y)}\cos[\textstyle\frac{\sqrt{3}}{2}(t_x-t_y)]+\sqrt{3}\e^{-\frac{3}{2}(t_x+t_y)}\sin[\textstyle\frac{\sqrt{3}}{2}(t_x-t_y)]\big),\\
f_3(x,y)&=&\frac{1}{3}\big(1-\e^{-\frac{3}{2}(t_x+t_y)}\cos[\textstyle\frac{\sqrt{3}}{2}(t_x-t_y)]-\sqrt{3}\e^{-\frac{3}{2}(t_x+t_y)}\sin[\textstyle\frac{\sqrt{3}}{2}(t_x-t_y)]\big).
\end{eqnarray}
\end{proof}
A particular case of~\pref{pro:classicalproduct} corresponds
to the channels with real eigenvalues $\lambda_{\mu}$.
Note that this case is not generic: sampling uniformly
vectors $\vec r$, $\vec s\in\Delta_3$ generates
channels with complex eigenvalues with probability one.
This can be seen by decomposing the eigenvalues
of $\Phi_{\vec{p}}$ into its real and imaginary
parts. Equating the imaginary part of each eigenvalue to $0$ impose
a large list of restrictions on $\vec r$ and $\vec s$,
leaving-at most-one free parameter of the original four ($r_1$,
$r_2$, $s_1$, $s_2$). Thus, the region
in the space of mixed unitary channels,
for which all the eigenvalues $\lambda_{\mu}$ are positive has zero measure.
\par
An interesting implication of~\pref{pro:classicalproduct} is that
 inside the set $\mathcal{A}_3^Q$ of accessible maps there are no channels with rank $2$.
 In other words, the edges connecting identity map $U_0$ with the Weyl unitary rotations $U_{\mu}$
contain no accessible maps.
This fact is visualized in Fig.~\ref{fig:4panel}
which presents several cross-sections of the 8D
simplex $\Delta_8$ of mixed unitary Weyl channels and
its subset $\mathcal{A}_3^Q$  of accessible maps.

Another difference  concerning the structure of the set of accessible maps
for $N=2$ and $N=3$ is that in the former case  $p_0$ is the largest
component for the probability vector of $\Phi\in\mathcal{A}_2^Q$
for any $t\geq0$ \cite{PRZ19}.
This is not the case for $N=3$, even if we choose only one
interaction time $t_{\mu}$ to be nonzero -- see Fig.~\ref{fig:4panel}a.
\begin{figure}
  \includegraphics{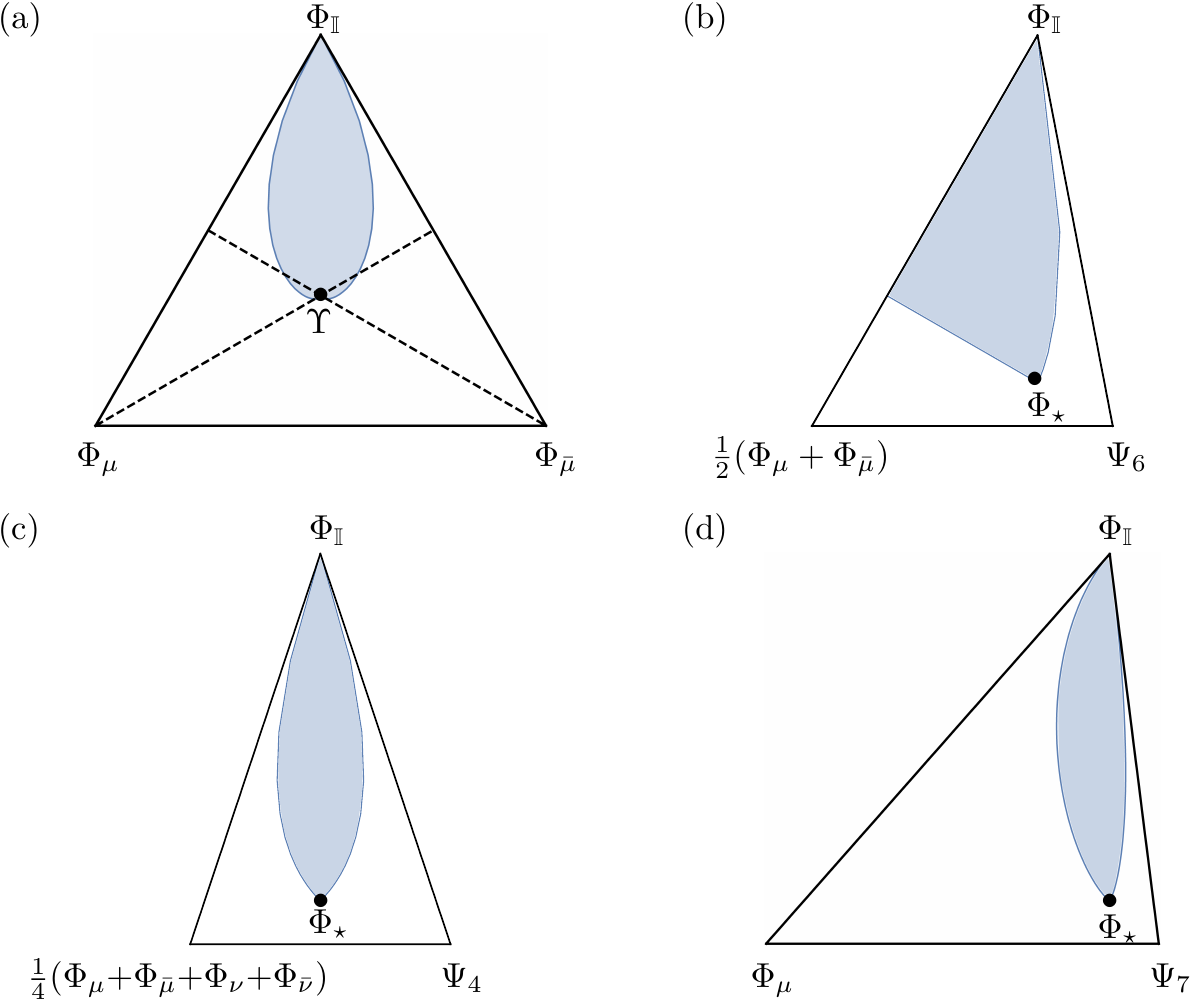}
\caption{
Faces or cross-section of the simplex  $\Delta_8$  of mixed unitary channels
with cross-sections of the set ${\cal A}_3^Q$ of maps accessible by a semigroup
denoted in blue (gray).
Completely depolarizing channel is represented by $\Phi_\star$,
while identity channel reads $\Phi_\Id=\Id\otimes\Id$.
Sections are determined by  $\Phi_{\Id}$
and
 \textbf{a)}
$\Phi_\mu=U_\mu\otimes\overline{U}_\mu$ and
$\Phi_{\overline{\mu}} = U^{\dagger}_\mu\otimes U^{T}_\mu$.
The center of the triangle represents $\Upsilon =
\frac{1}{3}(\Phi_\Id+\Phi_\mu+\Phi_{\overline{\mu}})$;
 \textbf{b)}
$(\Phi_\mu+\Phi_{\overline\mu})/2$, and
$\Psi_6=\frac{1}{6}\sum_{\beta}
\Phi_{\beta}$, with
$\beta\neq 0$, $\mu$, and $\overline\mu$;
 \textbf{c)}
$\frac{1}{4} (\Phi_{\mu}+\Phi_{\overline\mu}
                             + \Phi_{\nu}+\Phi_{\overline\nu})$
and
                             $\Psi_4=\frac{1}{4} (\sum_{\beta}\Phi_{\beta})$
                       with $\beta\ne 0, \mu, {\overline\mu},\nu, {\overline\nu}$;
and
 \textbf{d)}
$\Phi_\mu$, and
$\Psi_7=\frac{1}{7}\sum_{\beta}
   \Phi_{\beta}$, where
$\beta\neq 0$, and $\mu$.
Notice that cross-section of the set ${\cal A}_3^Q$ presented in panel a)
is up to rotation equivalent to the support of the spectra of $N=3$ accessible channels
    shown in Fig. \ref{fig:accesible}a.
}
\label{fig:4panel}
\end{figure}

 \subsection{Accessible maps for qubits and qutrits}
  To describe properties of the set  $\cal{A}_N$ of maps accessible by a semigroup
    we shall compare the volumes of these sets for $N=2$ and $N=3$
    relative to the volumes of the simplex $\Delta_{N^2-1}$ of Weyl mixed unitary channels.
   In the qubit case the set of mixed unitaries corresponds to the tetrahedron
   spanned by the identity operation, $\Phi_0=\Id\otimes\Id$,
       and unitary rotations associated to three Pauli matrices,
$\Phi_1=\sigma_x\otimes\overline{\sigma}_x$,
$\Phi_2=\sigma_y\otimes\overline{\sigma}_y$, and
$\Phi_3=\sigma_z\otimes\overline{\sigma}_z$.
As the length of each edge of the regular tetrahedron can be set to unity,
its  volume reads, $V_{\Delta_3}=\sqrt{2}/12$.

To calculate the volume of its subset ${\cal A}_2^Q$ we use
Eq.~\eqref{qubitprob}
and work with three-dimensional Cartesian coordinates.
Each point inside the tetrahedron $\Delta_3$ can specified by its coordinates
$X,Y,Z$.
 For channels in $\Delta_3$ which also belong to ${\cal A}_2^Q$, the
 probabilities $p_0$, $p_1$, $p_2$, and $p_3$ are defined by Eq.~\eqref{qubitprob}
 suggesting $X$, $Y$, and $Z$ as an explicit function of $t_1$, $t_2$, and $t_3$.
Let $|J|=\exp[-4(t_1+t_2+t_3)]/\sqrt{2}$ denotes the Jacobian determinant associated to such a change of variables.
The volume of the set  of one-qubit accessible maps reads then
\begin{eqnarray}
V_{\c A_2}=\iiint_{\c A_2}\mathrm{d}X\mathrm{d}Y\mathrm{d}Z=\iiint|J|\mathrm{d}t_1\mathrm{d}t_2\mathrm{d}t_3=
\frac{1}{\sqrt{2}}
\int_0^\infty\e^{-4t_1}\mathrm{d}t_1\int_0^\infty
\e^{-4t_2}\mathrm{d}t_2\int_0^\infty\e^{-4t_3}
\mathrm{d}t_3=\frac{1}{4^3\sqrt{2}}.
\end{eqnarray}
Hence the relative volume of the set of maps accessible by a semigroup is
${V_{\c A_2}}/{V_{\Delta_3}}=3/32=0.09375$,
as it forms a quarter of the volume of the set
${\c U_2^Q}$ of one-qubit unistochastic channels
investigated in \cite{NA07,Si19,Si20}.
On the other hand,  for the qutrit channels
the analogous ratio gained numerically is much smaller,
${V_{\lindblad}}/{V_{\Delta_8}}= 9.02\times10^{-4}\pm
1.8\times 10^{-5}$. This estimation was obtained by sampling
$10^{6}$ different random probability vectors $\vec{p}\in \Delta_8$
and verifying whether the corresponding mixed unitary channel
satisfies all conditions  (\ref{time}) for accessibility.
As for qubit channels this ratio  is larger by two orders of magnitude, one
can expect a significant decrease of the relative volume of accessible channels
with their dimension $N$.
\medskip

To analyze the structure of the set $\c A^Q_3$ of accessible maps
we shall analyze cross-sections of the simplex   ${\Delta_8}$
determined by fixing the weight $p_0$ of the identity component
of a mixed unitary channel (\ref{mixeduni}).
  Consider an $8$-point random vector
   $\vec{\varphi} \in {\Delta_7}$ from the $7$ dimensional simplex
 and define a $9$ dimensional vector,
 $\vec{p} = [p_0, (1-p_0)\vec{\varphi}]$.
The corresponding mixed unitary channel $\Phi_{\vec{p}}$ is  thus given by Eq. (\ref{phi}).

Generating auxiliary vectors  $\vec{\varphi}$ according to the flat measure in the simplex ${\Delta_7}$
we obtain an ensemble of random vectors $\vec{p}$
and thus random unitary channels with a fixed size of the identity component $p_0$ and
 study the fraction of probability vectors corresponding to accessible channels.
 Numerical results presented in ~Fig.~\ref{testd}
 show that  for  $N=3$  the relative volume of the analyzed set $\lindblad$,
grows monotonically to $1$ as a function of the
weight $p_0$ of the probability vector defining the channel.
For comparison we plotted analogous data obtain for the case $N=2$
for which the relative volume grows faster with the weight $p_0$.
However, in this case the curve has an inflection point
which does not occur for $N=3$. The inset shows behavior for small values of $p_0$,
as the relative volume is positive already for $p_0 <1/9$.
This observation is consistent with Fig.~\ref{fig:4panel}a,
showing that in the case $N=3$
 there exist accessible maps $\Phi_p$,
  for which the component $p_0$ is not the largest one.
\par
\begin{figure}[ht]
    \centering
    \includegraphics[]{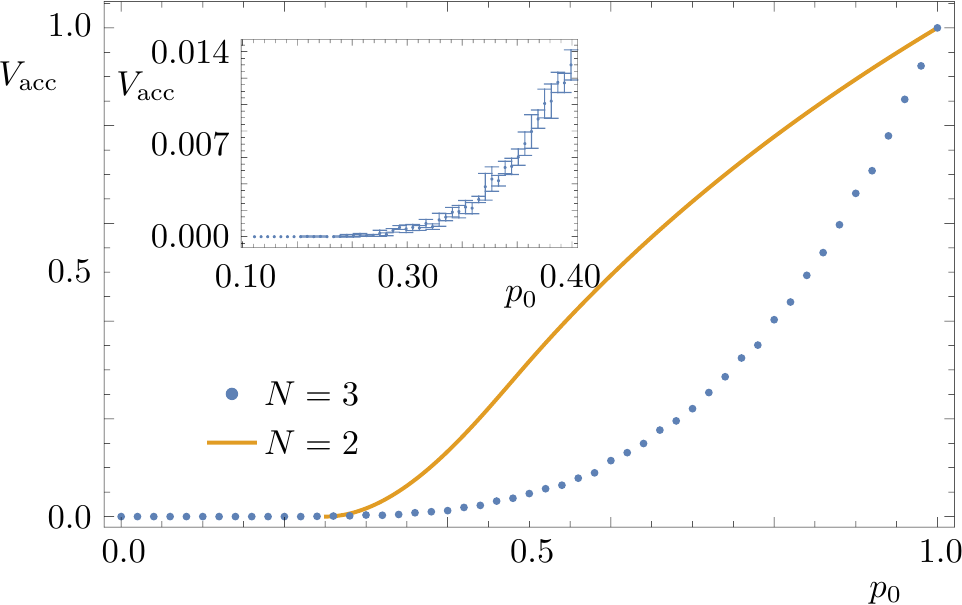}
    \caption{
    Relative volume $V_{\mathrm{acc}}$ of the set ${\cal A}_3^Q$ of accessible
channels at a cross-section of the probability
simplex $\Delta_8$ of mixed unitary channels specified by the fixed value of the identity component $p_0$.
      Dotted blue line represents numerical results   for $N=3$ estimated by the fraction of accessible maps.
      Each point was calculated using samples of size $10^5$ channels, so
      the error bars are smaller than the size of the dot.
      Inset zooms in  the part of the plot, for which  the relative volume is positive.
    Continuous orange curve
      represents analytical results obtained for $N=2$. }
    \label{testd}
\end{figure}

 \subsection{Bistochastic matrices accessible by $N=3$  classical semigroup}

As discussed in  Section \ref{sec:six},
any mixed Weyl channel decoheres to a classical circulant bistochastic matrix.
We shall analyze the issue here  for $N=3$ in some more detail.
Related question of describing the subset of the Birkhoff polytope
containing these bistochastic matrices of order three,
so that their square root is bistochastic, was studied in \cite{SZ18}.

Consider a random Weyl channel $\Phi_{\vec p}$
distributed uniformly from the triangle  $\Delta(\Phi_\Id,\Phi_{X}, \Phi_{X^2})$
shown in  Fig. \ref{fig:4panel}a
-- a 2D face of the $8D$ simplex $\Delta_8$ of all Weyl channels.
Circulant bistochastic transition matrices, obtained by hyper-decoherence,
$T  = \c D_{h}(\Phi_{\vec p})$ cover the triangle
$\Delta(\Id_3, X_3, X_3^2)$ uniformly -- see Fig.  \ref{DAfig1}a.
This is a consequence of the fact that in this  subspace the supermap
 $\c D_{h}$ is linear and transforms a regular triangle into itself,
\begin{equation}
 \c D_{h}\bigl(\sum_{j=0}^2 q_j \; \Phi_{X^j}\bigr) =
  \sum_{j=0}^2 q_j  X^j  = T_q.
\label{trans_triangle}
\end{equation}
 On the other hand, if one takes an average over the entire set
 ${\cal B}_3^Q$ of Weyl channels and samples the probability
 vector $\vec p$ uniformly in the  simplex $\Delta_8$
 the distribution of classical matrices exhibits maximum at the
 flat matrix $T_*$ with all entries equal to $1/3$
 --  see Fig. \ref{DAfig1}b.
 If the sampling is restricted to the set of accessible quantum channels only,
 its image under decoherence is supported inside the characteristic
 `heart--like' shape presented in panel  Fig.~\ref{DAfig1}c.

\begin{figure}[h]
  \centering
  \includegraphics[]{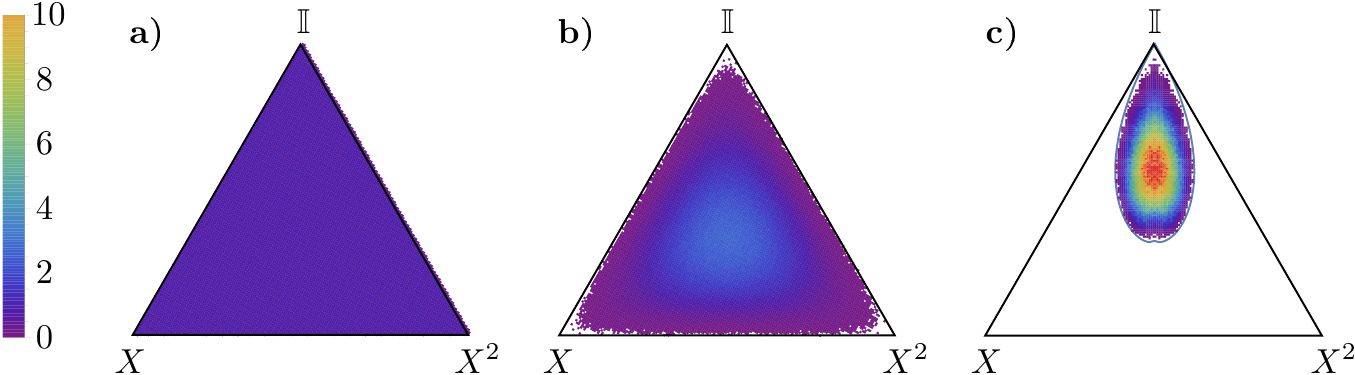}
  \caption{ Distribution of random circulant
  bistochastic matrices of size $N=3$ inside the regular triangle
  $\Delta(\Id_3, X_3, X_3^2)$ obtained by hyper-decoherence of:
 \textbf{a)}  random Weyl channels distributed uniformly from the 2D triangle
  $\Delta(\Phi_\Id,\Phi_{X}, \Phi_{X^2})$;
  \textbf{b)}
   random channels distributed uniformly from the entire
  8D simplex of Weyl channels;
  \textbf{c)}    random channels distributed uniformly from the
     set ${\cal A}_3^Q$ of quantum accessible channels.
      Light blue line represents the boundary of the set
       ${\cal A}_3^C$ of matrices accessible by classical semigroups --
      observe similarity with  Fig. \ref{fig:accesible}a.
       }
\label{DAfig1}
\end{figure}

In Eq. (\ref{fourcirc}) we used the fact that a circulant matrix $T$
can be diagonalized by the Fourier matrix $F$.
If the transition matrix is obtained by the convex combination of permutation matrices,
$T=\sum_{i=0}^2 q_j X_3^j$, then its spectrum consists of $1, z, {\bar z}$,
where $z=\sum_{i=0}^2 q_j \omega_3^j$.
Therefore, if the distribution of the probability vector $q$ is uniform in the
triangle of bistochastic matrices, so is the level density of subleading eigenvalues
in the regular triangle  $\Delta(1, \omega_3, \omega_3^2)$
inscribed in the unit disk ---compare Fig.~\ref{DAfig3}a.

To analyze the shape of the set ${\cal A}_3^C$ of  circulant
  bistochastic matrices  accessible by a classical semigroup
 we may use Proposition \ref{prop14}.
 Since the hyper-decoherence transformation, $T  = \c D_{h}(\Phi_{\vec p})$,
  acts linearly  (\ref{trans_triangle})
  in the
  triangle of quantum Weyl channels  $\Delta(\Phi_\Id,\Phi_{X}, \Phi_{X^2})$,
  the shape of the set  ${\cal A}_3^C$ of classically accessible matrices
  is identical with cross-section of the set
   ${\cal A}_3^Q$ of quantum accessible maps
   presented in Fig. \ref{fig:4panel}d.
   This set is uniformly covered when the same region,
   corresponding to quantum channels, is uniformly sampled.
   The structure of this star-shaped set, bounded by fragments of two
   symmetric logarithmic spirals, is discussed in Section \ref{AppSpiral}.
   This set appeared in the literature in context of
    analyzing Markovian evolution of quantum coherence
    and $N=3$ embeddable stochastic matrices \cite{LKM17}.

\medskip
  \begin{figure}[h]
 \centering
 \includegraphics[]{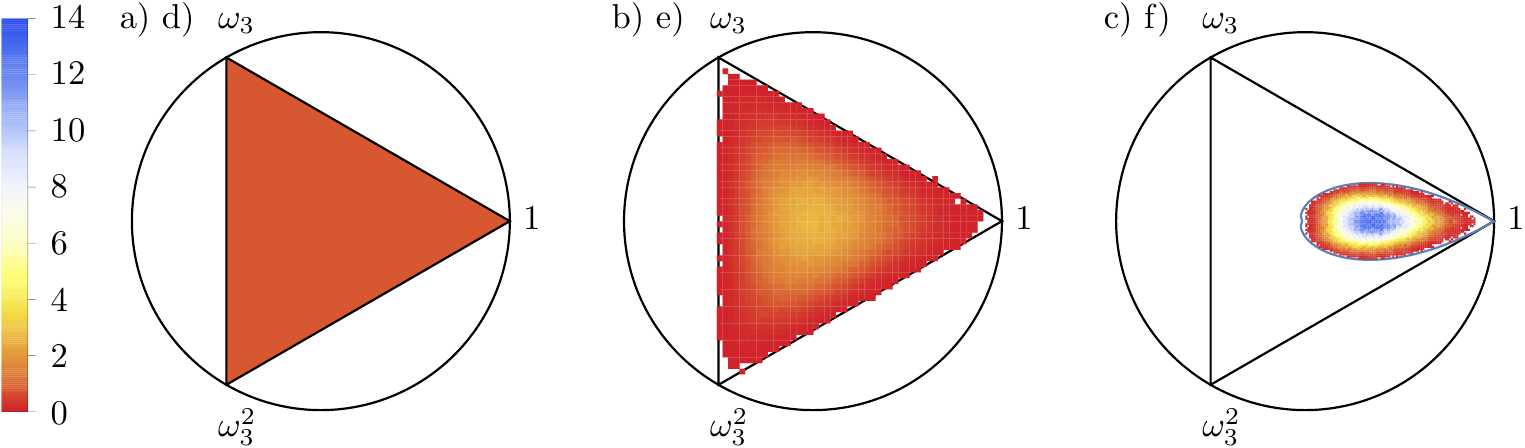}
  \caption{Probability distribution for
   superimposed spectra of $10^{4}$
      quantum maps acting on $3$-level systems:
       \textbf{a)} Weyl channels uniformly distributed in the triangle  $\Delta(\Phi_\Id,\Phi_{X}, \Phi_{X^2})$,
      \textbf{b)} random Weyl channels $\Phi_{\vec p}$ choosing  $\vec p$ from the flat distribution on $\Delta_8$,
     \textbf{c)} random Weyl channels distributed uniformly from the set $\mathcal{A}^{Q}_3$ of accessible quantum channels and spectra of the corresponding random circulant bistochastic matrices of size $N=3$:
       \textbf{d)} flat measure on the simplex  $\Delta_2$ of circulant bistochastic matrices,
      \textbf{e)} obtained by hyper-decoherence $T(\Phi_{\vec p})$ of random Weyl channels $\Phi_{\vec p}$ from $\Delta_8$,
     \textbf{f)} accessible circulant bistochastic matrices  $\mathcal{A}^{C}_3$, obtained  by hyper-decoherence $T(\Phi_{\vec p})$ of accessible quantum channels  from $\mathcal{A}^{Q}_3$ --
         the boundaries of the support are shown in  Fig. \ref{fig:accesible}a.
       Due to Proposition \ref{prop-channel-hyperdeco}
    panels d), e) and f) coincide with panels a), b), and c) respectively,
       so they are not plotted.}
\label{DAfig3}
\end{figure}

\section{Set  $\unicha$ of unistochastic channels acting on $N=3$ systems}
\label{sec:five}

As for $N=2$ the set ${\cal A}_2^Q$ of quantum accessible channels
forms a quarter of the set  ${\cal U}_2^Q$ of unistochastic channels,
so  the relation
${\cal A}_2^Q  \subset {\cal U}_2^Q$ holds \cite{PRZ19}.
Hence in this Section we  compare the corresponding sets for $N=3$.
A similar analysis
will be performed for the analogous classical sets inside the Birkhoff polytope
${\cal B}_3^C$ of bistochastic matrices of this size.

\subsection{Unistochastic channels and channels accessible by a semigroup} 

We start demonstrating that for one-qutrit system
the set  of quantum accessible channels is not included
in the set  of unistochastic channels,
written ${\cal A}_3^Q  \nsubseteq {\cal U}_3^Q$.

\begin{proposition}
 For $3$-level systems,
there exists a channel accessible by a Lindblad semigroup
which is not unistochastic.
\end{proposition}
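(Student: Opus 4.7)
The plan is to prove this existential statement by exhibiting a concrete witness, leveraging the semigroup parametrization of Section~\ref{sec:two}, which supplies accessible Weyl channels by construction. The nontrivial part is then to certify that the constructed channel admits no unistochastic dilation of the form \eqref{eq:defuni} with $d=N=3$.

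First, I would take a Lindblad generator $\mathcal{L}$ as in Eq.~\eqref{p'} with only a small number of nonzero interaction times $t_\mu$, so that the entire trajectory $\Phi(t)=\e^{t\mathcal{L}}$ sits inside $\mathcal{A}_3^Q$ by Corollary~\ref{logconv}. Its probability vector $\vec p(t)$ and Choi matrix $D_{\Phi(t)}=\sum_\mu p_\mu(t)\Ket{U_\mu}\Bra{U_\mu}$ are given explicitly by Eq.~\eqref{ptime} in the Weyl basis. Second, rather than directly refuting the existential statement $D_{\Phi(t)}=(U^R)^\dagger U^R/N$ for some $U\in U(9)$, I would invoke the relation between quantum and classical sets recorded in Table~\ref{tab:sets}: since $\mathcal{D}_h(\mathcal{U}_3^Q)=\mathcal{U}_{3,3}^C$, any channel whose hyper-decoherence fails to be $3$-unistochastic cannot itself be unistochastic, and by Theorem~\ref{IM A_N^Q} the image of $\Phi(t)$ under hyper-decoherence is the accessible circulant bistochastic matrix $T(\Phi(t))$ specified by Eq.~\eqref{qk1}. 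Third, I would look for a time $t_\ast$ at which $T(\Phi(t_\ast))$ escapes $\mathcal{U}_{3,3}^C$, using either explicit low-order unistochasticity criteria of Appendix~\ref{AppA} or a Haagerup-style numerical test of the type already cited in the paper; the accessible channel $\Phi(t_\ast)$ would then be the desired witness in $\mathcal{A}_3^Q\setminus\mathcal{U}_3^Q$.

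The main obstacle is that the classical reduction yields only a necessary condition. If it turns out that the hyper-decoherence of every channel in $\mathcal{A}_3^Q$ is already $3$-unistochastic, then the obstruction must be detected at the quantum level. In the absence of a closed-form criterion for $\mathcal{U}_3^Q$ in dimension three, this would require directly showing that the system $D_{\Phi(t_\ast)}=(U^R)^\dagger U^R/N$ has no solution in $U(9)$ for the specific $\vec p(t_\ast)$, e.g.\ by parametrizing the polar decomposition $V=\sum_\mu\sqrt{Np_\mu(t_\ast)}|w_\mu\rangle\Bra{U_\mu}/\sqrt{N}$ over all orthonormal sets $\{|w_\mu\rangle\}\subset\mathbb{C}^9$ and ruling out unitarity of every candidate reshuffling $V^R$. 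For this reason the witness is most likely certified numerically for a single, carefully chosen accessible channel rather than by a fully analytic argument, in the same spirit as the construction of a unistochastic channel decohering outside $\mathcal{U}_3^C$ that is sketched in Appendix~\ref{AppB}.
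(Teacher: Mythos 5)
Your overall shape — exhibit a concrete accessible channel and then certify that no dilation $U\in U(9)$ reproduces it via Eq.~(\ref{eq:defuni}) — is the right one, but neither of your two certification routes closes, and the second misses the structural trick that makes the paper's argument finite. On the classical route: the necessary condition you extract from Table~\ref{tab:sets} is $T(\Phi)\in\mathcal{U}_{3,3}^C$, i.e.\ $3$-unistochasticity, not ordinary unistochasticity. The Jarlskog criterion of Appendix~\ref{AppA} only detects failure of membership in $\mathcal{U}_3^C$, and Appendix~\ref{AppB} exhibits a matrix in $\mathcal{U}_{3,3}^C\setminus\mathcal{U}_3^C$, so $\mathcal{Q}(T)<0$ proves nothing about whether $\Phi\in\mathcal{U}_3^Q$. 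No explicit criterion for $3$-unistochasticity is available (the description of $\mathcal{U}_{3,3}^C$ on the circulant triangle is only a Conjecture in the paper), and certifying non-membership would itself require ruling out all of $U(9)$. Worse, for any Weyl channel $T(\Phi)$ is circulant and depends only on the marginal $q_k=\sum_l p_{kl}$; for the witness the paper actually uses, a mixture of $\Id$, $Z$, $Z^2$, one gets $T(\Phi)=\Id_3$, which is trivially unistochastic, so the classical reduction is vacuous on the relevant examples.

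The quantum-level fallback as you describe it — parametrize candidate dilations and ``rule out unitarity of every candidate numerically'' — is not a proof of a non-existence statement over an $81$-dimensional group. The missing idea is the choice of witness that collapses this search: take $\Phi_p=p_1\,\Id\otimes\Id+p_2\,Z\otimes\overline{Z}+p_3\,Z^2\otimes\overline{Z^2}$. Because the $U_\mu$ involved are all diagonal, the support of the Choi matrix forces any dilation to be block diagonal, $U=W_1\oplus W_2\oplus W_3$ with $W_i\in U(3)$, and the condition $D=\tfrac{1}{3}(U^R)^\dagger U^R$ reduces to three scalar equations
\begin{equation*}
\tr\,W_2^{\dagger}W_1=s,\qquad \tr\,W_3^{\dagger}W_1=s^{*},\qquad \tr\,W_3^{\dagger}W_2=s,
\qquad s=\tfrac{1}{2}(3p_1-1)+i\tfrac{\sqrt{3}}{2}(p_1+2p_2-1).
\end{equation*}
For $\vec p_c=(35,4,1)/40$ one checks directly that $\Phi_{\vec p_c}$ satisfies the three accessibility conditions of Subsection~\ref{lindbladchar}, while a parametrization of $U(3)$ shows the three trace conditions cannot hold simultaneously: satisfying any two of them excludes the third. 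That finite, analytic obstruction is what your proposal lacks.
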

\begin{proof}
  To show a  counterexample we analyze Weyl channels
  being a convex combination of $\Id$, $Z$, and $Z^2$,
  \begin{equation}
    \Phi_p = p_1 \Id\otimes\Id + p_2 Z\otimes \overline{Z} +
    (1-p_1-p_2) Z^2\otimes\overline{Z^2}.
    \label{eq:crossSectionZZ2}
  \end{equation}
  We wish to find a vector $p=(p_1,p_2,p_3)$
  such that there is no unitary dilation matrix $U\in U(9)$
  for which Eq. (\ref{eq:defuni}) produces the analyzed channel $\Phi_p$.
  Given that $Z$ is diagonal, the dilation matrix $U$ is block diagonal,
  with three unitary blocks $W_1,W_2$ and $W_3$, each of size $3$,
  \[
    U = W_1\oplus W_2 \oplus W_3
    =
  \diag(1,0,0)\otimes W_1 +\diag(0,1,0)\otimes W_2
+\diag(0,0,1)\otimes W_3.
  \]
  By unistochasticity of the channel $\Phi_p$ the
  corresponding Choi matrix $D=\Phi_p^R$ has to satisfy
  $D=\frac{1}{N}(U^R)^{\dagger}U^R$.
  This leads to the following conditions for $W_1$, $W_2$ 
  and $W_3$:
\begin{equation}
  \tr\ W_2^{\dagger}W_1 =
  s,\quad
  \tr\ W_3^{\dagger}W_1 =
s^{*},\quad
  \tr\ W_3^{\dagger}W_2 =
  s,
  \label{eq:unistochasticTest}
\end{equation}
where $s =  \frac{1}{2}\bigl(3p_1-1\bigr) +i\frac{\sqrt{3}}{2}\bigl(p_1+2p_2-1\bigr)$.

\par
Consider a probability vector with three non zero entries
given by $\vec{p}_c = (p_1,p_2,1-p_1-p_2) = (35,4,1)/40$.
Using $\vec p_c$ in Eq.~\eqref{eq:crossSectionZZ2},
we can verify that $\Phi_{\vec{p}_c}$ satisfies the three properties mentioned at
the end of Subsection~\ref{lindbladchar},
which implies that the channel is accessible, $\Phi_p \in{\cal A}_3^Q$.
Making use of the parametrization of the group of unitary matrices of size $3$
one can show that there are no
 unitary matrices $W_1, W_2, W_3\in U(3)$
which satisfy conditions  (\ref{eq:unistochasticTest})
for the vector $\vec{p}_c$ selected above:
if any two of these equalities are satisfied, the third one is not.
Hence for this $\vec{p}$ there is no unitary dilation matrix $U\in U(9)$
entering expression  (\ref{eq:defuni}),
so the channel  $\Phi_{\vec{p}}$ is not unistochastic.
\end{proof}

Although for $2$-level systems,
the following inclusion relations holds
${\cal A}_2^Q  \subset {\cal U}_2^Q$~\cite{PRZ19},
we have shown that the analogous property is not valid for $N=3$.
\par

\subsection{Unistochastic and $N$-unistochastic matrices} 

Due to the process of hyper-decoherence a quantum stochastic
map undergos transition to a classical stochastic matrix.
We shall  analyze relations between
subsets of the set ${\cal S}_N$  of stochastic matrices
introduced in Section  \ref{bisto}.

\begin{proposition}
For any $N$-level system, the set of
unistochastic matrices  ${\cal U}_N^C$ is contained in the set
${\cal D}_h({\cal U}_N^Q)$ of transition matrices
obtained by the hyper-decoherence \cite{BZ17} of a unistochastic channel.
\end{proposition}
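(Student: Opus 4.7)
The plan is to produce, for every $B\in{\cal U}_N^C$, an explicit unistochastic channel $\Psi_U\in{\cal U}_N^Q$ whose hyper-decoherence returns $B$. It suffices to show that the hyper-decoherence image of the isometric subset ${\cal I}_N\subseteq{\cal U}_N^Q$ already covers ${\cal U}_N^C$; monotonicity of ${\cal D}_h$ then gives the stated inclusion. The natural witness for $B=V\odot\bar V$ will be the isometric channel generated by the very same unitary $V$.

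Given $B\in{\cal U}_N^C$, by definition there is $V\in U(N)$ with $B=V\odot\bar V$. To realize the rotation $\rho\mapsto V\rho V^\dagger$ as a unistochastic channel in the sense of Eq.~\eqref{eq:defuni}, I would take the simplest tensor dilation $U=V\otimes W\in U(N^2)$ with $W\in U(N)$ arbitrary (for instance $W=\Id_N$). A direct computation gives $(V\otimes W)(\rho\otimes\Id_N/N)(V\otimes W)^\dagger=(V\rho V^\dagger)\otimes(\Id_N/N)$, and the partial trace over the environment yields $\Psi_U(\rho)=V\rho V^\dagger$. In superoperator form this channel equals the isometry $\Psi_V^{\cal I}=V\otimes\bar V$, which simultaneously verifies the general inclusion ${\cal I}_N\subseteq{\cal U}_N^Q$ used above.

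It then remains to apply the hyper-decoherence formula~\eqref{supermap} to this channel:
\begin{equation*}
  T(\Psi_V^{\cal I})_{ij}=(V\otimes\bar V)_{\substack{ii\\ jj}}=V_{ij}\,\bar V_{ij}=|V_{ij}|^2=B_{ij},
\end{equation*}
so $T(\Psi_V^{\cal I})=B$, and therefore $B\in{\cal D}_h({\cal I}_N)\subseteq{\cal D}_h({\cal U}_N^Q)$, completing the argument. No genuine obstacle is anticipated: the proposition is essentially the concatenation of the two known facts ${\cal I}_N\subseteq{\cal U}_N^Q$ and ${\cal D}_h({\cal I}_N)={\cal U}_N^C$ stated just before Eq.~\eqref{eq:transitionmatrix}. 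The substantive follow-up question---that this inclusion is strict for $N\ge 3$ and motivates the enlargement to the set ${\cal U}_{N,N}^C$ of $N$-unistochastic matrices---lies beyond this proposition and would require the dilation argument sketched around Eq.~\eqref{eq:transitionmatrix} together with the explicit construction given in Appendix~\ref{AppB}.
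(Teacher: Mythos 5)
Your proposal is correct and follows essentially the same route as the paper: both hinge on choosing the product dilation $U=V\otimes\Id_N$ for the unitary $V$ witnessing $B=V\odot\bar V$, and then checking that hyper-decoherence of the resulting channel returns $|V_{ij}|^2=B_{ij}$. The only cosmetic difference is that you first identify $\Psi_U$ as the isometry $V\otimes\bar V$ and apply Eq.~(\ref{supermap}), whereas the paper plugs $U=V\otimes\Id_N$ directly into Eq.~(\ref{eq:transitionmatrix}); the two computations are equivalent.
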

\begin{proof}
It is enough to take a dilation matrix of the product form,
$U=V\otimes {\1}_N$.
Then due to
 Eq.~\eqref{eq:transitionmatrix}  one obtains $T=V \odot {\bar V}$,
where $\odot$ denotes the Hadamard product,
so the classical transition matrix is unistochastic.
Thus the set of classical transition matrices obtained by hyper-decoherence
of unistochastic channels ${\cal U}_N^Q$
containing $N$-unistochastic matrices,
includes the set  of unistochastic matrices,
${\cal D}_h({\cal U}_N^Q)={\cal U}_{N,N} \supset  {\cal U}^C_N$
-- compare table  \ref{tab:sets}.
\end{proof}

\subsection{Quantum unistochastic maps and classical
unistochastic matrices}
\label{sec:unimapsmatrices}
To explore relations between unistochastic maps and
matrices, we will restrict our attention to the face
of the set ${\cal W}_N$ of mixed Weyl channels determined by three points:
$\Id$, $U_X$, and  $U_{X^2}$.
In this equilateral triangle we need to distinguish two regions:
\par
i) the  region bounded by the 3-hypocycloid \cite{volume} denoted by
$\mathbb{H}_3$. For
$p_1$, $p_2 \in \Delta_2$, the set $\mathbb{H}_3$
includes points satisfying the inequality,
\begin{equation*}
  4p_1^2(1-p_1-p_2)p_2-(p_1-p_1^2-(1-p_1-p_2)p_2)^2 \geq 0.
\end{equation*}
The $3$--hypocycloid
determines the boundary of the set of unistochastic matrices
at the cross-section of the $N=3$ Birkhoff polytope \cite{volume},
 as the above expression implies positivity of the
 quantity (\ref{eq:jarlskog}), which
 determines  the Jarlskog invariant of
the corresponding bistochastic matrix---see Appendix \ref{AppA}.

ii) the region resembling the Star of David, formed
by two equilateral triangles, $\triangle_A$ and
$\triangle_B$, each one described by
three vertices on the plane:
\begin{equation*}
    \triangle_A\colon (1/3,0), (1/3, 1/\sqrt{3}), (5/6,
    1/2\sqrt{3}), \qquad
    \triangle_B\colon  (2/3,1/\sqrt{3}), (2/3,0), (1/6,1/2\sqrt{3}).
\end{equation*}
These subsets of the triangle $\Delta(\Id,\Phi_{X},  \Phi_{X^2})$
 are represented in Fig.~\ref{fig:singlepanel}.
 The $3$--hypocycloid plays a key role \cite{volume}
 in characterizing the set of unistochastic matrices of order three.
 We will show now that it is also relevant to describe
 the set of unistochastic channels and its cross-sections.

\begin{figure}[ht]
  \includegraphics{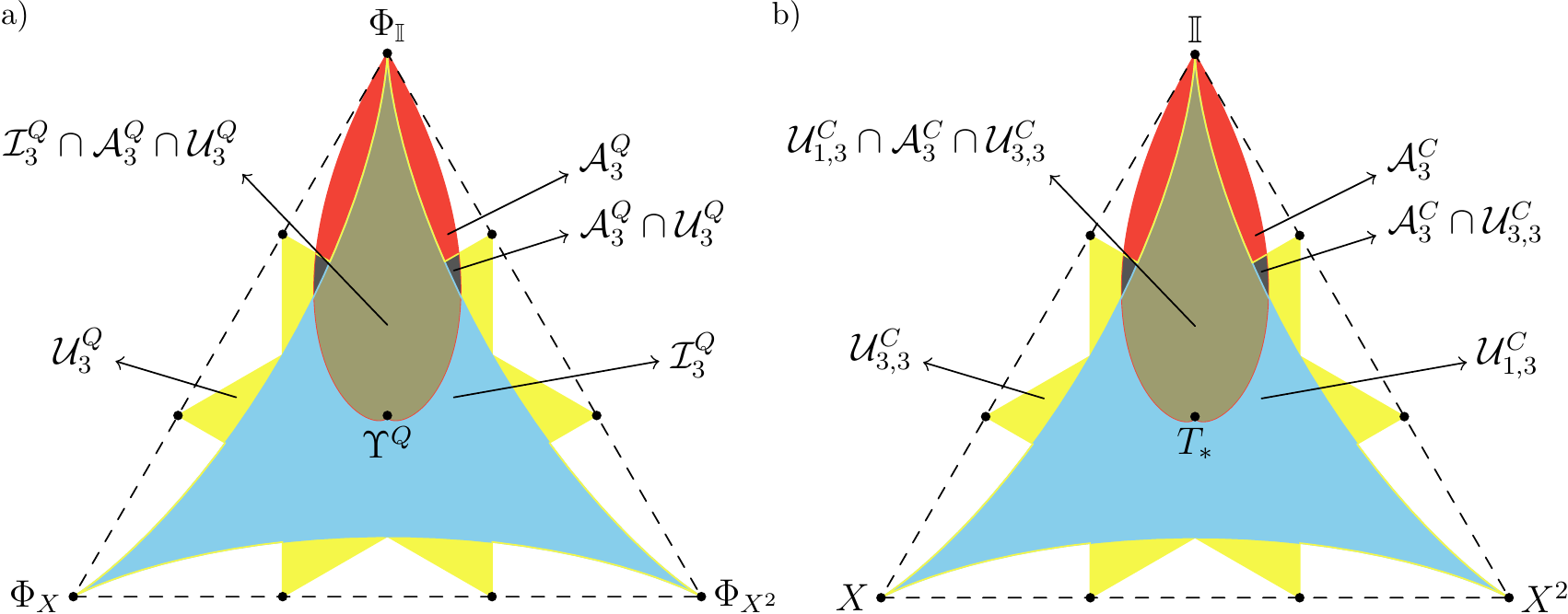}
  \caption{
  a) Face  of  8D the set of $N=3$ Weyl channels  $\Phi_{\vec{p}}$
    along the triangle   $\Delta(\1,\Phi_{X}, \Phi_{X^2})$,
    with the symmetric mixture
$\Upsilon^{Q} =\frac{1}{3}(\Phi_\Id+\Phi_{X}+\Phi_{X^2})$
in its center. Dashed subsets denote
  i) blue hypocycloid $\mathbb{H}_3$ represents
   isometric unitary channels ${\cal I}_3^Q$
   ii)  yellow region forming a David Star  $\triangle_A\cup \triangle_B$
      belongs to the set $\unicha$ of unistochastic channels;
    iii) red heart-shape region
       denotes set $ {\cal A}_3^Q$ of channels accessible by a Lindblad semigroup.
b) Triangle  $\Delta(\Id_3, X,  X^2)$ of $N=3$ circulant bistochastic matrices
   obtained by  hyper-decoherence of  Weyl channels:
  i) blue hypocycloid $\mathbb{H}_3$ represents
       the set ${\cal U}_3^C$ of
    unistochastic transition matrices, $T=V \odot {\bar V}$,
   ii)  yellow region forming a David Star represents
      the set ${\cal U}_{3,3}^C$ of u
     $3$-unistochastic matrices of order $3$,
    iii) red heart-shape region
       denotes set $ {\cal A}_3^C$ of embeddable matrices
        accessible by a classical semigroup.
        Note that $T_* = \mathcal{D}_h(\Upsilon^{Q})$
         represents the flat transition matrix at the center
         of the Birkhoff polytope $\mathcal{B}_3^{C}$.}
    \label{fig:singlepanel}
\end{figure}

\begin{proposition}
Mixed unitary Weyl channel $\Phi_{\vec p}$
belonging the the $3$--hypocycloid,
 $\vec{p}\in\mathbb{H}_3 \subset
 \Delta(\Phi_\Id,\Phi_{X}, \Phi_{X^2})$,
 is unistochastic,  $\Phi_{\vec{p}}\in {\cal U}_3^Q$,
and the corresponding transition matrix is unistochastic,
$T(\Phi_{\vec{p}}) \in {\cal U}_3^C$.
\label{prop1}
\end{proposition}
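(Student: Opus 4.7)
The plan is to prove both claims by a single explicit construction that lifts the classical unistochasticity witness to a $9\times 9$ unitary dilation with circulant tensor-product structure. On the triangular face $\Delta(\Phi_\Id,\Phi_X,\Phi_{X^2})$ the hyper-decoherence supermap acts linearly, cf.~\eqref{trans_triangle}, so that $T(\Phi_{\vec p})=\sum_{k=0}^{2}p_k X^k$ is simply the circulant bistochastic matrix built from $\vec p$. Membership of such a $T$ in $\c{U}_3^C$ is governed by non-negativity of the Jarlskog-type invariant $\c Q$ reviewed in Appendix~\ref{AppA}, which cuts out precisely the $3$-hypocycloid $\mathbb{H}_3$. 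The assumption $\vec p\in\mathbb{H}_3$ therefore immediately yields the classical part, $T(\Phi_{\vec p})\in\c{U}_3^C$.

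For the quantum part, I would first upgrade the classical unistochastic matrix $T$ to a \emph{circulant} $3\times 3$ unitary $V=\sum_{k=0}^{2} v_k X^k$ with $|v_k|^2=p_k$. Since circulant matrices are diagonalized by the Fourier matrix, unitarity of $V$ is equivalent to its Fourier eigenvalues $\zeta_j=\sum_k v_k\omega_3^{jk}$ being unimodular, and the constraints $|v_k|^2=p_k$ then reduce to a system of equations for the three phases of $\zeta_0,\zeta_1,\zeta_2$ whose solvability is exactly the hypocycloid condition $\vec p\in\mathbb{H}_3$. Having fixed such a circulant $V$, the candidate dilation is
\begin{equation*}
U \;=\; \sum_{k=0}^{2} v_k\, X^k\otimes X^k \;\in\; U(9).
\end{equation*}
A direct computation gives $U U^{\dagger}=\sum_r\bigl(\sum_{k-k'=r} v_k\bar v_{k'}\bigr) X^r\otimes X^r$, and unitarity of $V$ forces the bracketed coefficient to equal $\delta_{r,0}$, so $U$ is indeed unitary.

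The last step is to verify $\Psi_U=\Phi_{\vec p}$ via~\eqref{eq:defuni} with $d=N=3$. Using the trace-orthogonality of Weyl shift powers, $\tr(X^r)=3\delta_{r,0}$, one obtains
\begin{equation*}
\Psi_U(\rho)=\frac{1}{3}\sum_{k,k'} v_k\bar v_{k'}\, X^k\rho X^{-k'}\,\tr(X^{k-k'})=\sum_{k} p_k\, X^k\rho (X^k)^{\dagger}=\Phi_{\vec p}(\rho),
\end{equation*}
so $\Phi_{\vec p}\in\c{U}_3^Q$. The only delicate ingredient in the plan is the passage from an arbitrary unitary preimage of $T$, which $T\in\c{U}_3^C$ guarantees, to a \emph{circulant} one; this is what allows the dilation $U$ to be a sum of tensor squares of shifts and to have trace-orthogonal cross terms. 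This circulant realizability is a genuine input and is essentially a restatement of the hypocycloid characterization, so both the quantum and the classical assertions ultimately rest on the same combinatorial condition on $\vec p$.
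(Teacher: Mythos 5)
Your construction is correct and is essentially the paper's own proof: the dilation $U=\sum_{k}v_k\,X^k\otimes X^k$ with $|v_k|^2=p_k$ is exactly the ansatz used there, and in both cases unitarity reduces to the triangle inequality on the quantities $\sqrt{p_ip_j}$, which is precisely the hypocycloid condition $\vec p\in\mathbb{H}_3$ of Appendix~\ref{AppA}. Your detour through the $3\times 3$ circulant $V$ and the explicit partial-trace verification $\Psi_U=\Phi_{\vec p}$ merely spell out steps the paper leaves implicit, so no new idea is involved.
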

\begin{proof}
Consider a matrix $U$ of order $N^2=9$,
\begin{equation}
    U =
    \sqrt{p_1}e^{i\vartheta_1}\Id\otimes\Id+
    \sqrt{p_2}e^{i\vartheta_2}X \otimes X +
    \sqrt{p_3}e^{i\vartheta_3}X^2 \otimes X^2,
\end{equation}
where $\{\vartheta_i\}_{i=1}^{3}$ are real phases to be
determined.
Substituting $U$ in Eq.~\eqref{eq:defuni}, we observe that it generates a
superoperator
$\Phi_{\vec{p}} = p_1 \mathbb{I}\otimes\mathbb{I} + p_2 X \otimes\overline{X}
+(1-p_1-p_2)X^2\otimes\overline{X^2}$.
Thus we have constructed a channel of the desired form,
provided matrix $U$ is unitary.
To analyze under what conditions it is the case
we will compute the scalar product between each pair of rows.
Let us introduce auxiliary real parameters,
 $L_1 = \sqrt{1-p_1-p_2}\sqrt{p_1}$,
$L_2 = \sqrt{1-p_1-p_2}\sqrt{p_2}$, and $L_3 = \sqrt{p_1}\sqrt{p_2}$.
Then the  resulting expression takes the form,
\begin{equation}
  \exp{(i \varphi_1)} L_1 +
  \exp{(i \varphi_2)} L_2  +
  \exp{(i \varphi_3)} L_3 = 0,
 \label{eq:orthonormal}
\end{equation}
where $\{\varphi_i\}_{i=1}^{3}$ correspond to linear
combinations of phases $\{\vartheta_i\}$.
Left-hand side of Eq.~\eqref{eq:orthonormal}, can be
pictured as three joined line segments on the complex plane,
of length $L_1$, $L_2$, and $L_3$, respectively.
Inequality holds if these three sides
form a triangle. Thus Eq.~\eqref{eq:orthonormal}
is equivalent to the triangle inequality:
\begin{equation}
|L_1-L_2| \leq L_3 \leq L_1 + L_2,
\label{eq:triangle}
\end{equation}
This inequality is satisfied for any point $p$ in the interior of the hypocycloid,
$\vec{p}\in\mathbb{H}_3$.
In such  a case the columns of $U$ form an orthonormal set,
hence $U$ is unitary ~\cite{volume}.
This implies that if $\vec{p}\in\mathbb{H}_3$
then due to Eq. (\ref{eq:defuni})
unitary $U$ leads to a unistochastic $\Phi_{\vec{p}}$,
while $T(\Phi_{\vec{p}})$ is a unistochastic matrix.
\end{proof}

Further analysis shows that the cross-section of the set  ${\cal U}_3^Q$
of unistochastic channels by the plane containing the triangle
$\Delta(\Phi_\Id,\Phi_X, \Phi_{X^2})$ is larger than the
hypocycloid $\vec{p}\in\mathbb{H}_3$.
Numerical results allow us to formulate the  following conjecture.

\begin{conj}
In the face of the simplex ${\cal W}_3$ of the Weyl channels formed by
$\Delta_Q=\Delta(\Phi_\mathbb{I}$, $\Phi_{X}$, $\Phi_{X^2})$,
the set $\unicha$ of unistochastic channels corresponds
to the region formed by the union $\mathbb{H}_3\cup\triangle_A\cup \triangle_B$.
\end{conj}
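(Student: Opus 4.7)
The strategy is to prove both inclusions separately: that $\Phi_{\vec p}\in\unicha$ for every $\vec p\in\mathbb{H}_3\cup\triangle_A\cup\triangle_B$, and the converse that no $\vec p$ outside this region yields a unistochastic channel. As a preliminary step I would exploit the Fourier equivalence of the $X$-face and the $Z$-face: conjugating the system by $F\otimes\overline{F}$, where $F$ denotes the $N=3$ Fourier matrix, maps $\Phi_{X^k}$ to $\Phi_{Z^k}$, so the face $\Delta(\Phi_\Id,\Phi_X,\Phi_{X^2})$ is carried onto the diagonal face analyzed around Eq.~\eqref{eq:unistochasticTest}. Since unistochasticity is invariant under global unitary conjugation, it suffices to work on the diagonal face, where every dilating unitary $U$ of order $9$ must be block-diagonal, $U=W_1\oplus W_2\oplus W_3$ with $W_i\in U(3)$, and unistochasticity reduces to the three conditions of Eq.~\eqref{eq:unistochasticTest} on the single complex parameter $s=\tfrac{1}{2}(3p_1-1)+i\tfrac{\sqrt{3}}{2}(p_1+2p_2-1)$; this affine map sends $\Delta_Q$ onto the triangle whose vertices are the three cube roots of unity.

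For the \enquote{if} direction the hypocycloid portion $\mathbb{H}_3$ is already covered by Proposition~\ref{prop1}. For $\triangle_A\cup\triangle_B$ I would first settle the six outer vertices, which are the permutations of the probability vector $(2/3,1/3,0)$ and correspond to Weyl channels of Kraus rank two. A direct calculation gives $|s|=1/\sqrt{3}$ at each such vertex, and explicit triples $(W_1,W_2,W_3)\in U(3)\times U(3) \times U(3)$ satisfying Eq.~\eqref{eq:unistochasticTest} can be constructed using diagonal unitaries whose phase-eigenvalues sum to the prescribed traces; the six vertices split into two complex-conjugate orbits of size three matching $\triangle_A$ and $\triangle_B$. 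To reach the interior of each small triangle I would interpolate by a smooth one-parameter family of triples whose trace inner products vary continuously with $s$, using the three-fold rotational symmetry $W_1\to W_2\to W_3\to W_1$ that realizes cyclic relabelling of the simplex vertices.

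For the \enquote{only if} direction the task reduces to characterizing the set of complex $s$ for which the conditions in Eq.~\eqref{eq:unistochasticTest} admit a solution in $U(3)\times U(3)\times U(3)$. A necessary condition is positive semidefiniteness of the circulant Hermitian Gram matrix
\begin{equation*}
G(s)=\begin{pmatrix}3 & s^{*} & s \\ s & 3 & s^{*} \\ s^{*} & s & 3\end{pmatrix},
\end{equation*}
whose eigenvalues are $3+2\,\mathrm{Re}(s)$ and $3-\mathrm{Re}(s)\pm\sqrt{3}\,\mathrm{Im}(s)$. This PSD condition alone is satisfied throughout the face $\Delta_Q$ and is therefore not sufficient; the additional constraint comes from requiring the columns of any $9\times 3$ matrix $M$ with $M^{\dagger}M=G(s)$ to reshape into \emph{unitary} $3\times 3$ matrices rather than arbitrary contractions, which couples $s$ nonlinearly to the singular values and phases of the blocks. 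I would translate this coupling into compatibility conditions for the spectra of $W_j^{\dagger}W_i$ in the spirit of Horn's majorization inequalities and show that the resulting region in the $s$-plane is precisely the affine image of $\mathbb{H}_3\cup\triangle_A\cup\triangle_B$, which pulls back to the face to give the claimed equality.

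The hardest step is the last one: characterizing which triples $(s,s^{*},s)$ of prescribed overlaps are realizable inside $U(3)\times U(3)\times U(3)$. Unlike the two-unitary analog, which admits every $|s|\leq 3$, the three-unitary case exhibits genuine geometric frustration beyond positivity of $G(s)$. The hypocycloid boundary already appears as a Jarlskog-type invariant condition (Appendix~\ref{AppA}), and the outer boundary of the Star of David plausibly corresponds to an independent family of rank-two degenerations. Establishing the exact union $\mathbb{H}_3\cup\triangle_A\cup\triangle_B$ will likely require a direct eigenvalue parametrization of $W_2,W_3$ together with an exhaustive case analysis of their phases, or a numerical assist based on the Haagerup-style algorithm used in~\cite{RGZ18}; this analytic difficulty is presumably the reason the paper states the result as a conjecture rather than a theorem.
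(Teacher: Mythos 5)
You should first be aware that the paper does not prove this statement at all: it is presented explicitly as a conjecture, supported only by numerical sampling. The only ingredients the authors actually establish are (i) the inclusion of the hypocycloid, $\mathbb{H}_3\subset\unicha$, via Proposition~\ref{prop1}; (ii) the unistochasticity of the six corners of the Star of David, via the explicit dilation unitaries listed in Table~\ref{tab:unitariesNcorners}; and (iii) the equivalence of the faces $\Delta(\Phi_{\mathbb{I}},\Phi_X,\Phi_{X^2})$ and $\Delta(\Phi_{\mathbb{I}},\Phi_Z,\Phi_{Z^2})$ under conjugation by Fourier matrices (the final Proposition of the appendix). Your preliminary reduction to the diagonal face, your use of Proposition~\ref{prop1} for $\mathbb{H}_3$, and your treatment of the six outer vertices reproduce exactly these established pieces, so that part of the proposal is consistent with what the paper actually shows.

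The two steps that would close the conjecture are, however, only announced, and each has a concrete gap. For the \enquote{if} direction, passing from the six corners to the full regions $\triangle_A\setminus\mathbb{H}_3$ and $\triangle_B\setminus\mathbb{H}_3$ by \enquote{a smooth one-parameter family of triples} cannot work as stated: a one-parameter family traces out a curve in the $s$-plane, while the star points you must cover are two-dimensional; you would need a two-parameter family of triples $(W_1,W_2,W_3)$ whose common overlap sweeps each region, or a structural property (convexity or star-shapedness of the set of realizable $s$) that generates the triangles from their vertices together with $\mathbb{H}_3$ --- neither is established. For the \enquote{only if} direction, you correctly observe that positive semidefiniteness of the Gram matrix $G(s)$ is necessary but holds on the whole face, and then defer the actual exclusion argument to unspecified \enquote{Horn-type compatibility conditions}, an \enquote{exhaustive case analysis}, or a \enquote{numerical assist}; no inequality is produced that rules out even a single point of $\Delta_Q$ outside $\mathbb{H}_3\cup\triangle_A\cup\triangle_B$. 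As you yourself note at the end, this missing characterization is precisely why the statement remains a conjecture; your text is a reasonable research plan, but it is not a proof and should not be presented as one.
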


The triangle $\Delta_Q$ of quantum Weyl channels
shown in Fig. \ref{fig:singlepanel}a
corresponds to classical triangle of bistochastic matrices
$\Delta_C=\Delta(\mathbb{I}, X, X^2)$,
which forms a cross-section of the Birkhoff polytope  \cite{volume}.
Its subset ${\cal A}_3^C$ of matrices accessible by a classical semigroup
has the same structure as the set
${\cal A}_3^Q$ of accessible quantum maps  shown in Fig. \ref{DAfig1}c.
The region union $\mathbb{H}_3\cup\triangle_A\cup \triangle_B$
corresponds to the set ${\cal U}^C_{3,3}$ of $3$-unistochastic matrices
while $\mathbb{H}_3$ denotes its subset
${\cal U}^C_{3}$ of unistochastic matrices -- see Fig. \ref{fig:singlepanel}b.

To conclude note that not every unitary matrix $U\in U(9)$
 produces by  Eq. (\ref{eq:defuni})
 a unistochastic channel of order three,
 which decoheres to a unistochastic transition matrix.
 An exemplary  channel  $\Phi \in {\cal U}_3^Q$
  such that $T(\Phi) \notin {\cal U}_3^C$
 is described  in Appendix~\ref{app}.

  \section{Concluding remarks}
  \label{Sec7}
  In this work we investigated properties of mixed unitary Weyl  channels
  and circulant bistochastic matrices
  which describe corresponding classical dynamics.
  Extending one-qubit results presented in \cite{PRZ19,DZP18}
  we characterized the set ${\cal A}_N^Q$
  of Weyl channels accessible by a quantum semigroup.
  We showed that this set is log-convex and star-shaped with respect to
  the completely depolarizing channel.

  The standard process of decoherence, occurring due to inevitable
  interaction of the system analyzed with an environment,
  transforms any quantum state $\rho$ into the classical probability vector,
  $p={\cal D}(\rho)={\rm diag}(\rho)$, and destroys all quantum effects.
  In a similar way one analyzes analogous transitions
  between a quantum operation $\Phi$ and
  a classical transition matrix, $T={\cal D}_h(\Phi)$.
  Such an effect, sometimes called hyper-decoherence,
  can also act on a Lindblad operator $\cal L$,
  which generates  a quantum semigroup.
  The image gives a legitimate Kolmogorov operator,
   ${\cal K}={\cal D}_h({\cal L})$,
  related to a classical semigroup \cite{CMMV13}.

  We showed that any Weyl channel $\Phi_{\vec p}$
  subjected to hyper-decoherence yields a circulant
   transition bistochastic matrix $T_{\vec q}$, where
   the $N$-point probability vector  $\vec q$
   arises as the marginal of the $N^2$-point initial probability vector
   $\vec p$. Furthermore, for mixed Weyl channels the dynamics
   commutes with decoherence, so that
   $ \c D_{h}\left(\e^{t\c L}\right)=\e^{t\c D_{h}\left(\c L\right)}$.
   Hence   the set ${\cal A}_N^C$
   of transition matrices accessible by a classical semigroup,
   arises by hyper-decoherence of its quantum counterpart,
   ${\cal A}_N^C= {\cal D}_h\bigl( {\cal A}_N^Q\bigr)$.
   This implies that this classical set inherits key properties of the
   its quantum relative  ${\cal A}_N^Q$ and is also log-convex and star-shaped.

   For concreteness, we set the dimension $N=3$
   and in this case investigated geometry of the above sets.
   In particular, we studied various cross-sections
   of $8D$ set ${\cal A}_3^Q$ of accessible quantum channels
   and described in details the corresponding set
   ${\cal A}_3^C$ of  embeddable bistochastic matrices \cite{LKM17},
   which are accessible by a classical semigroup.
  Motivated by the relation between maps accessible by a quantum semigroup
  and  unistochastic channels established in the single qubit case,
  ${\cal A}_2^Q  \subset {\cal U}_2^Q$~\cite{PRZ19},
  we studied the unistochastic operations.
  A constructive example of a qutrit channel is presented,
  which is accessible by a quantum semigroup
  but it cannot be obtained by the partial trace over the environment of the
  same size, initially prepared in the maximally mixed state.
  Such a counterexample  allows us to conclude that
  for higher dimensions such a relation
  between unistochastic and accessible channels breaks down,
  as the analogous inclusion relation does not hold already for $N= 3$.

A notion of higher order, $k$--unistochastic matrices of size $N$ has been introduced.
We have shown that unistochastic channels acting on $N$-level systems
 decohere to $N$--unistochastic matrices,
   ${\cal D}_h({\cal U}_N^Q)={\cal U}^C_{N,N}$.
  It contains the set of unistochastic matrices,
  which arise by hyper-decoherence of isometric unitary transformations,
   ${\cal U}^C_{N}={\cal D}_h({\cal I}_N^Q)$.

  Let us conclude the paper presenting a list of some open problems.
  \begin{enumerate}
    \item Results of this work are obtained for mixed unitary channels
      written in the Weyl basis (\ref{weyl1}).
        Analyze which of them can be generalized for a) mixed unitary channels
          written in other orthogonal basis, b) any bistochastic maps, or
          c) any stochastic maps.
    \item For one-qubit case, channels belonging to Pauli
      semigroups are the only ones connected to the identity map
      through a trajectory formed by mixed unitary channels.
        Is an analogous property true for higher dimensions?
    \item One-qubit Pauli maps belonging to $\mathcal{A}_2^Q$
       (accessible by a semigroup) are of rank $1,2$ or $4$.
       For qutrits, we identified only  accessible channels of rank $1$, $3$,
      and $9$, so existence of $N=3$ accessible channels with rank $2,4,\dots,8$
        remains open.
    \item Check if  the rank of the operation
        does not change (or does not decrease)
          during the time evolution  induced by a quantum
           semigroup $\cal L$ specified in Eq. (\ref{lt}).
    \item
        Find necessary or sufficient criteria for a $N=3$ bistochastic  map $\Phi$
        to be unistochastic,
         so there exists a unitary matrix $U\in U(9)$
          leading to the representation \eqref{eq:defuni}.
  \end{enumerate}

\acknowledgements
It is a pleasure to thank Fabio Benatti, Adam Burchardt,
Dariusz Chu{\'s}ci{\'n}ski, David Davalos,
Kamil Korzekwa, Carlos Pineda, {\L}ukasz Rudnicki
and Konrad Szyma{\'n}ski
for numerous discussions and helpful correspondence.
This research was support by  National Science Center in Poland
under the Maestro grant number DEC-2015/18/A/ST2/00274
and by Foundation for Polish Science under the grant Team-Net NTQC
number  17C1/18-00. Z.P. acknowledges the support by the Polish National 
Science Center under the Project Number 2016/22/E/ST6/00062.
DAA acknowledges support from projects CONACyT 285754,
UNAM-PAPIIT IG100518.
\appendix
\label{app}
\section{Jarlskog invariant for $N=3$ bistochastic matrices} 
\label{AppA}

Consider the Birkhoff polytope ${\cal B}_3^C$
of  bistochastic matrix of order three.
Any matrix from this four-dimensional set
can be parameterized by
four non negative numbers $b_1,b_2,b_3,b_4$,
\begin{equation}
B =
    \begin{pmatrix}
    b_1 & b_2 & 1-b_1-b_2 \\
    b_3 & b_4 & 1-b_3-b_4 \\
    1-b_1-b_3 & 1-b_2-b_4 & b_1+b_2+b_3+b_4-1 \\
    \end{pmatrix}.
\end{equation}
Note that these parameters do satisfy some constraints,
as any entry of $B$ has to be positive and not larger than one.
A bistochastic matrix $B$ is called unistochastic
if  there  exists a unitary matrix $V$
such that $B$ can be represented by the Hadamard (entrywise) product,
$B=V \odot {\bar V}$,
so its entries read  $B_{ij} = \vert V_{ij}\vert^2$.

For any matrix $B \in {\cal B}_3^C$  one introduces the quantity
\begin{equation}
\label{eq:jarlskog}
{\mathcal Q}(B) \equiv
    4b_1b_2b_3b_4 - (b_1+b_2+b_3+b_4-1-b_1b_4-b_2b_3)^2,
\end{equation}
which serves as an effective tool to detect unistochasticity
of bistochastic matrices of size $N=3$.
If  ${\cal Q}(B)\ge 0$, the matrix $B$  is unistochastic \cite{jarlskog},
as the triangle inequality (\ref{eq:triangle}) is satisfied
and there exists a corresponding unitary matrix  $V$
such that $B=V \odot {\bar V}$.
Conversely, if  ${\cal Q} (B)$ is negative,
the unitarity triangle cannot be constructed and $B$ is not unistochastic.
This is a consequence of the fact that
the triangle conditions necessary to find two orthogonal vectors
forming first two columns of $V$
are not satisfied \cite{volume}.

Quantity (\ref{eq:jarlskog})
can be written as ${\cal Q}= 4{\mathcal J}^2$,
where ${\mathcal J}(B)={\mathcal J}(B(V))$,
 is called the \emph{Jarlskog invariant},
originally invented for the corresponding  unitary $V$  \cite{jarlskog}.
The name of this quantity is related to the following invariance property:
For two unitary matrices $V$ and $V'$,
equivalent with respect to permutations and multiplication
 from left and right by two diagonal unitary matrices,
the value of  ${\mathcal J}$ computed for the
associated bistochastic matrices $B$ and $B'$ is constant.

Interestingly, the maximal value of $Q$, equal to $1/27$,
is attained by the flat bistochastic matrix, $B_{ij}=1/3$,
which forms the center of the Birkhoff polytope ${\cal B}_3^C$.
The minimal value reads, $\mathcal{Q}_{\rm min}=-1/16$,
 and corresponds to the combination
 of two permutation matrices  $B_S=(X_3+X_3^2)/2$,
  introduced in Section  \ref{bisto},
 which is most distant from the set   $ {\cal U}_3^C$
 of unistochastic matrices  \cite{volume}.

\section{Not every transition matrix corresponding to unistochastic channel
 is unistochastic}
 \label{AppB}
Consider the unitary matrix $U\in U(9)$ defined by
\[
  U = \mathbb{I}_2\oplus H_2\oplus
  \adiag(1,-1,1)\oplus\mathbb{I}_2,
\]
where $H_2$ is the Hadamard matrix representing the Hadamard
gate, and $\adiag(1,-1,1)$ is the matrix with non-zero
entries in the antidiagonal.
Such a two-qutrit unitary matrix $U$ defines by Eq.   (\ref{eq:defuni})
a single-qutrit unistochastic channel $\Psi_U$.
The corresponding classical transition matrix
generated by hyper-decoherence, $T={\cal D}_h(\Phi_U)$, reads
\[
T =
\frac{1}{6}
\begin{pmatrix}
  5 & 1 & 0 \\
  1 & 3 & 2 \\
  0 & 2 & 4
\end{pmatrix}.
\]

Expression \eqref{eq:jarlskog} implies that its squared Jarlskog
invariant is negative,   ${\cal Q}(T)= -\frac{1}{324}$.
Therefore, the matrix $U\in U(9)$ determines
a single-qutrit unistochastic channel $\Psi_U$,
such that the corresponding classical transition matrix $T(U)$ of size $N=3$
is not unistochastic. This example shows that the set ${\cal U}_3^Q$ of
unistochastic channels (yellow region in Fig. \ref{fig:singlepanel}a)
is not included inside the blue set of
isometric unitary channels ${\cal I}_3^Q$
 which decohere to unistochastic transition matrices  \cite{KCPZ18}.
 
\section{Unistochastic channels at the corners of the star of David.}

In this appendix we present  six unitary
matrices of order $N^2=9$
which produce unistochastic channels
represented in  Fig. \ref{fig:singlepanel}a
by points at six corners of the Star of
David for the $2$-face of $\Delta_8$.
Additionally, we will show the relation between the
unistochastic channels corresponding to faces spanned
by the triads  $(\Phi_{\mathbb{I}}, \Phi_{Z},\Phi_{Z^2})$
and $(\Phi_{\mathbb{I}}$, $\Phi_{X},\Phi_{X^2})$.
\par
First, let us introduce a diagonal unitary matrix $\matrizesp \in U(9)$,
\begin{equation}
  \matrizesp
=
\mathrm{diag}(1,1,1,1,1, 1,\omega,1,\omega^2),
\label{app:unitary}
\end{equation}
where $\omega=e^{2\pi i /3}$.
Using definition Eq.~\eqref{eq:defuni}, we obtain
the Choi matrix for the corresponding unistochastic channel,
\[
  D_{\matrizesp} =
  \frac{1}{3}
  (\matrizesp^{R})^{\dagger} \matrizesp^{R}
  =
    \begin{pmatrix}
  1 & 0 & 0 & 0 & (2+\omega^2)/3  & 0 & 0 & 0 &
   (2+\omega)/3 \\
 0 & 0 & 0 & 0 & 0 & 0 & 0 & 0 & 0
   \\
 0 & 0 & 0 & 0 & 0 & 0 & 0 & 0 & 0
   \\
 0 & 0 & 0 & 0 & 0 & 0 & 0 & 0 & 0
   \\
   (2+\omega)/3 & 0 & 0 & 0 & 1
   & 0 & 0 & 0 & (2+\omega^2)/3 \\
 0 & 0 & 0 & 0 & 0 & 0 & 0 & 0 & 0
   \\
 0 & 0 & 0 & 0 & 0 & 0 & 0 & 0 & 0
   \\
 0 & 0 & 0 & 0 & 0 & 0 & 0 & 0 & 0
   \\
 (2+\omega^2)/3 & 0 & 0 & 0 &
   (2+\omega)/3  & 0 & 0 & 0 & 1
   \\
  \end{pmatrix},
\]
which can be represented as
$D_{\matrizesp}=\bigl(\frac{2}{3}\Phi_{\mathbb{I}} +
\frac{1}{3}\Phi_{Z}\bigr)^{R}$.
This implies that the unistochastic channel
 $\Phi_{U_Z}=\frac{2}{3}\Phi_{\mathbb{I}} + \frac{1}{3}\Phi_{Z}$,
 so that it is represented by a point at  one of the six
corners of the star of David inscribed in the triangle
$\Delta(\Phi_\mathbb{I}, \Phi_Z, \Phi_{Z^2})$.
In a similar way one can construct unistochastic
channels corresponding to the remaining corners of the star.

Let us now consider the other face of the simplex of Weyl channels
 determined by $(\Phi_{\mathbb{I}}$, $\Phi_{X},\Phi_{X^2})$.
The channel
$\frac{1}{3}\Phi_{\mathbb{I}}+\frac{2}{3}\Phi_{X}$
at a corner of the star of David inscribed into this triangle
and plotted in  Fig. \ref{fig:singlepanel}a,
corresponds to the unitary matrix
$\mathbb{F}\overline{\matrizesp}\overline{\mathbb{F}}$.
we use here  the tensor product of two Fourier matrices, $\mathbb{F} = F_3\otimes F_3$,
while $\overline{(\cdot)}$ denotes the complex conjugation.
\par
Combining the tensor product  of matrix $Z$ of order three with Fourier matrices and
matrix $\matrizesp$ introduced above, we obtain the remaining channels
at six corners of the star of David at this face,
listed in
Tab.~\ref{tab:unitariesNcorners}.
\begin{table}[h]
  \centering
 \begin{tabular}{ll}
   Unitary matrix $U \in U(9)$ \ \ \ \ \ &  Unistochastic channel  $\Phi_U=[(U^R)^{\dagger}U^R]^R$ \\
   \hline
 $\mathbb{F}\matrizesp\overline{\mathbb{F}}$ &$\frac{2}{3}\Phi_{\mathbb{I}}+\frac{1}{3}\Phi_{X}$\\
 $\mathbb{F}\overline{\matrizesp}\overline{\mathbb{F}}$ & $\frac{2}{3}\Phi_{\mathbb{I}}+\frac{1}{3}\Phi_{X^2}$\\
 $\mathbb{F}(Z\otimes Z)\matrizesp\overline{\mathbb{F}}$ & $\frac{1}{3}\Phi_{\mathbb{I}}+\frac{2}{3}\Phi_{X^2}$\\
 $\mathbb{F}(Z\otimes Z)\overline \matrizesp\overline{\mathbb{F}}$& $\frac{1}{3}\Phi_{X}+\frac{2}{3}\Phi_{X^2}$\\
 $\mathbb{F}(Z^2\otimes Z^2) \matrizesp\overline{\mathbb{F}}$ & $\frac{1}{3}\Phi_{X^2}+\frac{2}{3}\Phi_{X}$\\
 $\mathbb{F}(Z^2\otimes Z^2)\overline \matrizesp\overline{\mathbb{F}}$ & $\frac{1}{3}\Phi_{\mathbb{I}}+\frac{2}{3}\Phi_{X}$\\
 \end{tabular}
 \caption{
   Unitary matrix $U$ of size $9$ (left) determining by \eqref{eq:defuni} the $N=3$ unistochastic
    channel $\Phi_U$ (right) corresponding to a corner of
   the Star of David in Fig.   \ref{fig:singlepanel}a.
   Matrix $\matrizesp$ is defined in Eq.~\eqref{app:unitary}, while
   $\mathbb{F} = F_3\otimes F_3$, where  $F_3$ denotes the
   Fourier matrix of order three.
 }
 \label{tab:unitariesNcorners}
\end{table}
\par

As demonstrated in Table~\ref{tab:unitariesNcorners}
all the corners of the star correspond to unistochastic channels.
Numerical results support the conjecture that all points inside the
star also represent unistochastic channels.
Not being able to prove this statement so far, we can show
that the problem of finding the boundary of the
sets of unistochastic channels in the triangles
$\Delta(\Phi_{\mathbb{I}}, \Phi_{Z}, \Phi_{Z^2})$ and
 $\Delta(\Phi_{\mathbb{I}}, \Phi_{X}, \Phi_{X^2})$
are equivalent.

\begin{proposition}
Consider a unistochastic channel
$\Phi_p =
  p_1
  \Phi_{\mathbb{I}}
  +
  p_2
  \Phi_{Z}+
  p_3
  \Phi_{Z^2}
$,
associated with an unitary matrix $U$ of order nine.
Then the local transformation of $U'=(F_3\otimes
F_3)U(\overline{F}_3\otimes\overline{F}_3)$, corresponds to
the channel
$ \Phi'_p= p_1  \Phi_{\mathbb{I}}+  p_2  \Phi_X+ p_3  \Phi_{X^2}$
belonging to the triangle
$\Delta(\Phi_{\mathbb{I}}, \Phi_{X}, \Phi_{X^2})$.

\end{proposition}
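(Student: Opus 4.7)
The plan is to trace the definition \eqref{eq:defuni} of a unistochastic channel through the substitution $U\mapsto U'=(F_3\otimes F_3)U(\overline F_3\otimes \overline F_3)$ and to exploit two standard properties of the Fourier matrix of order $N=3$: (i) it intertwines the clock and the shift, $F_3 Z^k F_3^{\dagger}=X^{k}$ for $k=0,1,2$, and (ii) the maximally mixed state on the environment is invariant under Fourier conjugation, $F_3^{\dagger}(\mathbb{I}_3/3)F_3=\mathbb{I}_3/3$. The paper's convention $(F_3)_{mn}=\omega_{_3}^{-mn}$ makes $F_3$ symmetric, hence $\overline F_3=F_3^{\dagger}$, which is precisely what is required to match the factors appearing on either side of the partial trace.

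First I would insert $U'=(F_3\otimes F_3)U(F_3^{\dagger}\otimes F_3^{\dagger})$ into \eqref{eq:defuni}, let the inner $F_3^{\dagger}\otimes F_3^{\dagger}$ and $F_3\otimes F_3$ act on $\rho\otimes \mathbb{I}_3/3$, and use (ii) together with the fact that the outer $F_3$ touches only the principal system to pull it through the partial trace. This produces the compact identity
\[
\Psi_{U'}(\rho)=F_3\,\Psi_{U}\!\left(F_3^{\dagger}\rho F_3\right)F_3^{\dagger}.
\]
Next I would substitute the hypothesis $\Psi_{U}=\sum_{k=0}^{2}p_{k+1}\Phi_{Z^{k}}$, where $\Phi_{Z^{k}}(\sigma)=Z^{k}\sigma Z^{-k}$, and apply (i) term by term: $F_3 Z^{k}F_3^{\dagger}\rho F_3 Z^{-k}F_3^{\dagger}=X^{k}\rho X^{-k}=\Phi_{X^{k}}(\rho)$. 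Summing the three contributions yields $\Psi_{U'}=p_1\Phi_{\mathbb{I}}+p_2\Phi_{X}+p_3\Phi_{X^{2}}$, which is the claimed channel.

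The only nontrivial computational step is establishing (i), which I would verify by evaluating $F_3 Z F_3^{\dagger}$ on the computational basis $\{|j\rangle\}$ using the orthogonality relation $\sum_{j}\omega_{_3}^{jm}=3\,\delta_{m,0}$; the cases $k=0,2$ then follow either directly or by taking powers. The main obstacle is not mathematical depth but careful bookkeeping of the several complex conjugations and transposes: one must check the convention of $F_3$ carefully, since swapping $\omega_3$ for $\omega_3^{-1}$ reverses the orientation of the intertwining in (i) and would send the channel to $p_1\Phi_{\mathbb{I}}+p_2\Phi_{X^{2}}+p_3\Phi_{X}$ instead. With the convention fixed in the paper the bijection between the faces $\Delta(\Phi_{\mathbb{I}},\Phi_{Z},\Phi_{Z^{2}})$ and $\Delta(\Phi_{\mathbb{I}},\Phi_{X},\Phi_{X^{2}})$ is exactly the identity on the probability vector $(p_1,p_2,p_3)$, and unistochasticity is preserved because $U'$ remains unitary by construction.
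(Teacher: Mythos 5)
Your proof is correct, but it takes a genuinely different route from the paper. The paper works entirely at the level of matrix entries: it writes the superoperator of $\Phi_p$ in four-index notation, computes the entries of $U'=(F_3\otimes F_3)U(\overline F_3\otimes\overline F_3)$, substitutes into the Choi-matrix formula $D_{\Psi_{U'}}=\tfrac13 (U'^R)^\dagger U'^R$, and reduces the resulting multiple sums with the orthogonality relation $\sum_{l\lambda}\omega_3^{(d_2-c_2)l-(d_4-c_4)\lambda}=3^2\delta_{d_2c_2}\delta_{d_4c_4}$ until the Choi matrix of $p_1\Phi_{\mathbb I}+p_2\Phi_X+p_3\Phi_{X^2}$ appears. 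You instead prove the covariance identity $\Psi_{U'}(\rho)=F_3\,\Psi_U(F_3^\dagger\rho F_3)\,F_3^\dagger$ (using that the environment's maximally mixed state is invariant under conjugation and that the outer $F_3$ factors pass through the partial trace) and then apply the clock--shift intertwining $F_3Z^kF_3^\dagger=X^k$ term by term. Both arguments are sound; with the paper's convention $F_{mn}=\omega_3^{-mn}$ one indeed checks $F_3ZF_3^\dagger=X$ (not $X^2$), and the symmetry of $F_3$ gives $\overline F_3=F_3^\dagger$ up to the normalization of $F_3$, which must be the unitary one for $U'$ to be a legitimate dilation --- you correctly flag both of these as the only delicate points. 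What your route buys is brevity, a transparent mechanism (Fourier conjugation exchanges clock and shift while fixing the probability vector), and immediate generalization to arbitrary $N$ and to other Clifford conjugations, which is exactly the remark the paper makes informally after its proof; what the paper's entrywise computation buys is a self-contained verification directly at the level of the Choi matrix, in the same notation used throughout its appendices.
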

\begin{proof}
We consider a unistochastic channel
 $\Phi_p =  p_1    \Phi_{\mathbb{I}}+
  p_2   \Phi_Z+  p_3   \Phi_{Z^2}$,
with associated unitary matrix  $V$.
 Our aim is to show that there exists a
local transformation that transforms it into $\Phi'_p$.
In other words, we show that
a unistochastic channel in the cross-section $(\mathbb{I}, Z, Z^2)$
can be transformed into a unistochastic channel in the cross-section
$(\mathbb{I}, X, X^2)$ by means of a local transformation.
\par
Firstly, we find the expression for the channel entries
in four-index notation:
\begin{align*}
  \bra{e_mf_{\mu}} \Phi \ket{e_n f_{\nu}} &=
\delta_{mn} \delta_{\mu\nu}
(
  p_1  + p_2 \omega^{n-\nu}
  + p_3 \omega^{2(n-\nu)}
) =
\Phi_{\sub{m\nu\\ n\nu}}
\end{align*}
Then the corresponding reshuffling is
\[
  (\Phi^{R}_{\sub{mn\\\mu\nu}}) =
  \delta_{m\mu}\delta_{n\nu}
( p_1  + p_2 \omega^{\mu-\nu}
  + p_3 \omega^{2(\mu-\nu)}).
\]
On the other hand, Choi matrix entries corresponding to a
unitary matrix $V$ are
\[
  (\Phi_V)_{\sub{s\sigma\\t\tau}} =
  \frac{1}{3} \sum_{l\lambda}
  V_{\sub{sl\\\sigma\lambda}}
  \overline V_{\sub{tl\\\tau\lambda}}.
\]
\par
Thus the relation between $V$ entries and $\vec p =
(p_1,p_2,p_3)$ is
\begin{equation}
  \frac{1}{3} \sum_{l\lambda}
  V_{\sub{ml\\n\lambda}}
  \overline V_{\sub{\mu l\\\nu\lambda}}
  =
  \delta_{m\mu}
  \delta_{n\nu}
  ( p_1  + p_2 \omega^{\mu-\nu}
    + p_3 \omega^{2(\mu-\nu)}).
    \label{eq:equality}
\end{equation}
\par
Now we proceed to find the entries of the matrix
$
  (F_3\otimes F_3)
  U
  (\overline F_3\otimes \overline F_3)
$,
\begin{align}
  \bra{e_mf_{\mu}}
  (F_3\otimes F_3)
  U
  (\overline F_3\otimes \overline F_3)
  \ket{e_nf_{\nu}}
  &=
  \sum_{l\lambda}
  F_{ml} F_{\mu\lambda}
  \bra{e_lf_{\lambda}}
  U
  (\overline F_3\otimes \overline F_3)
  \ket{e_nf_{\nu}} \nonumber\\
  &=
  \sum_{\sub{l\lambda\\t\tau}}
  F_{ml} F_{\mu\lambda}
  U_{\sub{l\lambda\\t\tau}}\overline F_{tn}\overline
  F_{\tau\nu} \nonumber\\
  &=
  \frac{1}{3^2}
  \sum_{\sub{l\lambda\\t\tau}}
  \omega_3^{ml}
  \omega_3^{\mu\lambda}
  U_{\sub{l\lambda\\t\tau}}
  \omega_3^{-tn}\omega_3^{-\tau\nu}
  = V_{\sub{m\mu\\n\nu}}.
  \label{ec:post}
\end{align}
\par
What follows now is to substitute Eq.~\eqref{ec:post}---$V =
(F_3\otimes F_3)U(\overline{F}_3\otimes \overline{F}_3)$---in the
Choi matrix definition of a
unistochastic channel, Eq.~\eqref{eq:defuni},
\begin{align}
 D_{\sub{st\\ \sigma\tau}}
=  (\Phi_V)_{\sub{s\sigma\\t\tau}}
 =
  \frac{1}{3} \sum_{l\lambda}
  \frac{1}{3} \sum_{l\lambda}
  V_{\sub{sl\\\sigma\lambda}}
  \overline V_{\sub{tl\\\tau\lambda}} &=
  \frac{1}{3}\frac{1}{3^{4}} \sum_{l\lambda}
  \bigl(
    \sum_{\sub{d_1d_2\\d_3d_4}}
    \omega_3^{sd_1}
    \omega_3^{ld_2}
    \omega_3^{-\sigma d_3}
    \omega_3^{-\lambda d_4}
    U_{\sub{d_1d_2\\d_3d_4}}
  \bigr)
                                    \\&\qquad\times
  \bigl(
    \sum_{\sub{c_1c_2\\c_3c_4}}
    \omega_3^{-c_1t}
    \omega_3^{-c_2l}
    \omega_3^{c_3\tau}
    \omega_3^{c_4\lambda}
    \overline U_{\sub{c_1c_2\\c_3c_4}}
    \bigr)  \nonumber
    \\
                                      &=
                    \frac{1}{3^{5}}
    \sum_{\sub{d_1d_2\\d_3d_4}}
    \sum_{\sub{c_1c_2\\c_3c_4}}
    \omega_3^{d_1s-d_3\sigma-c_1t+c_3\tau}
    U_{\sub{d_1d_2\\t\tau}}
    \overline U_{\sub{c_1c_2\\c_3c_4}}
    \sum_{l\lambda}
    \omega_3^{d_2l-d_4\lambda -c_2l+c_4\lambda}.
           \nonumber
    \label{eq-appendix-corners-2}
\end{align}
Making use of the following identity,
$\sum_{l\lambda}
\omega_3^{d_2l-d_4\lambda -c_2l+c_4\lambda}
= 3^2 \delta_{d_2c_2} \delta_{d_4c_4}$,
the above formulae
 can be reduced,
\begin{align}
   D_{\sub{st\\ \sigma\tau}}
   &=
                     \frac{1}{3^{5}}
   \sum_{l\lambda}
  \bigl(
    \sum_{\sub{d_1d_2\\d_3d_4}}
    \omega_3^{sd_1}
    \omega_3^{ld_2}
    \omega_3^{-\sigma d_3}
    \omega_3^{-\lambda d_4}
    U_{\sub{d_1d_2\\d_3d_4}}
  \bigr)
                                    \\&\qquad\times
  \bigl(
    \sum_{\sub{c_1c_2\\c_3c_4}}
    \omega_3^{-c_1t}
    \omega_3^{-c_2l}
    \omega_3^{c_3\tau}
    \omega_3^{c_4\lambda}
    \overline U_{\sub{c_1c_2\\c_3c_4}}
    \bigr)  \nonumber
                                    \\&=
    \frac{1}{3^3}
    \sum_{\sub{d_1d_2\\d_3d_4}}
    \sum_{\sub{c_1\\c_3}}
    \omega_3^{d_1s-d_3\sigma-c_1t+c_3\tau}
    U_{\sub{d_1d_2\\d_3d_4}}
    \overline U_{\sub{c_1d_2\\c_3d_4}}, \nonumber
\end{align}
Now using Eq.~\eqref{eq:equality} we can rewrite the entries of
the Choi matrix $D$ corresponding to the map $\Phi_V$
\begin{align*}
 D_{\sub{st\\ \sigma\tau}}
&=
\frac{1}{3^2}
\sum_{\sub{d_1c_1\\d_3c_3}}
\omega_3^{d_1s-d_3\sigma-c_1t+c_3\tau}
\delta_{d_1c_1}
\delta_{d_3c_3}(p_1+p_2\omega_3^{c_1-c_3}+p_3\omega_3^{2(c_1-c_3)})\\
  &=
  \frac{1}{3^2}
  \sum_{\sub{x\\y}}\omega_3^{xs-y\sigma-xt+y\tau}(p_1+p_2\omega^{x-y}+p_3\omega^{2(x-y)}).
\end{align*}
The last equation corresponds---after reshuffling---to the
desired channel
$\Phi_V=\Phi'_p=  p_1 \Phi_{\mathbb{I}} +
  p_2
\Phi_X+
  p_3
  \Phi_{X^2}$.
\end{proof}
In general,  for any pair of triangles,
$\Delta(\Phi_{\mathbb{I}},  \Phi_\mu, \Phi_{\bar\mu})$
and $\Delta(\Phi_{\mathbb{I}},  \Phi_\nu, \Phi_{\bar\nu})$
with  $\nu\neq\mu$,
there exists a similar relationship
determined by unitary matrices belonging to the Clifford group~\cite{Tolar_2018}.
The proof of this is similar to the one given above.

\end{document}